\newtheorem{theorem}{Theorem}
\newtheorem{observation}{Observation}
\newtheorem{proposition}{Proposition}
\newtheorem{lemma}{Lemma}
\newcommand{\NP}{{\sf NP}}
\newcommand{\FPT}{{\sf FPT}}
\newcommand{\dist}{{\rm dist}}
\begin{document}

\title{Parameterized Algorithms for Finding 
Square~Roots\footnote{
The research leading to these results has received funding from the European Research Council under the European Union's Seventh Framework Programme (FP/2007-2013)/ERC Grant Agreement n. 267959. The research also has been supported by EPSRC (EP/G043434/1) and ANR Blanc AGAPE (ANR-09-BLAN-0159-03).
A preliminary version of this paper appeared as an extended abstract in the proceedings
of WG 2013 \cite{CochefertCGKP13}.}
}

\author{
Manfred Cochefert\thanks{Laboratoire d'Informatique Th\'eorique et Appliqu\'ee, Universit\'e de Lorraine, 57045 Metz Cedex 01, France, 
\texttt{\{manfred.cochefert, jean-francois.couturier, dieter.kratsch\}@univ-lorraine.fr}}
\addtocounter{footnote}{-1}
\and
Jean-Fran\c{c}ois Couturier\footnotemark
\and
Petr A. Golovach\thanks{Department of Informatics, University of Bergen, PB 7803, 5020 Bergen, Norway,
\texttt{petr.golovach@ii.uib.no}}
\addtocounter{footnote}{-2}
\and
Dieter Kratsch\footnotemark
\and
\addtocounter{footnote}{1}
Dani{\"e}l Paulusma\thanks{School of Engineering and  Computing Sciences, Durham University, Science Laboratories, South Road, Durham DH1 3LE, UK,
\texttt{daniel.paulusma@durham.ac.uk}}
}

\date{}

\maketitle

\begin{abstract}
We show that the following two problems are fixed-parameter tractable with parameter $k$:
testing whether a connected $n$-vertex graph with $m$ edges has a square root with at most $n-1+k$ edges and testing
whether such a graph has a square root with at least $m-k$ edges. Our first result implies that squares of graphs obtained from trees by adding at most $k$ edges
can be recognized in polynomial time for every fixed $k\geq 0$; previously this result was known only for $k=0$.
Our second result is equivalent to stating that deciding whether a graph can be modified into a square root of itself by at most $k$ edge deletions is fixed-parameter tractable with parameter~$k$.
\end{abstract}

\section{Introduction}
\label{sec:intro}
Squares and square roots are classical concepts in graph theory that are defined as follows. The {\it square} $G^2$ of a
graph $G=(V_G,E_G)$ is the graph with vertex set $V_G$ such that any two distinct vertices 
$u,v\in V_G$ are adjacent in $G^2$ if and only if $u$ and $v$ are of distance at most 2 in $G$.
A graph $H$ is a {\it square root} of $G$ if $G=H^2$. There exist graphs with no square root,
graphs with a unique square root as well as graphs with many square roots.

Mukhopadhyay~\cite{Mukhopadhyay67} showed in 1967 that  
a connected graph $G$ with $n$ vertices $v_1,\ldots,v_n$  has a square root if and only if there exists a set of $n$ complete subgraphs $K^1,\ldots,K^n$ of $G$ with $\bigcup_iV_{K^i}=V_G$ such that 
$K^i$ contains $v_i$ for all $1\leq i\leq n$, and  $K^i$ contains $v_j$ if and only if $K^j$ contains $v_i$ for all $1\leq i<j\leq n$.
This characterization did not yield a polynomial time algorithm for recognizing squares. 
In fact, in 1994, Motwani and Sudan~\cite{MotwaniS94} showed that the {\sc Square Root} problem, which is that of testing whether a graph has a square root,
is \NP-complete. 
 This fundamental result triggered a
lot of research on the computational complexity of recognizing squares of graphs and computing square
roots under the presence of additional structural assumptions. In particular, the following   
two recognition questions have attracted attention; here ${\cal G}$ denotes some fixed graph class.  
\begin{itemize}
\item[(1)] How hard is it to recognize squares of graphs of~$\cal G$?
\item[(2)] How hard is is to recognize  graphs of $\cal G$ that have a square root?
\end{itemize} 
Ross and Harary~\cite{RossH60} characterized squares of a tree and
proved that if a connected graph has a unique tree square root, then this root is unique up to isomorphism.  
Lin and  Skiena~\cite{LinS95} gave linear time algorithms for recognizing squares of trees and planar graphs with a square root.
The results for trees~\cite{LinS95,RossH60} were generalized to block graphs by Le and Tuy~\cite{LeT10}.
Lau~\cite{Lau06} gave a polynomial time algorithm for recognizing squares of bipartite graphs.
Lau and Corneil~\cite{LauC04} gave a polynomial time algorithm for 
recognizing squares of proper interval graphs and showed that the problems of recognizing squares of chordal graphs,  squares of split graphs, and chordal
graphs with a square root are all three \NP-complete.
Le and Tuy~\cite{LeT11} gave a quadratic time algorithm for recognizing squares of strongly chordal split graphs.   
Milanic and Schaudt~\cite{MilanicS13} gave linear time algorithms for recognizing trivially perfect graphs and threshold graphs with a square root.
Adamaszek and Adamaszek ~\cite{AdamaszekA11} proved that if a graph has a square root of  girth at least 6, then this square root is unique up to isomorphism. Farzad, Lau, Le and Tuy~\cite{FarzadLLT12} showed that recognizing graphs with a square root of girth at least $g$ is polynomial-time solvable if $g\geq 6$ and \NP-complete if $g=4$. The missing case~$g=5$ was shown
to be \NP-complete by Farzad and Karimi~\cite{FarzadK12}.

\subsection{Our Results}\label{s-our}
The classical {\sc Square Root} problem is a decision problem. We introduce two optimization variants of it in order to be able to take a {\it parameterized} road to square roots. A problem with  input size $n$ and a parameter $k$
is said to be \emph{fixed parameter tractable} (or \FPT) if it can be solved in time $f(k)\cdot
n^{O(1)}$ for some function $f$ that only depends on $k$. 
We consider two natural choices for the parameter $k$ for our optimization variants of the {\sc Square Root} problem and in this way obtain the first FPT algorithms for 
square root problems.

First, in Section~\ref{s-min}, we parameterize the {\sc Minimum Square Root} problem, which is that of testing whether a graph has a square root
with at most $s$ edges for some given integer $s$. 
Because any square root of a connected $n$-vertex graph $G$ is a connected spanning subgraph of $G$, every square root of  $G$ has at least $n-1$  edges. 
Consequently, any instance $(G,s)$ of {\sc Minimum Square Root} with $s\leq n-2$ is a no-instance
if $G$ is connected, 
which means that we may assume that $s\geq n-1$.
Hence, $k=s-(n-1)$ is the natural choice of parameter.
Our main result is that  {\sc Minimum Square Root} is \FPT\ with parameter $k$ 
\footnote{We restrict ourselves to connected graphs for simplicity. We may do this for the following reason. For disconnected $n$-vertex graphs with $\ell\geq 2$ connected components the natural parameter is $k=s-(n-\ell)$ instead of $k=s-(n-1)$. Because a square root of a graph is the disjoint union of square roots of its connected components, our \FPT\ result for connected graphs immediately carries over to disconnected graphs if we choose as parameter $k=s-(n-\ell)$ instead.}.
We prove this result by showing that an instance  of {\sc Minimum Square Root} can be reduced to an instance of a more general problem, in which
we impose additional requirements on some of the edges, namely to be included or excluded from the square root.
We prove that the new instance has size quadratic in $k$.
In other words, we show that {\sc Minimum Square Root} has a generalized kernel of quadratic size (see Section~\ref{sec:defs} for the definition of this notion). 
This result is further motivated by the observation
that {\sc Minimum Square Root} generalizes the
 problem of recognizing squares of trees (take $s=n-1$).
A weaker statement of our \FPT\ result is that of saying that the problem of recognizing squares of graphs
of the class $${\cal G}_k=\{G\; |\; G\; \mbox{is a graph obtainable from a tree by adding at most $k$ edges}\}$$ is polynomial-time solvable for all fixed $k\geq 0$.
As such, our result can also be seen as an extension of the aforementioned result of recognizing squares of trees~\cite{LinS95}.

Second, in Section~\ref{s-max}, we parameterize the {\sc Maximum Square Root} problem, which is that of testing whether a given graph $G$ with $m$ edges has a square root
with at least $s$ edges for some given integer $s$. We show that this problem is \FPT\ with parameter $k=m-s$. This choice of parameter is also natural, as $G$ has a square root
with at least $s$ edges if and only if $G$ can be modified into a square root (of itself) by at most $k$ edge deletions. Hence, our second \FPT\ result can be added to the growing body of parameterized results for graph editing problems, which form a well studied problem area within algorithmic graph theory. 
In Section~\ref{s-max} we also present an exact exponential time algorithm for  {\sc Maximum Square Root}, which could be seen
as an improvement of the algorithm implied by the characterization of Mukhopadhyay~\cite{Mukhopadhyay67}. 

In Section~\ref{s-con} we mention a number of relevant open problems.

\subsection{Preliminaries}\label{sec:defs}
We only consider finite undirected graphs without loops and multiple edges. 
We refer to the textbook by Diestel~\cite{Diestel10} for any undefined graph terminology and
to the textbooks of Downey and Fellows~\cite{DowneyF99},  Flum and Grohe~\cite{flum-grohe-book}, and   Niedermeier~\cite{niedermeier-book} for  detailed introductions  to parameterized complexity theory.

Let $G$ be a graph. We denote the vertex set and edge set of $G$ by $V_G$ and $E_G$, respectively. The subgraph of $G$
induced by a subset $U\subseteq V_G$ is denoted by $G[U]$. 
The graph $G-U$ is the graph obtained from $G$ by removing all vertices in $U$. If $U=\{u\}$, we also write $G-u$.
The \emph{distance} $\dist_G(u,v)$ between a pair of vertices $u$ and $v$ of
$G$ is the number of edges of a shortest path between them. 
The \emph{open neighborhood} of a vertex $u\in V_G$ is defined as $N_G(u) = \{v\; |\; uv\in E_G\}$, and
its \emph{closed neighborhood} is defined as $N_G[u] = N_G(u) \cup \{u\}$. 
Two vertices $u,v$ are said to be \emph{true twins} if $N_G[u]=N_G[v]$, and $u,v$ are \emph{false twins} if $N_G(u)=N_G(v)$.
A vertex $u$ is \emph{simplicial}, if $N_G(u)$ is a clique.
The {\it degree} of a vertex
$u\in V_G$ is denoted $d_G(u)=|N_G(u)|$.
The maximum degree of $G$ is denoted $\Delta(G)=\max\{d_G(v)|v\in V_G\}$.
A vertex of degree 1 is said to be a \emph{pendant} vertex. 

Let $G$  be a connected graph. Let $S\subset V_G$, and let $X$ and $Y$ be two disjoint
nonempty vertex subsets of  $G-S$.
Then $S$ is a {\it separator} of $G$ if $G-S$ is disconnected, 
$S$ is an \emph{(X,Y)-separator} if $G-S$ has no path that connects a vertex of $X$ to a vertex of  $Y$, and 
$S$ is a \emph{minimal} $(X,Y)$-separator  if $S$ is an $(X,Y)$-separator of $G$ and no proper subset of $S$ is an  $(X,Y)$-separator.  
Moreover, $G$ is {\it $2$-connected} if and only if $|V_G|\ge 3$ and $G$ has no separators 
of size one.

The {\it union} of two graphs $G_1$ and $G_2$ is the graph $(V_{G_1}\cup V_{G_2},E_{G_1}\cup E_{G_2})$.
The graph $K_n$ denotes the complete graph on $n$ vertices. The graph $K_{1,r}$ denotes the star on $r+1$ vertices.

A well-known technique to show that a parameterized problem $\Pi$ is 
fixed-parameter tractable is to find 
a \emph{reduction to a problem kernel}. This technique replaces an instance
$(I, k)$ of $\Pi$ with a reduced instance $(I', k')$ of $\Pi$ called a
\emph{(problem) kernel} such that the following three conditions hold:
\begin{itemize}
\item[i)] $k'\leq k$ and $|I'|\leq g(k)$
for some computable function $g$;
\item[ii)] the reduction from $(I, k)$ to
$(I', k')$ is computable in polynomial time;
\item[iii)] $(I,k)$ is a {\tt yes}-instance of $\Pi$ if and
only if $(I',k')$ is a {\tt yes}-instance of $\Pi$.
\end{itemize} 
If we slightly modify this definition by letting the instance $(I',k')$ belong to a different problem than~$\Pi$, then $(I',k')$ is called a {\it generalized} kernel for $\Pi$ in the literature. This concept has been introduced and named {\it bikernel} by Alon, Gutin, Kim, Szeider and Yeo~\cite{AlonGKSY11}; 
a related notion is compression.  
An upper bound $g(k)$ on $|I'|$ is called the \emph{kernel size}, and a kernel is called \emph{linear}  if its size is linear in~$k$ and \emph{quadratic} if its
size is quadratic in~$k$. 
It is well known that a parameterized problem
is fixed-parameter tractable if and only if it has a kernel (see for example~\cite{niedermeier-book}).

\section{The Minimum Square Root Problem}\label{s-min}

As discussed in Section~\ref{s-our}, we consider connected graphs only and parameterize {\sc Minimum Square Root} by $k=s-(n-1)$.   
From now on we denote this problem as
 
\medskip
\noindent
{\sc Trees $+\;k$ Edges Square Root}\\
{\it Input:} \hspace*{8.5mm}a connected
graph $G$ and an integer $k\geq 0$\\
{\it Parameter:} $k$\\
{\it Question:} \hspace*{1.5mm} has $G$ a square root with at most $n-1+k$ edges?

\medskip
\noindent
We show the following result.

\begin{theorem}\label{thm:tree-few-edges} 
The {\sc Tree $+\;k$ Edges Square Root} problem can be solved in time $2^{O(k^4)} + O(n^4m)$ on graphs with $n$ vertices and $m$ edges.
\end{theorem}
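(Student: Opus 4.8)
The plan is to establish a generalized kernel (bikernel) of quadratic size and then solve the kernel by brute force. First I would introduce an annotated version of the problem, say {\sc Annotated Square Root}, in which the input carries, besides $G$ and $k$, a set $R\subseteq E_G$ of \emph{required} edges and a set $F\subseteq E_G$ of \emph{forbidden} edges, and we ask whether $G$ has a square root $H$ with $|E_H|\le n-1+k$, $R\subseteq E_H$ and $E_H\cap F=\emptyset$; the original problem is the case $R=F=\emptyset$. The algorithm then has two phases: exhaustively apply polynomial-time reduction rules (never increasing $k$, each preserving the {\tt yes}/{\tt no} answer) until the instance has $O(k^2)$ vertices, and then decide the reduced instance by enumerating all spanning subgraphs.

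The structural backbone is that a square root $H$ of a connected graph is a connected spanning subgraph, so $|E_H|\le n-1+k$ forces the cyclomatic number of $H$ to be at most $k$; consequently the $2$-core of $H$, with degree-$2$ vertices suppressed, is a multigraph with $O(k)$ branch vertices and $O(k)$ branch-paths, and everything else in $H$ is a pendant tree or a long degree-$2$ path hanging off this small core. I would prove two families of lemmas. The first is "locally forced" lemmas describing how tree-like and path-like pieces of $H$ show up inside $G=H^2$: e.g.\ a pendant vertex of $H$ becomes a simplicial true twin in $G$ whose closed neighbourhood equals the closed $H$-neighbourhood of its support; a pendant path of $H$ appears in $G$ as a "squared path" attached at one end; the reconstruction of a pendant \emph{tree} of $H$ inside $G$ is governed by the essentially unique square root of a tree (Lin--Skiena \cite{LinS95}, Ross--Harary \cite{RossH60}). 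The second is minimality lemmas: a square root with the fewest edges contains no redundant edge, no two adjacent vertices each supporting a pendant vertex, and similar obviously-wasteful configurations. These two families are exactly what is needed to prove the reduction rules safe in both directions and to bound the size of an irreducible instance.

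The reduction rules are, roughly: (a) dispose of trivial cases (for instance $G$ complete — answer {\tt yes} via a star — or $G$ having a vertex of degree $1$ while $G\ne K_2$ — answer {\tt no}); (b) when $G$ contains a long induced "squared path" attached to the rest of $G$ in a path-like manner, shorten it to constant length, recording in $R$ and $F$ which edges of the shortened piece must be present or absent; (c) when $G$ contains a large pendant tree-like piece — in particular a large class of simplicial true twins, which can only come from many pendant vertices of $H$ at a common support vertex — delete all but a bounded number of its vertices, again updating $R$ and $F$; and (d) reject if the graph still has more than $ck^2$ vertices. Safety of (b) and (c) uses the locally-forced lemmas in one direction and minimality of a square root in the other. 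For an irreducible instance, the structural analysis then bounds $n$: after (b) the small core contributes $O(k)$ vertices, the pieces that survive (c) around its $O(k)$ vertices are bounded, and, accounting for the interaction with the $\le k$ cycles, this yields $n=O(k^2)$.

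Finally, on a reduced instance with $O(k^2)$ vertices, hence $O(k^4)$ edges, we enumerate all $2^{O(k^4)}$ spanning subgraphs $H'$ and test, for each, whether $(H')^2=G'$, whether $|E_{H'}|\le n'-1+k'$, and whether the $R/F$ constraints hold; this costs $2^{O(k^4)}\cdot k^{O(1)}$ time. Adding the $O(n^4m)$ spent on the reduction phase — dominated by repeatedly computing shortest-path/distance information and locating the patterns of (b) and (c) — gives the claimed $2^{O(k^4)}+O(n^4m)$ bound. I expect the main obstacle to be exactly steps (b)--(c) together with the quadratic size bound: one must pin down \emph{precisely} which local patterns of $G$ force which edges of a minimum square root, so that the patterns can be contracted while faithfully recording the consequences in $R$ and $F$, and then show that their absence leaves only $O(k^2)$ vertices unexplained by the tree-plus-$k$-edges skeleton; the bookkeeping near the (few) cycles of $H$, where pendant structures and cycles meet, is the delicate part.
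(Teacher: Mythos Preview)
Your plan is correct and matches the paper's approach almost exactly: the paper introduces the same labeled variant (with required and blocked edge sets), applies in order a \emph{Trimming Rule} (your pendant/tree-like reduction), a \emph{Path Reduction Rule} (your squared-path shortening, made precise via an ``$F$-triple'' configuration), and a \emph{Simplicial Vertex Reduction Rule} (your true-twin reduction), then uses the cyclomatic-number argument you sketch to bound the non-pendant part of any solution by $15k-14$ vertices and the whole instance by $(15k-14)(15k-12)$ vertices, finishing by brute force over the $O(k^4)$ edges. The only point where the paper is more specific than your outline is that the trimming must precede the path reduction (the latter relies on properties guaranteed by exhaustive trimming), and the delicate bookkeeping you anticipate is handled by carefully tracking how edges move between $R$ and $B$ across successive rule applications.
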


The remainder of this section is organized as follows. In Section~\ref{s-structural} we show a number of structural results needed to prove Theorem~\ref{thm:tree-few-edges}. In Section~\ref{s-reduction} we consider the more general problem

\medskip
\noindent
{\sc Tree $+\;k$ Edges Square Root with Labels}\\
{\it Input:} \hspace*{8mm} a connected
graph $G$, an integer $k\geq 0$ and 
two disjoint
subsets
\\ \hspace*{2.0cm}
$R,B\subseteq E_G$\\
{\it Parameter:} $k$.\\
{\it Question:} \hspace*{1.5mm} has $G$ a square root $H$ with at most $n-k+1$ edges, such that 
\hspace*{2.0cm}$R\subseteq E_H$ and $B\cap E_H=\emptyset$?

\medskip
\noindent
Note  that the sets $R$ and $B$ in this problem are given sets of {\it required} edges (that have to be in the square root) and
{\it blocked} edges (that are not allowed to be in the square root), respectively. Also note that 
{\sc Tree $+\;k$ Edges Square Root with Labels}
generalizes {\sc Trees $+\;k$ Edge Square Root}; choose $R=B=\emptyset$.
We reduce  {\sc Tree $+\;k$ Edges Square Root} to {\sc Tree $+\;k$ Edges Square Root with Labels}  where the size of the graph in the obtained instance is $O(k^2)$. 
In other words, we construct a quadratic generalized kernel for {\sc Tree $+\;k$ Edges Square Root}. 
This means that to solve an instance of {\sc Trees $+\;k$ Edge Square Root}, 
we can solve the obtained instance of  {\sc Tree $+\;k$ Edges Square Root with Labels} by a brute
force algorithm. In Section~\ref{s-solving} we analyze the corresponding running time and complete the proof of Theorem~\ref{thm:tree-few-edges}.

\subsection{Structural Results}\label{s-structural}

We start with the following observation that we will frequently use.

\begin{observation}\label{obs:leaves}
Let $H$ be a square root of a connected graph $G$. 
\begin{itemize}
\item[i)] If $u$ is a pendant vertex of $H$, then $u$ is a simplicial vertex of $G$. 
\item[ii)] If $u,v$ are pendant vertices of $H$ adjacent to the same vertex, then $u,v$ are true twins in $G$.
\item[iii)] If $u,v$ are pendant vertices of $H$ adjacent to different vertices, then $u$ and $v$ are not adjacent in $G$ unless $H=K_2$.
\end{itemize}
\end{observation}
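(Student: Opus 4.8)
The plan is to first determine, for an arbitrary pendant vertex $u$ of $H$ with unique neighbor $w$ in $H$, exactly which vertices are adjacent to $u$ in $G=H^2$, and then read off all three items from that description. Since every walk of length at most $2$ leaving $u$ in $H$ must pass through $w$, a vertex $v\neq u$ lies within distance $2$ of $u$ in $H$ precisely when $v=w$ or $v\in N_H(w)\setminus\{u\}$; hence
\[
N_G(u)=N_H[w]\setminus\{u\},\qquad\text{equivalently}\qquad N_G[u]=N_H[w].
\]
This single identity drives the whole argument.

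For (i), note that any two vertices of $N_H[w]$ are at distance at most $2$ in $H$ (each is $w$ or a neighbor of $w$), hence adjacent in $G$; thus $N_G(u)=N_H[w]\setminus\{u\}$ induces a clique in $G$, so $u$ is simplicial. For (ii), if $u$ and $v$ are both pendant in $H$ and adjacent to the same vertex $w$, the identity gives $N_G[u]=N_H[w]=N_G[v]$, so $u$ and $v$ are true twins in $G$. For (iii), suppose $u$ is adjacent in $H$ only to $w_1$, $v$ is adjacent in $H$ only to $w_2$, with $w_1\neq w_2$, and assume for contradiction that $uv\in E_G$, i.e.\ $\dist_H(u,v)\le 2$. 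A shortest path from $u$ to $v$ in $H$ has length $1$ or $2$; a length-$2$ path would have to be $u,w_1,v$, forcing $w_2=w_1$, a contradiction. Hence $uv\in E_H$, which forces $v=w_1$ and $u=w_2$, so the component of $H$ containing $u$ is just the edge $uv$, a copy of $K_2$; since a square root of a connected graph is connected, this yields $H=K_2$.

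The arguments are elementary throughout. The only step requiring any care is the passage in (iii) from ``the component of $H$ on $\{u,v\}$ is a $K_2$'' to ``$H=K_2$'', which relies on the fact, already noted in the introduction, that a square root of a connected graph is itself connected (a disconnected root would produce a disconnected square). I do not expect any genuine obstacle beyond this bookkeeping.
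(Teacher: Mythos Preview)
Your proof is correct. The paper states this result as an ``Observation'' and gives no proof, so there is nothing to compare against; your argument via the identity $N_G[u]=N_H[w]$ for a pendant vertex $u$ with unique $H$-neighbor $w$ is the natural one and handles all three items cleanly, including the small bookkeeping in~(iii) that a square root of a connected graph is itself connected.
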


We now state five useful lemmas, the first two of which, Lemmas~\ref{lem:leaves-one} and~\ref{lem:leaves-two}, can be found implicitly in the paper of Ross and Harary~\cite{RossH60}.
Ross and Harary~\cite{RossH60} consider tree square roots, whereas we are concerned with finding general square roots. As such we give explicit statements of Lemmas~\ref{lem:leaves-one} and~\ref{lem:leaves-two}. We also give a proof of Lemma~\ref{lem:leaves-two} (the proof of Lemma~\ref{lem:leaves-one} is straightforward).

\begin{lemma}\label{lem:leaves-one}
Let $H$ be a square root of  a graph $G$.
Let $\{u_1,\ldots,u_r\}\subseteq V_H$ for some $r\geq 3$ induce a star in $H$ with central vertex $u_1$.
Let $u_3,\ldots,u_r$ be pendant and $\{u_2\}$ be a 
$(\{u_1,u_3,\ldots,u_r\}, 
V_H\setminus\{u_1,\ldots,u_r\})$-separator
of $H$. 
Then  $\{u_1,\ldots,u_r\}$ is a clique of $G$, and $\{u_1,u_2\}$ is a minimal 
$(\{u_3,\ldots,u_r\},
 V_G\setminus\{u_1\ldots,u_r\})$-separator
of $G$. 
\end{lemma}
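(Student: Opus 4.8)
The plan is to prove both conclusions by unwinding the definition of the square operation, using the fact that $H^2 = G$ relates distances in $H$ to adjacency in $G$.

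\medskip\noindent
\textbf{Step 1: $\{u_1,\ldots,u_r\}$ is a clique of $G$.}
Since $u_1$ is adjacent in $H$ to each of $u_2,\ldots,u_r$, any two vertices $u_i,u_j$ with $2\le i<j\le r$ satisfy $\dist_H(u_i,u_j)\le 2$ via the path $u_i u_1 u_j$, hence $u_iu_j\in E_G$; and $u_1u_i\in E_H\subseteq E_G$ for each $i$. So $\{u_1,\ldots,u_r\}$ induces a clique in $G$.

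\medskip\noindent
\textbf{Step 2: $\{u_1,u_2\}$ is an $(\{u_3,\ldots,u_r\},\,V_G\setminus\{u_1,\ldots,u_r\})$-separator of $G$.}
Here is where I would use the separator hypothesis in $H$ most carefully. Let $w\in V_G\setminus\{u_1,\ldots,u_r\}$ and suppose $u_i$ ($3\le i\le r$) is adjacent in $G$ to $w$ or, more generally, consider any path $P$ in $G$ from some $u_i$ ($i\ge 3$) to some $w\notin\{u_1,\ldots,u_r\}$; I must show $P$ meets $\{u_1,u_2\}$. The key observation is that $u_3,\ldots,u_r$ are \emph{pendant} in $H$ with unique neighbour $u_1$, so for any vertex $x\notin\{u_1,\ldots,u_r\}$ with $x\ne u_2$, every $H$-path from $u_i$ ($i\ge 3$) to $x$ passes through $u_1$ and then, since $\{u_2\}$ separates $\{u_1,u_3,\ldots,u_r\}$ from $V_H\setminus\{u_1,\ldots,u_r\}$ in $H$, also through $u_2$; thus $\dist_H(u_i,x)\ge \dist_H(u_1,x)+1\ge 3$ whenever $x\notin\{u_1,u_2\}$. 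Consequently $u_i$ is non-adjacent in $G=H^2$ to every vertex outside $\{u_1,u_2\}\cup\{u_3,\ldots,u_r\}$; that is, $N_G(u_i)\subseteq\{u_1,u_2,u_3,\ldots,u_r\}$. Since $\{u_3,\ldots,u_r\}$ is itself contained in the separated side, deleting $\{u_1,u_2\}$ from $G$ isolates the clique $\{u_3,\ldots,u_r\}$ from the rest, so $\{u_1,u_2\}$ is indeed an $(\{u_3,\ldots,u_r\},V_G\setminus\{u_1,\ldots,u_r\})$-separator.

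\medskip\noindent
\textbf{Step 3: minimality.}
It remains to check that neither $\{u_1\}$ nor $\{u_2\}$ alone separates $\{u_3,\ldots,u_r\}$ from $V_G\setminus\{u_1,\ldots,u_r\}$ in $G$. For $\{u_2\}$: since $u_3u_1\in E_H$ and $u_1$ has a neighbour in $H$ on the far side of $u_2$ (because $\{u_2\}$ is a \emph{minimal}-type separator in the hypothesis — more precisely $u_1$ has degree $\ge 1$ into $V_H\setminus\{u_1,\ldots,u_r,u_2\}$ as otherwise the separator statement in $H$ would be vacuous), there is a vertex $w\notin\{u_1,\ldots,u_r\}$ with $\dist_H(u_3,w)=2$, giving $u_3w\in E_G$, a path from $u_3$ to $w$ in $G-u_2$. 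For $\{u_1\}$: $u_3 u_2\in E_G$ (shown in Step 1) and $u_2$ has, via $H$, an $H$-neighbour $w$ outside $\{u_1,\ldots,u_r\}$, so $u_2w\in E_G$; concatenating gives a $u_3$–$w$ path in $G-u_1$. Hence no singleton suffices, so $\{u_1,u_2\}$ is a minimal separator. One should also dispose of the degenerate possibility that $V_H\setminus\{u_1,\ldots,u_r\}=\emptyset$, in which case the separator claims are trivial/vacuous and can be stated accordingly.

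\medskip\noindent
\textbf{Main obstacle.}
The routine part is Step 1 and the ``upper bound on $N_G(u_i)$'' half of Step 2; the genuinely delicate part is making the distance bookkeeping in $H$ airtight — specifically, arguing that every short $H$-walk from a pendant $u_i$ to the far side must pass through \emph{both} $u_1$ and $u_2$, and handling the boundary cases ($r=3$, or $u_1$ having no neighbour beyond $u_2$, or the far side being empty) so that the minimality claim does not collapse. I expect the cleanest writeup to first record the inclusion $N_G(u_i)\subseteq\{u_1,u_2\}\cup\{u_3,\ldots,u_r\}$ for $i\ge 3$ as the central lemma-within-the-proof, and then read off both the separator property and its minimality from that together with the clique structure from Step 1.
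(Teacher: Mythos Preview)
Your Steps~1 and~2 are fine and match what the paper calls ``straightforward''. The gap is in Step~3, specifically in showing that $\{u_2\}$ alone is not a separator in $G$. You claim that $u_1$ has an $H$-neighbour in $V_H\setminus\{u_1,\ldots,u_r\}$, but this is precisely what the hypothesis \emph{forbids}: if $u_1w\in E_H$ for some $w\notin\{u_1,\ldots,u_r\}$, then $u_1w$ is a path in $H-u_2$ from $\{u_1,u_3,\ldots,u_r\}$ to $V_H\setminus\{u_1,\ldots,u_r\}$, contradicting that $\{u_2\}$ is a separator in $H$. (The hypothesis says $\{u_2\}$ is a separator, not a minimal one, so your parenthetical about ``minimal-type'' does not rescue the claim.) Consequently $N_H(u_1)=\{u_2,\ldots,u_r\}$, and there is no $w$ with $\dist_H(u_3,w)=2$ outside the clique.

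The fix is short: by the paper's definition of separator, $V_H\setminus\{u_1,\ldots,u_r\}\neq\emptyset$, and since $H$ is connected (the separator notion is defined for connected graphs), some path from $u_1$ to that set exists and must pass through $u_2$; hence $u_2$ has an $H$-neighbour $w\notin\{u_1,\ldots,u_r\}$. Then $\dist_H(u_1,w)\le 2$ via $u_1u_2w$, so $u_1w\in E_G$, and $u_3u_1w$ is the required path in $G-u_2$. (Note $\dist_H(u_3,w)=3$, so $u_3w\notin E_G$; you really do need the two-step path in $G$ through $u_1$.) Your argument for $\{u_1\}$ not being a separator is correct as written, using this same $w$.
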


\begin{lemma}\label{lem:leaves-two}
Let $G$ be a connected graph with a square root $H$.
Let $\{u_1,\ldots,u_r\}$, $r\geq 3$ be a clique in $G$, such that $\{u_1,u_2\}$ is a minimal 
$(\{u_3,\ldots,u_r\}, V_G\setminus\{u_1,\ldots,u_r\})$-separator of $G$. 
Let  $\{x_1,\ldots,x_p\}=N_G(u_1)\setminus\{u_1,\ldots,u_r\}$ 
for some $p\geq 1$
and $\{y_1,\ldots,y_q\}=N_G(u_2)\setminus\{u_1,\ldots,u_r\}$ 
for some $q\geq 1$, 
as shown in Figure~\ref{fig:trimming}.
Then the following three statements hold:
\begin{itemize}
\item[i)]   $u_1u_2\in E_H$ and,  
either $u_3u_1,..,u_ru_1\in E_H$,  $u_3u_2,\ldots,u_ru_2\notin E_H$, $u_1x_1,\ldots, $ $u_1x_p\notin E_H$,  and $\{u_2\}$ is a minimal $(\{u_1,u_3,\ldots,u_r\}, V_H\setminus\{u_1,\ldots,u_r\})$-separator in $H$,
 or 
 $u_3u_1,\ldots,u_ru_1\notin E_H$, $u_3u_2,\ldots,u_ru_2\in E_H$, $u_2y_1,\ldots,u_2y_q\notin E_H$
 and $\{u_1\}$ is a minimal $(\{u_2,..,u_r\}, V_H\setminus\{u_1,..,u_r\})$-separator in $H$
(see Figure~\ref{fig:trimming-i-ii-iii}~i)).
\item[ii)] If $u_1,u_2$ are true twins in $G$, then either $u_1x_1,\ldots,u_1x_p\in E_H$ or $u_2x_1,\ldots,u_2x_p\in E_H$.
Moreover, in this case, $G$ is the union of two complete graphs with vertex sets $\{u_1,\ldots,u_r\}$ 
 and $\{u_1,u_2,x_1,\ldots,x_p\}$, respectively, and $G$ 
 has two (isomorphic) square roots with edge sets $\{u_1u_2,\ldots,u_1u_r\}$ $\cup\{u_2x_1,\ldots,u_2x_p\}$ and 
 $\{u_2u_1,u_2u_3,\ldots,u_2u_r\}$ $\cup\{u_1x_1,\ldots,u_1x_p\}$, respectively 
(see Figure~\ref{fig:trimming-i-ii-iii}~ii)).
\item[iii)] If $N_G[u_2]\setminus N_G[u_1]\neq\emptyset$, then  $u_2u_1,\ldots,u_ru_1\in E_H$,  $u_3u_2,\ldots,u_ru_2\notin E_H$, $u_1x_1,\ldots,u_1x_p\notin E_H$. Moreover, the graph $H'$ obtained from $H$ by deleting all $u_iu_j$ with $3\leq i<j\leq r$
is a square root of $G$ (in which $\{u_1,\ldots,u_r\}$ induces a star with central vertex $u_1$ and with leaves $u_2,u_3,\ldots,u_r$ that 
are pendant vertices except for $u_2$ 
(see Figure~\ref{fig:trimming-i-ii-iii}~iii)). 
\end{itemize}
\end{lemma}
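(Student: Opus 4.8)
The plan is to analyze the local structure around the clique $\{u_1,\dots,u_r\}$ by combining Observation~\ref{obs:leaves} with the separator hypothesis, and then to push edges off of the ``wrong'' side using the minimality of the separator. First I would establish~i). Since $\{u_1,u_2\}$ is an $(\{u_3,\dots,u_r\},V_G\setminus\{u_1,\dots,u_r\})$-separator in $G$ and all the $u_i$ lie in a common clique of $G$, every vertex $u_i$ with $i\geq 3$ is at distance at most~$2$ in $H$ only from vertices that are ``trapped'' behind $\{u_1,u_2\}$; I would argue that in $H$ each such $u_i$ can only be adjacent to vertices of $\{u_1,\dots,u_r\}$, and in fact to exactly one of $u_1,u_2$ (otherwise $u_i$ would have a path of length~$2$ to the other side, or $u_3,\dots,u_r$ would fail to form a clique in $G=H^2$). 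A short connectivity/counting argument then forces all of $u_3,\dots,u_r$ to attach to the \emph{same} one of $u_1,u_2$ — say $u_1$ — because otherwise two of them attached to different vertices would be non-adjacent in $G$ (Observation~\ref{obs:leaves}~iii)) contradicting that they lie in a clique of $G$; and it forces $u_1u_2\in E_H$ so that the clique is realized. Once the star is pinned down with centre $u_1$, the fact that $x_1,\dots,x_p$ must be non-adjacent in $G$ to each $u_i$, $i\geq 3$ (they are outside the clique and the separator is $\{u_1,u_2\}$), forces $u_1x_j\notin E_H$: an edge $u_1x_j$ together with $u_1u_i$ would put $x_j$ within distance~$2$ of $u_i$. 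Finally the separator statement in $H$ follows because any $H$-path from $\{u_1,u_3,\dots,u_r\}$ to the outside must pass through $\{u_1,u_2\}$ in $G$, hence through $u_2$ in $H$ after deleting $u_1$; minimality holds because $u_2$ has a neighbour on each side (namely $u_1$, and some $y_j$, using $q\geq 1$).

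For~ii), assume $N_G[u_1]=N_G[u_2]$. Then in particular $p=q$ and $\{x_1,\dots,x_p\}=\{y_1,\dots,y_q\}$, and $N_G[u_1]=N_G[u_2]=\{u_1,\dots,u_r,x_1,\dots,x_p\}$, which already forces $G$ to be exactly the union of the two complete graphs $G[\{u_1,\dots,u_r\}]$ and $G[\{u_1,u_2,x_1,\dots,x_p\}]$: any other edge of $G$ would create a vertex outside $N_G[u_1]\cup N_G[u_2]$ still within distance $2$, and one checks no such edge can exist given the separator. Now apply~i): the $x_j$'s must all attach in $H$ to a single vertex among those within distance~$1$ of all of them; by the symmetry between $u_1$ and $u_2$ and the star structure from~i), the only consistent possibilities are that all $x_j$ are pendant at $u_2$ (when $u_3,\dots,u_r$ are pendant at $u_1$) or all pendant at $u_1$ (when $u_3,\dots,u_r$ are pendant at $u_2$); writing out these two graphs and checking $H^2=G$ gives the two displayed square roots, which are isomorphic via the transposition of $u_1$ and $u_2$.

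For~iii), suppose some $w\in N_G[u_2]\setminus N_G[u_1]$. By~i) we are in one of the two symmetric cases; the case where $u_3,\dots,u_r$ attach to $u_2$ and $\{u_1\}$ separates would force $N_H(u_2)\subseteq\{u_1,\dots,u_r\}$ and hence, since $G=H^2$, $N_G[u_2]\subseteq N_G[u_1]$ (because then $\dist_H(u_2,z)\leq 2$ implies $\dist_H(u_1,z)\le 2$), contradicting the existence of $w$. So we are in the first case: $u_1u_2,\dots,u_1u_r\in E_H$, $u_3u_2,\dots,u_ru_2\notin E_H$, and $u_1x_1,\dots,u_1x_p\notin E_H$, as claimed. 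It remains to see that deleting all edges $u_iu_j$ with $3\le i<j\le r$ from $H$ yields another square root $H'$. I would verify $H'^2=G$ directly: these deleted edges only contributed to pairwise distances among $u_3,\dots,u_r$ (each $u_i$ with $i\ge 3$ is pendant-like, attached only to $u_1$ and to nothing outside $\{u_1,\dots,u_r\}$), but those vertices remain at mutual distance~$2$ via $u_1$, and no other distance in $H'$ differs from the corresponding one in $H$; hence $H'^2=H^2=G$, and by construction $\{u_1,\dots,u_r\}$ induces a star in $H'$ with centre $u_1$, leaves $u_2,\dots,u_r$, all pendant except possibly $u_2$ (which keeps its neighbours $y_1,\dots,y_q$).

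The main obstacle I expect is the bookkeeping in step~i): carefully ruling out that a vertex $u_i$ ($i\geq 3$) could attach in $H$ to a vertex other than $u_1$ or $u_2$, and that $u_3,\dots,u_r$ could split between $u_1$ and $u_2$, requires simultaneously juggling the ``clique in $G$'' condition (upper bounds on distances in $H$) and the ``separator'' condition (which confines $H$-paths), and getting the minimality-in-$H$ claim exactly right. Everything after the star is pinned down is routine distance checking.
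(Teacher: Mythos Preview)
Your plan follows the same route as the paper, and parts~i) and~iii) are essentially right in outline; however, two steps as written do not go through.

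\textbf{Part~i).} Your appeal to Observation~\ref{obs:leaves}~iii) to force all of $u_3,\dots,u_r$ to attach to the \emph{same} vertex among $u_1,u_2$ is misplaced: that observation is about \emph{pendant} vertices of $H$, and at this stage nothing rules out edges $u_iu_j$ (for $i,j\ge 3$) in $H$, so the $u_i$ need not be pendant. The correct argument (which the paper uses) is: once some $u_i$ with $i\ge 3$ is adjacent in $H$ to, say, $u_1$, one immediately gets $u_1x_1,\dots,u_1x_p\notin E_H$ (else $u_ix_j\in E_G$ through the separator). Connectivity of $H$ then forces some $y_h$ to be adjacent to $u_2$ in $H$, whence $u_3u_2,\dots,u_ru_2\notin E_H$ (else $u_jy_h\in E_G$). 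Now $u_1u_2\in E_H$ follows from connectivity, and finally every $u_j$ with $j\ge 3$ must be adjacent to $u_1$ in $H$ because otherwise $\dist_H(u_2,u_j)\ge 3$ while $u_2u_j\in E_G$. No pendant hypothesis is needed.

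\textbf{Part~ii).} You claim that the true-twin hypothesis $N_G[u_1]=N_G[u_2]$ ``already forces $G$ to be exactly the union of the two complete graphs'' before invoking~i). This is not true: knowing $N_G[u_1]=N_G[u_2]=\{u_1,\dots,u_r,x_1,\dots,x_p\}$ tells you nothing about adjacencies among the $x_j$, nor does it by itself rule out vertices beyond this set. The paper's order is the right one: first use~i) to conclude that (say) $u_2x_1,\dots,u_2x_p\in E_H$, which makes $\{u_1,u_2,x_1,\dots,x_p\}$ a clique in $G=H^2$; then observe that any edge $x_jz\in E_H$ with $z\notin\{u_1,\dots,u_r,x_1,\dots,x_p\}$ would give $u_2z\in E_G$, a contradiction, so $V_G=\{u_1,\dots,u_r,x_1,\dots,x_p\}$ and the claimed description of $G$ follows.

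Your argument for~iii) is fine and is a valid variant of the paper's (you rule out the second alternative of~i) via $N_G[u_2]\subseteq N_G[u_1]$, whereas the paper directly produces a $(u_2,y_i)$-path of length $\ge 3$; both work).
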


\begin{figure}[h]
\centering\scalebox{1}{\input{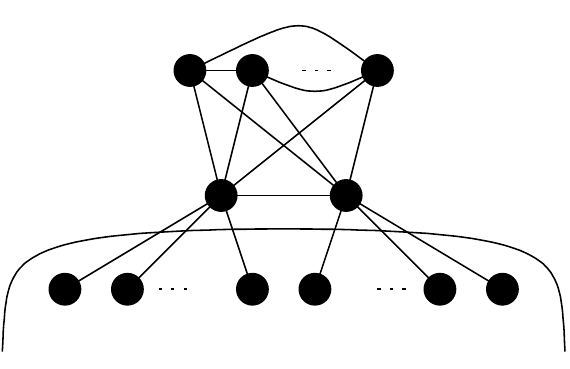_t}}
\caption{The graph $G$ of Lemma~\ref{lem:leaves-two}. Note that $p\geq 1$ and $q\geq 1$, because
$\{u_1,u_2\}$ is a minimal $(\{u_3,\ldots,u_r\}, V_G\setminus\{u_1,\ldots,u_r\})$-separator of $G$.
Also note that  $x_i=y_j$ for some $1\leq i\leq p$ and $1\leq j\leq q$ is possible.
\label{fig:trimming}}
\end{figure}

\begin{figure}[!h]
\centering\scalebox{1}{\input{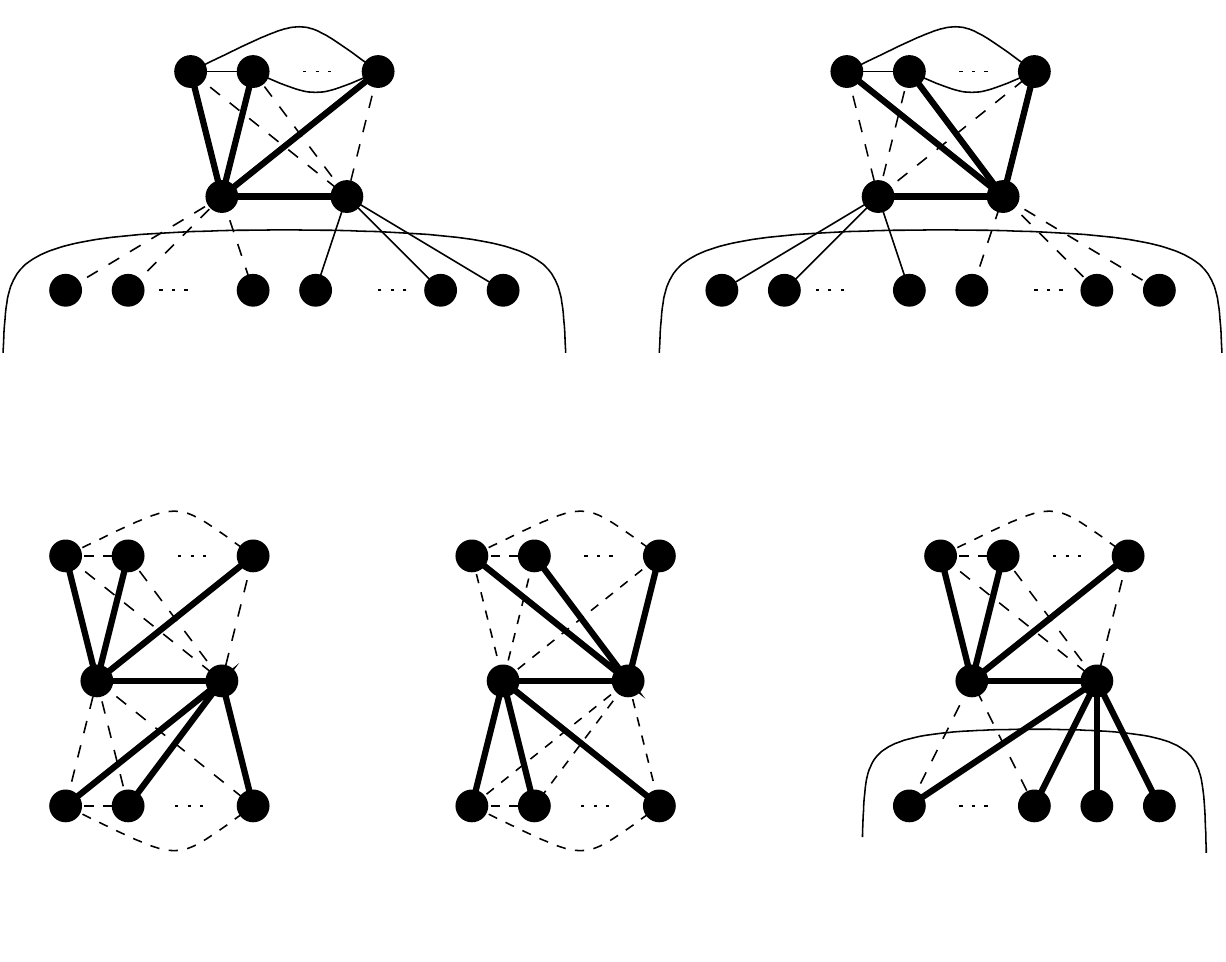_t}}
\caption{Square roots of $G$ corresponding to statements i)--iii) of Lemma~\ref{lem:leaves-two}, respectively. Edges of $G$ that belong to the square roots are shown by thick lines, whereas edges of~$G$ that do not belong to the square roots are shown by dashed lines. In i), edges of $G$ that may be put in a square root of~$G$ are shown by thin lines. The square roots in~ii) and~iii) are the 
specific square roots defined in statements~ii) and~iii) of Lemma~\ref{lem:leaves-two}, respectively.
\label{fig:trimming-i-ii-iii}}
\end{figure}

\begin{proof}
We first prove i).
As $\{u_1,u_2\}$ is a 
$(\{u_3,\ldots,u_r\}, V_G\setminus\{u_1,\ldots,u_r\})$-separator of $G$, at least one vertex $u_i$ with $r\geq 3$ is adjacent to one of $u_1,u_2$ in $H$, say to $u_1$.
Then $u_1x_1,\ldots,u_1x_p\notin E_H$; otherwise, that is, if $u_1$ is adjacent to some $x_j$ in $H$, then 
$u_ix_j\in E_G$ contradicting the fact that $\{u_1,u_2\}$ is a $(\{u_3,\ldots,u_r\}, V_G\setminus\{u_1,\ldots,u_r\})$-separator of $G$.
Because  $u_1x_1,\ldots,u_1x_p\notin E_H$, at least one vertex $y_h$ must be adjacent to $u_2$ in $H$ (as otherwise $H$ is not connected and hence cannot 
be the square root of $G$, which is  a connected graph).
Because $\{u_1,u_2\}$ is  a $(\{u_3,\ldots,u_r\}, V_G\setminus\{u_1,\ldots,u_r\})$-separator of $G$, this means that
$u_3u_2,\ldots,u_ru_2\notin E_H$.
Consequently, $u_1u_2\in E_H$ and $\{u_2\}$ is a minimal $(\{u_1,u_3,\ldots,u_r\}, V_H\setminus\{u_1,\ldots,u_r\})$-separator in $H$.
Suppose that there is a vertex $u_i$, $3\leq i\leq r$, such that $u_iu_1\notin E_H$. Since $u_3,\ldots,u_r$ are not adjacent to $u_2$, it follows that any $(u_2,u_i)$-path in $H$ has length at least 3, which is not possible as $u_2u_i\in E_G$. 
We conclude that $u_3u_1,\ldots,u_ru_1\in E_H$. Hence we have shown i).

We now prove ii). Note that $\{x_1,\ldots,x_p\}=\{y_1,\ldots, y_q\}$ with $p=q$.  
Due to i) either $u_1$ or $u_2$ is not adjacent to any $x_i$. In the first case $u_2$ must be adjacent to all $x_i$ in $H$, as otherwise there is no required path of length at most $2$ in $H$ between some $x_i$ and $u_1$. Similarly, in the second case, $u_1$ must be adjacent to all $x_i$ in $H$.
Hence, $\{u_1,u_2,x_1,\ldots,x_p\}$ is a clique in $G$. 
If $H$ has an edge $x_iz$ with $z\notin \{u_1,\ldots,u_r,x_1,\ldots,x_p\}$, then $zu_2\in E_G$, which is not possible.
This means that $G$ is the union of two complete graphs with vertex sets $\{u_1,\ldots,u_r\}$ 
 and $\{u_1,u_2,x_1,\ldots,x_p\}$, respectively. It is readily seen that
 $G$ has two (isomorphic) square roots with edge sets $\{u_1u_2,\ldots,u_1u_r\}$ $\cup\{u_2x_1,\ldots,u_2x_p\}$ and 
 $\{u_2u_1,u_2u_3,\ldots,u_2u_r\}$ $\cup\{u_1x_1,\ldots,u_1x_p\}$, respectively. Hence we have shown ii).

It remains to prove~iii). Let $y_i\in N_G[u_2]\setminus N_G[u_1]\neq\emptyset$. Due to i) we have that
$u_1u_2\in E_H$, and that either $u_3u_1,..,u_ru_1\in E_H$,  $u_3u_2,\ldots,u_ru_2\notin E_H$, $u_1x_1,\ldots, $ $u_1x_p\notin E_H$, or
 $u_3u_1,\ldots,u_ru_1\notin E_H$, $u_3u_2,\ldots,u_ru_2\in E_H$, $u_2y_1,\ldots,u_2y_q\notin E_H$.
 If the latter case holds, then any $(u_2,y_i)$-path in $H$ has length at least 3, which is not possible as $u_2y_i\in E_G$. Hence the former case must hold.  
Let $H'$ be a graph obtained from $H$ by deleting all $u_iu_j$ for $i,j\in\{3,\ldots,r\}$. It is readily seen that $H'^2=H^2=G$.
Hence we have shown~iii).
\end{proof}

Let $G$ be a graph that contains (besides possibly some other vertices) $p+q+r$ distinct vertices $u_1,\ldots,u_r$, $x_1,\ldots,x_p$, $y_1,\ldots y_q$ for some $r\geq 3$, $p\geq 1$ and $q\geq 1$, such that the following conditions hold:
\begin{itemize}
\item[i)]
 $\{u_1,\ldots,u_r\}$ is a clique in $G$;
 \item[ii)]  $\{u_1,u_2,u_3\}$ is a minimal $(\{u_4,\ldots,u_r\},V_G\setminus\{u_1,\ldots,u_r\})$-separator in~$G$ if 
$r\geq 4$;
\item[iii)] 
$\{u_1,u_3,\ldots,u_r\}\cup \{x_1,\ldots,x_p\}\cup \{y_1,\ldots,y_q\}=N_G(u_2)$;
\item[iv)] $\{u_2,u_4,u_5,\ldots,u_r\}=N_G(u_1)\cap N_G(u_3)$;
\item[v)] 
$\{x_1,\ldots,x_p\}\subseteq N_G(u_1)$ and $\{y_1,\ldots,y_q\}\subseteq N_G(u_3)$;
\item[vi)]
$x_iy_j\notin E_G$ for $i=1,\ldots, p$ and $y=1,\ldots,q$.
\end{itemize}
We call $G$ an {\it $F$-graph} and $\{u_1,u_2,u_3\}$ an {\it $F$-triple} with {\it outer vertices} $u_1$ and $u_3$, see Figure~\ref{f-fgraph} for an example. Here,  $F$ refers to the graph in Figure~\ref{fig:path}. These notions are further explained by Lemmas~\ref{lem:path-one} and~\ref{lem:path-two}.

\begin{figure}[ht]
\centering\scalebox{0.8}{\input{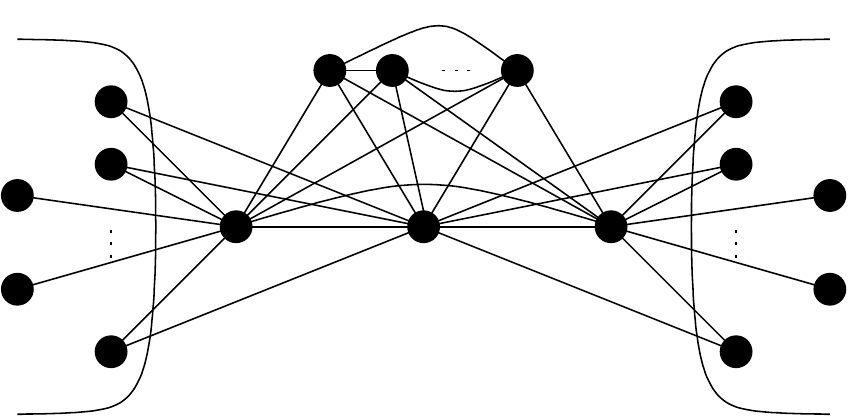_t}}
\caption{An example of an $F$-graph with $r\geq 4$. Note that there are no edges between any two vertices $x_i$ and $y_j$. Also note that
the two outer vertices $u_1$ and $u_3$ of the $F$-triple $\{u_1,u_2,u_3\}$ 
may be adjacent to vertices not adjacent to $u_2$ (but they may not have any common neighbor in
$\{x_1,\ldots,x_p\}\cup \{y_1,\ldots,y_q\}$).
\label{f-fgraph}}
\end{figure}

\begin{figure}[ht]
\centering\scalebox{0.8}{\input{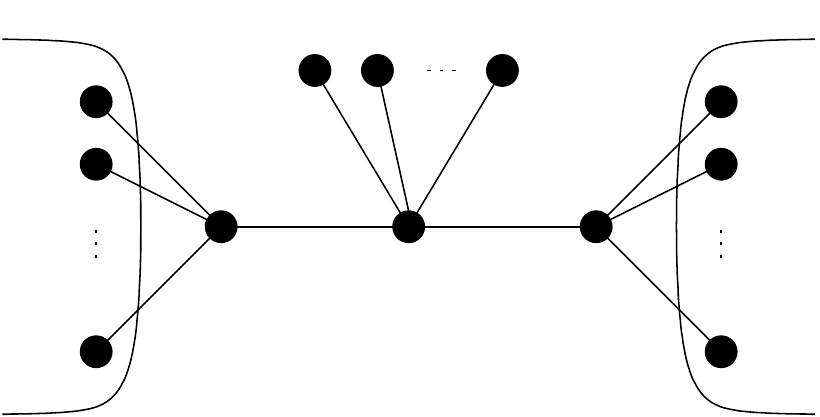_t}}
\caption{The graph $F=F(p,q,r)$ with $p\geq 1$, $q\geq 1$ and $r\geq 3$; if $r=3$ then $F$ does not contain any pendant vertices $u_4,\ldots,u_r$.
Here, we depicted $F$ as a subgraph of the graph $H$ in Lemma~\ref{lem:path-one}. To be more precise, the graph $F$ is exactly the graph with the black edges. In $H$ the vertices $u_1,\ldots,u_r$ have only neighbors that are in $F$, whereas a vertex $x_i$ or $y_j$ may have one or more neighbors in $H$ that are outside $F$; however, no $x_i$ and $y_j$ have a common neighbor in $H$. Moreover, in $H$, the only edges incident to vertices in $F$ are the black edges depicted (edges of $F$) and possibly some edges between two vertices $x_i,x_j$ or between two vertices $y_i$,$y_j$; such edges have not been depicted in the figure.
\label{fig:path}}
\end{figure}

\begin{lemma}\label{lem:path-one}
Let $H$ be a square root of a graph $G$. 
Let $H$ contain the graph~$F$ of Figure~\ref{fig:path} as a subgraph, such that $u_4,\ldots,u_r$ are pendant vertices of~$H$ (if $r\geq 4$), $d_H(u_2)=r-1$,
$u_1u_2u_3$ is an induced path in $H$ that is not contained in any cycle of length at most $6$, 
$\{x_1,\ldots,x_p\}=N_H(u_1)\setminus\{u_2\}$ and  $\{y_1,\ldots,y_q\}=N_H(u_3)\setminus\{u_2\}$.
Then $G$ is an $F$-graph.
\end{lemma}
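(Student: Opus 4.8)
The goal is to verify that the six defining conditions i)--vi) of an $F$-graph hold for $G=H^2$, given the structural assumptions on $F\subseteq H$. I would work through the conditions by translating adjacency in $G$ into ``distance at most $2$ in $H$'' and exploiting the three main hypotheses: that $u_4,\dots,u_r$ are pendant in $H$ with $d_H(u_2)=r-1$, that $u_1u_2u_3$ is induced in $H$, and the crucial girth-type condition that $u_1u_2u_3$ lies in no cycle of $H$ of length at most $6$.

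First I would establish condition i), that $\{u_1,\dots,u_r\}$ is a clique in $G$: the vertices $u_4,\dots,u_r$ are pendant neighbors of $u_2$ in $H$, so any two of them, and each of them with $u_1$ or $u_3$, are at distance $2$ via $u_2$; and $u_1u_2, u_2u_3\in E_H$ while $u_1u_3$ is at distance $2$ via $u_2$. For condition ii), assuming $r\geq 4$, I would argue that deleting $\{u_1,u_2,u_3\}$ from $G$ separates $\{u_4,\dots,u_r\}$ (which become isolated, being pendant neighbors of $u_2$ only) from the rest, and minimality follows since in $G$ each $u_j$ ($j\geq 4$) is adjacent to all of $u_1,u_2,u_3$ — here I'd use that $u_1u_2u_3$ is induced and the pendant condition to pin down exactly the $G$-neighborhoods. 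For condition iii), $N_G(u_2)$ consists of the $H$-neighbors of $u_2$ (which are $u_1,u_3,u_4,\dots,u_r$ since $d_H(u_2)=r-1$) together with vertices at $H$-distance exactly $2$ from $u_2$; the latter are precisely the $H$-neighbors of $u_1$ and of $u_3$ other than $u_2$, i.e. $\{x_1,\dots,x_p\}\cup\{y_1,\dots,y_q\}$ (the $u_j$, $j\geq 4$, being pendant contribute nothing new). Condition v) is then immediate: each $x_i$ is an $H$-neighbor of $u_1$, hence at $H$-distance $\leq 2$ from $u_1$, so $x_iu_1\in E_G$, and symmetrically for the $y_j$.

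The two conditions that require the girth hypothesis are iv) and vi). For vi), $x_iy_j\in E_G$ would mean $x_i$ and $y_j$ are at $H$-distance $\leq 2$; combined with the path $x_iu_1u_2u_3y_j$ this would produce a closed walk, hence a cycle, through $u_1u_2u_3$ of length at most $6$ in $H$ (one must rule out degeneracies using that $u_1u_2u_3$ is induced and that, by hypothesis in Figure~\ref{fig:path}, no $x_i$ and $y_j$ share a neighbor in $H$), contradicting the assumption. For condition iv), the inclusion $\{u_2,u_4,\dots,u_r\}\subseteq N_G(u_1)\cap N_G(u_3)$ follows from i); for the reverse inclusion, suppose $w\in N_G(u_1)\cap N_G(u_3)$ with $w\notin\{u_1,\dots,u_r\}$. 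Then $w$ is at $H$-distance $\leq 2$ from both $u_1$ and $u_3$. Tracing the short $H$-paths from $u_1$ to $w$ and from $w$ to $u_3$ and appending $u_3u_2u_1$ again yields a short closed walk through $u_1u_2u_3$; I would check that after discarding the (excluded) cases $w=u_2$ and $w\in\{u_4,\dots,u_r\}$ this walk has length at most $6$ and contains a genuine cycle through the path $u_1u_2u_3$, again contradicting the hypothesis.

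\textbf{Main obstacle.} The delicate part is the casework in iv) and vi): turning a ``distance $\leq 2$ in $H$'' statement into an honest short cycle through $u_1u_2u_3$ requires carefully ruling out degenerate situations where the closed walk repeats a vertex or an edge (e.g. the common $H$-neighbor of $u_1$ and $w$ could coincide with $u_2$, or $w$ could be adjacent to $u_2$), so that one extracts a cycle of length genuinely at most $6$ rather than a shorter or degenerate one. I expect that the hypotheses that $u_1u_2u_3$ is induced, that $d_H(u_2)=r-1$, and that no $x_i,y_j$ have a common $H$-neighbor are exactly what is needed to dispatch each degenerate case, and the bulk of the write-up is this bookkeeping.
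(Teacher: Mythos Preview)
Your plan is correct and mirrors the paper's own proof, which is extremely terse: it simply states that conditions i), ii), iii), v) are ``readily seen to hold'' while iv) and vi) follow from the hypothesis that $u_1u_2u_3$ lies in no cycle of length at most~6 in $H$. Your write-up is effectively a fleshed-out version of that one-line argument; the casework you anticipate for iv) and vi) is exactly the content the paper suppresses. (One small slip: in your sketch of ii) you say $u_4,\dots,u_r$ ``become isolated'' in $G-\{u_1,u_2,u_3\}$, but in $G$ they remain pairwise adjacent; what matters is only that $N_G(u_j)\subseteq\{u_1,\dots,u_r\}$ for $j\ge 4$, which you have.)
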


\begin{proof}
Conditions (i)-(iii)~and~(v) are readily seen to hold. Conditions~iv) and~vi)  follow from the condition that the path $u_1u_2u_3$ is not contained in any cycle of length at most 6 in $H$.
\end{proof}

\begin{lemma}\label{lem:path-two}
Let $G$ be a connected $F$-graph. If $H$ is a square root of $G$, then the graph $F$ of Figure~\ref{fig:path} is a subgraph of $H$ such that $d_H(u_2)=r-1$, 
$\{x_1,\ldots,x_p\}=N_H(u_1)\setminus\{u_2\}$ and  $\{y_1,\ldots,y_q\}=N_H(u_3)\setminus\{u_2\}$.
Moreover, the graph  obtained from $H$ by deleting all edges $u_iu_j$ with $4\leq i<j\leq r$
is~a square root of $G$ 
that contains $u_4,\ldots,u_r$ as pendant vertices (if $r\geq 4$).
\end{lemma}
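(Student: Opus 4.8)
The plan is to take a square root $H$ of the connected $F$-graph $G$ and analyze the local structure around the $F$-triple $\{u_1,u_2,u_3\}$, invoking Lemma~\ref{lem:leaves-two} as the main tool. First I would observe that, by condition~ii) of the definition of an $F$-graph, when $r\geq 4$ the set $\{u_1,u_2,u_3\}$ is a minimal $(\{u_4,\ldots,u_r\},V_G\setminus\{u_1,\ldots,u_r\})$-separator of $G$, and $\{u_1,\ldots,u_r\}$ is a clique of $G$ by condition~i). This is exactly the hypothesis needed to apply Lemma~\ref{lem:leaves-two} after relabelling: treat $\{u_1,u_3\}$ as the separator pair "$\{u_1,u_2\}$" of that lemma, with $u_2,u_4,\ldots,u_r$ playing the role of "$u_3,\ldots,u_r$", $\{x_1,\ldots,x_p\}$ the role of the $x$'s, and $\{y_1,\ldots,y_q\}$ the role of the $y$'s. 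Conditions~iii), v) and vi) guarantee these neighbourhood sets are exactly as required, and condition~vi) ($x_iy_j\notin E_G$) is what makes the separator pair genuinely necessary (i.e.\ that $p,q\geq 1$ is consistent and no short-circuiting edges exist). When $r=3$ there are no $u_i$ with $i\geq 4$; I would handle this by noting that condition~iv) then says $N_G(u_1)\cap N_G(u_3)=\{u_2\}$, so that $\{u_1,u_3\}$ still separates $u_2$ from everything else in the appropriate sense, and the same reasoning from Lemma~\ref{lem:leaves-two}~i) applies directly.

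Next I would argue that we are in case~iii) of Lemma~\ref{lem:leaves-two}, not case~ii): condition~iv), $\{u_2,u_4,\ldots,u_r\}=N_G(u_1)\cap N_G(u_3)$, together with condition~v) giving $x_1\in N_G(u_1)$ and, by vi), $x_1\notin N_G(u_3)$ (assuming $p\geq 1$), shows $u_1$ and $u_3$ are not true twins — indeed $N_G[u_1]\setminus N_G[u_3]\neq\emptyset$. So Lemma~\ref{lem:leaves-two}~iii), applied with the roles swapped so that "$u_2$" is our $u_1$, yields: $u_1u_3,u_1u_2,u_1u_4,\ldots,u_1u_r\in E_H$ — wait, I must be careful about which outer vertex is the "centre". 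The correct reading is that one of the two outer vertices, say $u_1$, has $u_1u_2,u_1u_3\in E_H$ and $u_3$ is non-adjacent in $H$ to $u_4,\ldots,u_r$, while $u_3x'\notin E_H$ for the $x$-side neighbours of $u_1$; symmetrically if $u_3$ is the centre. In either case, by the structure of Lemma~\ref{lem:leaves-two}~i), we get that $u_2$ is adjacent in $H$ to exactly $u_1,u_3$ and (if $r\geq 4$) to $u_4,\ldots,u_r$, and to nothing else — hence $d_H(u_2)=r-1$ — while $u_1$ (resp.\ $u_3$) has its remaining $H$-neighbours among $\{x_1,\ldots,x_p\}$ (resp.\ $\{y_1,\ldots,y_q\}$). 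The disjointness forced by condition~vi) is what pins $N_H(u_1)\setminus\{u_2\}=\{x_1,\ldots,x_p\}$ and $N_H(u_3)\setminus\{u_2\}=\{y_1,\ldots,y_q\}$ exactly: every $x_i$ is within distance $2$ of $u_2$ in $H$ (since $x_iu_1,u_1u_2\in E_H$ would do it) but must also be within distance $2$ of everything it is adjacent to in $G$, and vi) prevents an $x_i$ from being a $y_j$. This establishes that $F$ (with the black edges of Figure~\ref{fig:path}) sits inside $H$ with the claimed degree and neighbourhood conditions; the only freedom left is whether $u_4,\ldots,u_r$ are pendant in $H$, which they need not be a priori since $H$ may contain edges $u_iu_j$ with $4\leq i<j\leq r$ (these lie inside the clique $\{u_1,\ldots,u_r\}$ of $G$ and so are "invisible" to the squaring operation beyond what $u_1,u_2$ already supply).

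Finally, for the "moreover" part, let $H'$ be obtained from $H$ by deleting all edges $u_iu_j$ with $4\leq i<j\leq r$. I would verify $H'^2=G$ directly: no pair of vertices loses a short path, because any path through such an edge $u_iu_j$ can be rerouted through $u_1$ (or $u_2$) at no extra length — $u_i,u_j$ are both adjacent to $u_1$ in $H'$, and the only vertices these edges help reach within distance $2$ are among $\{u_1,\ldots,u_r\}$, which already form a clique in $G$ thanks to $u_1$ being adjacent to all of them and $u_i$ being adjacent to $u_1$. Conversely deleting edges cannot create new adjacencies in the square, and $u_4,\ldots,u_r$ become pendant in $H'$ since their only surviving incident edges are to $u_1$. (When $r=3$ this step is vacuous.) I expect the main obstacle to be the bookkeeping in the case analysis of which outer vertex plays the "central" role in Lemma~\ref{lem:leaves-two}~i)/iii), and in the $r=3$ boundary case where the separator has size two rather than three; I would resolve the latter by remarking that the arguments in the proof of Lemma~\ref{lem:leaves-two} only used that $\{u_1,u_3\}$ separates $u_2$ (plus the clique and twin hypotheses), so no essential change is needed. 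The rest is a routine distance check for $H'^2=G$.
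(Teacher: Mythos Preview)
Your plan has a genuine gap at the first step: Lemma~\ref{lem:leaves-two} cannot be invoked with $\{u_1,u_3\}$ as the separator pair, because $\{u_1,u_3\}$ is \emph{not} a $(\{u_2,u_4,\ldots,u_r\},V_G\setminus\{u_1,\ldots,u_r\})$-separator of $G$. Condition~iii) of the $F$-graph definition says explicitly that $u_2$ is adjacent in $G$ to every $x_i$ and every $y_j$, all of which lie in $V_G\setminus\{u_1,\ldots,u_r\}$. So after removing $u_1$ and $u_3$, the vertex $u_2$ is still connected to the outside, and the hypothesis of Lemma~\ref{lem:leaves-two} fails. (The same objection applies verbatim in your $r=3$ case: $N_G(u_1)\cap N_G(u_3)=\{u_2\}$ does not prevent $u_2$ from having its own neighbours among the $x_i$ and $y_j$.) Even setting this aside, the conclusion of Lemma~\ref{lem:leaves-two}~i) would only give that $u_2$ is adjacent in $H$ to \emph{one} of $u_1,u_3$, not both, whereas you need $u_1u_2,u_2u_3\in E_H$.

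The paper does not go through Lemma~\ref{lem:leaves-two} at all here. Instead it argues directly by a three-case split on which of the edges $u_1u_2$ and $u_2u_3$ lie in $E_H$. In Case~1 (both present) it derives the full claimed structure; Cases~2 and~3 (neither, or exactly one) are shown to be impossible by exhibiting a vertex adjacent to $u_2$ in $G$ that would be at distance at least~$3$ from $u_2$ in $H$. The crucial ingredients are conditions~iii) and~iv) (pinning down $N_G(u_2)$ and $N_G(u_1)\cap N_G(u_3)$) and condition~vi) (no $x_iy_j$ edges), which together force every length-$2$ $H$-path from an $x_i$ or $y_j$ to $u_2$ through the ``right'' outer vertex. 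Your treatment of the ``moreover'' part is fine once the main structure is in hand, but the main structure needs this direct argument rather than an appeal to Lemma~\ref{lem:leaves-two}.
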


\begin{proof}
Let $H$ be a  square root of $G$.
 We consider the following three cases.

\medskip
\noindent{\bf Case 1.} $u_1u_2,u_2u_3\in E_H$. 
Because $x_1u_3,\ldots,x_pu_3\notin E_G$, this means that $x_1u_2,\ldots,x_pu_2\notin E_H$. 
Symmetrically,  $y_1u_2,\ldots,y_qu_2\notin E_H$. 
Since each 
$x_iu_2\in E_G$ but $x_iu_2\notin E_H$, $H$ has an $(x_i,u_2)$-path of length 2. 
Because $d_G(u_2)=p+q+r-1$, the middle vertex of this path is in $\{u_1,u_3,\ldots,u_r\}$.
Because $x_i$ is not adjacent to $u_3,\ldots,u_r$ in $H$ (as it not so in $G$), this path goes through $u_1$. 
In other words, $x_1u_1,\ldots,x_pu_1\in E_H$ and, by symmetry, $y_1u_3,\ldots,y_qu_3\in E_H$.
If a vertex $z\notin \{u_2,x_1,\ldots,x_p\}$ is adjacent to $u_1$ in $H$, then $z$ is adjacent to both $u_2$ and $x_1$
in $G$. Because 
$d_G(u_2)=p+q+r-1$, we find that $z\in \{u_3,\ldots,u_r\}$ or $z\in \{y_1,\ldots,y_q\}$. However, none of $\{u_3,\ldots,u_r\}$ is adjacent to $x_1$,
whereas none of $\{y_1,\ldots,y_q\}$ is adjacent to $u_2$. We conclude that
$\{x_1,\ldots,x_p\}=N_H(u_1)\setminus\{u_2\}$ and by using the same arguments that $\{y_1,\ldots,y_q\}=N_H(u_3)\setminus\{u_2\}$.

Now we show that $u_4u_2,\ldots,u_ru_2\in E_H$. To prove it, assume that some $u_i$, $4\leq i\leq r$, is not adjacent to $u_2$ in $H$. Then $u_1$ and $u_i$ are at distance at least 3 in $H$ contradicting $u_1u_i\in E_G$.
We already deduced that $x_1u_2,\ldots,x_pu_2\notin E_H$ and that $y_1u_2,\ldots,y_qu_2\notin E_H$.
By assumption, $u_2$ is adjacent to both $u_1$ and $u_3$. As $d_G(u_2)=p+q+r-1$, we then find that $d_H(u_2)=r-1$.

To conclude the proof for this case, it remains to observe that if some $u_i,u_j$ are adjacent in $H$ for $i,j\in\{4,\ldots,r\}$, then the graph $H'$ obtained from $H$ by the removal of these edges is a square root of $G$. 

\medskip
\noindent{\bf Case 2.} $u_1u_2,u_2u_3\notin E_H$. 
Since $u_1u_2\notin E_H$, $u_1u_2\in E_G$ and $d_G(u_2)=p+q+r-1$, there exists a vertex
$z\in \{x_1,\ldots,x_p\}\cup\{u_4,\ldots,u_r\}$ such that
$u_1z,zu_2\in E_H$. 
Because $z$ is not adjacent to $y_1,\ldots,y_q$ in $G$, we find that $y_1u_2,\ldots,y_qu_2\notin E_H$. 
By the same arguments, we obtain  $x_1u_2,\ldots,x_pu_2\notin E_H$.
Hence, $z\in\{u_4,\ldots,u_r\}$. By symmetry, some vertex from $\{u_4,\ldots,u_r\}$ is adjacent to $u_3$ in $H$.
Consequently, each vertex of $\{u_1,u_2,u_3\}$ is adjacent to some vertex in $\{u_4,\ldots,u_r\}$ in $H$. As $\{u_1,u_2,u_3\}$ separates
$\{u_4,\ldots,u_r\}$ from $V_G\setminus \{u_1,\ldots,u_r\}$, this means that $H$ has no edges that join $u_1,u_2,u_3$ with the vertices of $V_G\setminus \{u_1,\ldots,u_r\}$;  a contradiction. Hence, this case is not possible.

\medskip
By symmetry, it remains to consider the following case.

\medskip
\noindent{\bf Case 3.} $u_1u_2\in E_H$ and $u_2u_3\notin E_H$. Because $u_1u_2\in E_H$ and 
$y_1u_1,\ldots,y_qu_1\notin E_G$, we find that $y_1u_2,\ldots,y_qu_2\notin E_H$. 
Because $y_1u_2\in E_G$, this means that $H$ contains a $(y_1,u_2)$-path of length 2. 
Because $u_2u_3\notin E_H$ and $d_G(u_2)=p+q+r-1$, such a path should go through one of the 
vertices of $\{u_1,u_4,\ldots,u_r\}\cup\{x_1,\ldots,x_p\}$.
However, none of these vertices is adjacent to $y_1$ in $G$, and consequently not in $H$ either; a contradiction. Therefore, this case is not possible either.
\end{proof} 

\begin{lemma}\label{lem:twins}
Let $u,v$ be true twins in a connected graph $G$ with at least three vertices. Let $G'$ be the graph obtained from $G$ by deleting $v$. 
The following two statements hold:
\begin{itemize}
\item [i)] If $H'$ is a square root of $G'$, then the graph $H$ obtained from $H'$ by adding $v$ with $N_{H}(v)=N_{H'}(u)$ (that is, by adding a false twin of $u$) is a square root of $G$.
\item [ii)] If $H$ is a square root of $G$ such that $u,v$ are false twins in $H$, then the graph $H'$ obtained by deleting $v$  is a square root of $G'$.
\end{itemize}
\end{lemma}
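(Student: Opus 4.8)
The plan is to prove both statements by a direct distance argument, exploiting the fact that a false twin $v$ of $u$ "sees" exactly the same vertices at distance $\le 2$ as $u$ does, except possibly $u$ itself. Throughout, fix the notation $V_G$, $V_{G'}=V_G\setminus\{v\}$, and note that since $u,v$ are true twins in $G$ we have $uv\in E_G$ and $N_G[u]=N_G[v]$; in particular every vertex $w\in V_{G'}\setminus\{u\}$ satisfies $wu\in E_G$ if and only if $wv\in E_G$, and $G'=G-v$ still contains the edge-neighbourhood structure of $u$ restricted to $V_{G'}$.

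For statement~i), suppose $H'$ is a square root of $G'$ and let $H$ be obtained by adding $v$ as a false twin of $u$, i.e.\ $N_H(v)=N_{H'}(u)$ and $uv\notin E_H$. I would verify $H^2=G$ by checking the three relevant kinds of pairs. First, for two vertices $a,b\in V_{H'}$: a shortest $a$--$b$ walk in $H$ of length $\le 2$ either avoids $v$, in which case it is a walk in $H'$, or passes through $v$; but any length-$2$ walk through $v$ can be rerouted through $u$ since $N_H(v)=N_{H'}(u)$, so $\dist_H(a,b)\le 2 \iff \dist_{H'}(a,b)\le 2 \iff ab\in E_{G'}\iff ab\in E_G$. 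Second, for the pair $u,v$: they have a common neighbour in $H$ (any vertex of $N_{H'}(u)$, which is nonempty because $G$, hence $G'$, has at least two vertices and is connected—here one must check $|V_{G'}|\ge 2$, which follows from $|V_G|\ge 3$), so $uv\in E_H^2$; and indeed $uv\in E_G$. Third, for a pair $v,w$ with $w\in V_{G'}\setminus\{u\}$: $\dist_H(v,w)\le 2$ iff $w\in N_H[v]\cup N_H^2(v) = N_{H'}[u]\setminus\{u\}$ modified by the $u$--$v$ adjacency. The clean way to phrase it: $w$ is within distance $2$ of $v$ in $H$ iff $w$ is within distance $2$ of $u$ in $H'$ (a walk $v$--$x$--$w$ corresponds to $u$--$x$--$w$ and vice versa, and $v$--$w$ edges do not exist), iff $wu\in E_{G'}$, iff $wu\in E_G$, iff $wv\in E_G$ (true twins). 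Hence $H^2=G$.

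For statement~ii), suppose $H$ is a square root of $G$ in which $u,v$ are false twins, and let $H'=H-v$. I would show $H'^2=G'$. Take $a,b\in V_{G'}$. If a length-$\le 2$ walk between them in $H$ avoids $v$ it survives in $H'$; if it uses $v$ as the middle vertex, reroute it through $u$ (legal since $N_H(v)=N_H(u)$ and $u\ne a,b$ because $uv\notin E_H$ means $u$ is not the middle-adjacent-to-both unless $u\in\{a,b\}$, and if say $a=u$ then $a$--$v$--$b$ would need $uv\in E_H$, impossible); if $v$ is an endpoint it is not an endpoint of an $a,b$-walk with $a,b\in V_{G'}$. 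So $\dist_{H'}(a,b)\le 2 \iff \dist_H(a,b)\le 2 \iff ab\in E_G \iff ab\in E_{G'}$ (the last step because $a,b\ne v$). Thus $H'^2=G'$.

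The only real subtlety—the step I expect to need the most care—is the rerouting of length-$2$ walks through $v$ into walks through $u$ (and conversely in part~i)): one must make sure that when $v$ is the internal vertex of a path $a$--$v$--$b$, replacing it by $u$ does not accidentally create the forbidden configuration, and in particular one uses that $uv\notin E_H$ (false twins) so that $u$ is never simultaneously an endpoint and forced to be adjacent to $v$. Everything else is a routine case check on distances, and the hypotheses "$u,v$ true twins in $G$'' and "$u,v$ false twins in $H$'' are exactly what make the two directions match up. The hypothesis $|V_G|\ge 3$ is used only to guarantee $N_{H'}(u)\ne\emptyset$ in part~i) so that $u,v$ really do end up adjacent in $H^2$.
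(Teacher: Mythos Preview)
Your proof plan is correct and follows essentially the same approach as the paper's proof: both arguments rest on the observation that a false twin $v$ of $u$ can be interchanged with $u$ in any length-$\le 2$ walk, so distances between vertices of $V_{G'}$ are unchanged by adding or deleting $v$, and the pair $(u,v)$ itself is handled because $N_{H'}(u)\neq\emptyset$. The paper's proof is much terser --- for part~i) it verifies only the $uv$ edge explicitly and declares the rest ``straightforward'', and for part~ii) it simply invokes the general fact that identifying false twins preserves all pairwise distances --- whereas you spell out the rerouting case analysis in full; but the content is the same. One small remark: your parenthetical ``and $v$--$w$ edges do not exist'' in the third case of part~i) is misleading as written (such edges do exist precisely when $w\in N_{H'}(u)$), but the length-$1$ case is still covered since $vw\in E_H \iff w\in N_H(v)=N_{H'}(u)\iff uw\in E_{H'}$, so the claim $\dist_H(v,w)\le 2\iff \dist_{H'}(u,w)\le 2$ holds regardless.
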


\begin{proof}
We first prove i).
Let $H'$ be a square root of $G'$, and let $H$ be the graph obtained from $H'$ by adding a false twin $v$ of $u$. As $G$ is a connected graph with at least three vertices, $u$ is adjacent in $H'$ to some vertex $z$. Then $u$ and $v$ are adjacent to $z$ in $H$ and thus $d_H(u,v)\le 2$. Hence, $uv$ is an edge of $H^2$.
Then it is straightforward to see that $G=H^2$. 
Statement~ii) follows from the fact that identifying false twins does not change the distance between any two vertices.
\end{proof}

\subsection{Construction of the Generalized Kernel}\label{s-reduction}
As discussed, in this section, we reduce {\sc Tree $+\;k$ Edges Square Root} to {\sc Tree $+\;k$ Edges Square Root with Labels}  in such a
way that the size of the graph in the obtained instance is 
$O(k^2)$.

First, we informally sketch the main steps of the reduction. 
Let $G$ be a connected graph with $n$ vertices, and let $k$ be a positive integer.

\begin{figure}[ht]
\centering\scalebox{0.75}{\input{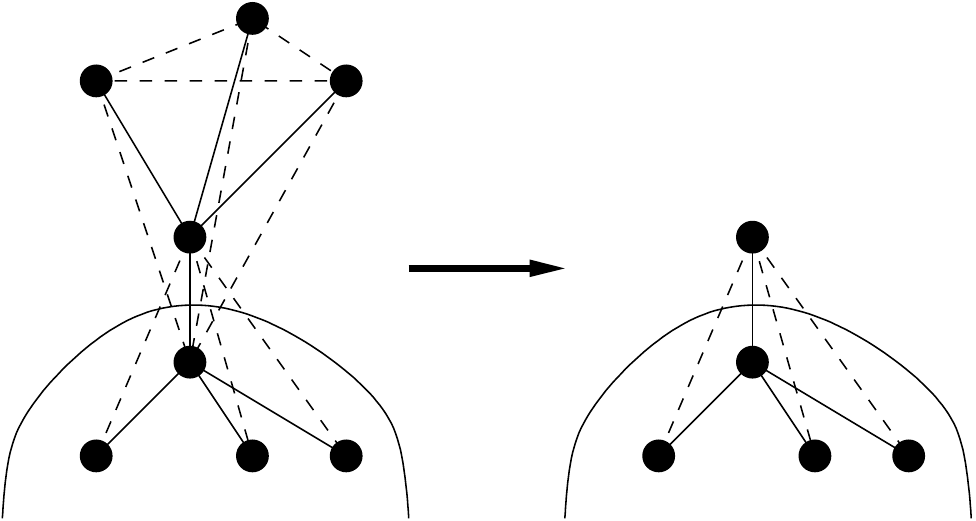_t}}
\caption{Trimming; the edges of $H$ are shown by the solid lines.
\label{fig:trim-informal}}
\end{figure}

Suppose that $H$ is a square root of $G$ with at most $n+k-1$ edges. If $H$ has a vertex $u$ of degree at least 2 that has exactly one non-pendant neighbor~$v$, then we recognize the corresponding structure in $G$ and delete those vertices of $G$ that are pendant vertices of $H$ adjacent to $u$ as shown in Figure~\ref{fig:trim-informal}, that is,
similar to the algorithm of Lin and Skiena~\cite{LinS95}, we ``trim'' pendant edges in potential roots.
Since the root we are looking for is not a tree, our trimming is more sophisticated and
based on Lemmas~\ref{lem:leaves-one} and \ref{lem:leaves-two}. We will show that in this way we obtain a graph $G'$ with
$n'$ vertices that has the following  
property: every pendant vertex of any square root $H'$ of $G'$ with at most $n'-1+k$ edges is adjacent to a vertex that has at least two non-pendant neighbors in~$H'$. 

Suppose that $H'$ has a sufficiently long induced path $P$, such that every internal vertex of $P$ has exactly two non-pendant neighbors in $H'$. Let $u$ be an internal vertex of $P$, and let $x,y\in V_P$ be the two non-pendant neighbors of $u$. Using
Lemmas~\ref{lem:path-one} and~\ref{lem:path-two}, we recognize the corresponding structure in $G'$ and modify $G'$ as shown in 
Figure~\ref{fig:path-informal}, that is, we delete $u$ in $H'$ and join $x$ any $y$ by an edge. By performing this operation recursively, we obtain a graph $G''$ with $n''$ vertices. 

\begin{figure}[ht]
\centering\scalebox{0.75}{\input{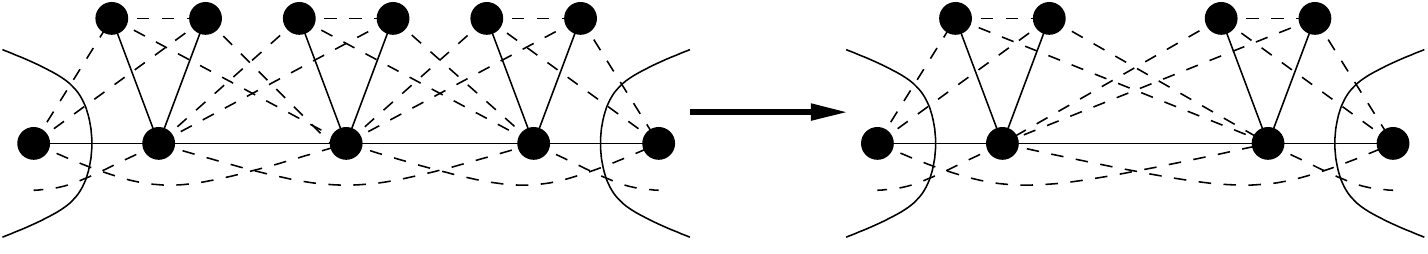_t}}
\caption{Reduction of paths; the edges of $H'$ are shown by the solid lines.
\label{fig:path-informal}}
\end{figure}

Suppose that $H''$ is a square root of $G''$ with at most $n''+k-1$ edges. Let $H^*$ be the graph obtained from $H''$ by deleting all pendant vertices of $H''$. Then $H^*$ has no vertices of degree 1, and the length of every path $P$ with internal vertices of degree 2 in $H^*$ is bounded by a constant. 
This means that the size of $H^*$ is $O(k)$.  
The vertices of $V_{G''}\setminus V_{H^*}$ are pendant vertices of $H''$. Consider the set $Z$ of pendant vertices of $H''$ adjacent to a vertex $u\in V_{H^*}$. Then the vertices of $Z$ are simplicial vertices of $G''$. Moreover, they are true twins. We use Observation~\ref{obs:leaves} and Lemma~\ref{lem:twins} to show that we may reduce the number of true twins in $G''$ if $G''$
has too many. This results in a graph $G'''$ with $n'''$ vertices such that $n'''$ is $O(k^2)$. 

During the reduction from $G$ to $G''$ we label some edges, that is, we include some edges in sets $R$ or $B$ and, therefore, 
obtain an instance $(G''',k,R,B)$ of {\sc Tree $+\;k$ Edges Square Root with Labels}.

\medskip
Before we give a formal description of our reduction, we introduce the following terminology. A square root $H$ of a graph $G$ that has at most $|V_G|-1+k$ edges for some $k\geq 0$ is called
 a {\it solution} of the instance $(G,k)$ of {\sc Tree $+\;k$ Edges Square Root}. If $R\subseteq E_H$ and $B\cap E_H=\emptyset$ for two disjoint subsets $R$ and $B$ of $E_G$, then $H$ is also called 
 a {\it solution} of the instance $(G,k,R,B)$ of {\sc Tree $+\;k$ Edges Square Root with Labels}.
 
We are now ready to give the exact details of our reduction. 
Let $G$ be a connected graph with $n$ vertices and $m$ edges, and let $k$ be a positive integer. First we check whether $G$ has a square root that is a tree by using the linear-time algorithm of  
Lin and Skiena~\cite{LinS95}. If we find such a square root, then we stop and return a yes-answer. From now we assume that every square root of $G$ (if there exists one) has at least one cycle.

Because connected graphs that have square roots are 2-connected, we also check whether $G$ is 2-connected. If so, then we stop and return a no-answer. Otherwise we continue as follows. 
We introduce two sets of edges $R$ and $B$. Initially, we set $R=B=\emptyset$. Next, we ``trim'' pendant edges in potential roots, that is, we exhaustively apply the following rule that consists of five steps that must be performed in increasing order.

\medskip
\noindent
{\bf Trimming Rule}
\begin{enumerate}
\item Find a pair $S=\{u_1,u_2\}$  
of two adjacent vertices such that one connected component of $G-S$ consists of  $r\geq 3$ vertices $u_3,\ldots,u_r$ that together with $u_1,u_2$ 
form a clique in $G$.
\item If $N_G[u_1]=N_G[u_2]$ then stop and return a no-answer.
\item If $N_G[u_1]\setminus N_G[u_2]\neq\emptyset$ and  $N_G[u_2]\setminus N_G[u_1]\neq\emptyset$, then stop and return a no-answer.
\item If $N_G[u_1]\setminus N_G[u_2]\neq\emptyset$, then rename $u_1$ by $u_2$ and $u_2$ by $u_1$ 
(this step is for notational convenience  only and has no further meaning).
\item Define sets $R'=\{u_1u_2,\ldots,u_1u_r\}$ and $B'=\{u_iu_j\; |\; 2\leq i<j\leq r\}\cup\{u_1x\; |\; x\in N_G(u_1)\setminus\{u_2,\ldots,u_r\}\}$.
\item If $R\cap B'\neq\emptyset$ or $R'\cap B\neq\emptyset$, then stop and return a no-answer. Otherwise, 
set $R=R\cup R'$, $B=B\cup B'$,
delete $u_3,\ldots,u_r$ from $G$ and also delete all edges incident to $u_3,\ldots,u_r$  from $R$ and $B$.
\end{enumerate}

Exhaustively applying the trimming rule
yields a sequence of instances $(G_0,k,R_0,B_0),\ldots, (G_\ell,k,R_\ell,B_\ell)$ of {\sc Tree $+\;k$ Edges Square Root with Labels} 
for some integer $\ell\geq 0$,
where $(G_0,k,R_0,B_0)=(G,k,\emptyset,\emptyset)$ and where $(G_\ell,k,R_\ell,B_\ell)$ is an instance for which we have either
returned a no-answer (in steps~2,~3 or~6) or for which there does not exist a set $S$  as specified in step 1. 
For $0\leq i\leq \ell-1$ we denote the sets $R'$ and $B'$ constructed in the $(i+1)$th call of the trimming rule by $R_i'$ and $B'_i$, respectively.
We need the following lemma. 

\begin{lemma}\label{lem:trimming}
The instance $(G_\ell,k,B_\ell,R_\ell)$ has no solution that is a tree, and $G_\ell$ is $2$-connected.
Moreover, $(G_\ell,k,R_\ell,B_\ell)$ has a solution if and only if $(G_0,k,R_0,B_0)$ has a solution.
If the trimming rule returned a no-answer for $(G_\ell,k,R_\ell,B_\ell)$, then $(G_0,k,R_0,B_0)$ has no solution. 
\end{lemma}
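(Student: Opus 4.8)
The plan is to analyze a single application of the Trimming Rule and prove that it preserves the relevant invariants, and then iterate. So first I would fix $0 \le i \le \ell-1$ and suppose the $(i+1)$th call of the rule locates a set $S = \{u_1,u_2\}$ as in step~1, with the clique $\{u_1,\ldots,u_r\}$ and component $\{u_3,\ldots,u_r\}$ of $G_i - S$. The crucial observation is that $\{u_1,u_2\}$ is then automatically a \emph{minimal} $(\{u_3,\ldots,u_r\}, V_{G_i}\setminus\{u_1,\ldots,u_r\})$-separator of $G_i$: it is a separator because $\{u_3,\ldots,u_r\}$ is a full component of $G_i - S$, and minimality holds since $G_i$ is connected (so both $u_1$ and $u_2$ have neighbours outside $\{u_1,\ldots,u_r\}$, giving $p,q \ge 1$ in the notation of Lemma~\ref{lem:leaves-two}). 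This lets me invoke Lemma~\ref{lem:leaves-two} for any solution $H_i$ of $(G_i,k,R_i,B_i)$.

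Next I would handle the three ``stop'' steps. For step~2: if $N_{G_i}[u_1] = N_{G_i}[u_2]$ then $u_1,u_2$ are true twins, so by Lemma~\ref{lem:leaves-two}(ii) $G_i$ is the union of two cliques and all its square roots are the two explicitly described graphs, each of which is a tree (a ``double star''); combined with the running invariant that $(G_i,k,R_i,B_i)$ has no solution that is a tree, this yields that $(G_i,k,R_i,B_i)$ has no solution at all, hence the no-answer is correct. (I also need to note these explicit roots make every $u_iu_j$ with $2\le i<j\le r$ a non-edge and $u_1u_2,\ldots$ required, consistent with labelling — but since we stop, only the no-answer needs justification.) For step~3: if both $N_{G_i}[u_1]\setminus N_{G_i}[u_2]$ and $N_{G_i}[u_2]\setminus N_{G_i}[u_1]$ are nonempty, then by symmetry Lemma~\ref{lem:leaves-two}(iii) applies with the roles of $u_1$ and $u_2$ \emph{both} forced, which is contradictory (it would require $u_3u_1,\ldots,u_ru_1 \in E_{H_i}$ and simultaneously $u_3u_2,\ldots,u_ru_2\in E_{H_i}$, impossible against part~(i)); hence no solution exists and the no-answer is correct. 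After step~4's relabelling we are in the situation $N_{G_i}[u_2]\setminus N_{G_i}[u_1]$ may be nonempty but $N_{G_i}[u_1]\setminus N_{G_i}[u_2] = \emptyset$, i.e. either $u_1,u_2$ are true twins (impossible, step~2 already caught it — so actually $N_{G_i}[u_2]\setminus N_{G_i}[u_1]\neq\emptyset$) and Lemma~\ref{lem:leaves-two}(iii) applies: in \emph{every} solution $H_i$ we have $u_1u_2,\ldots,u_1u_r \in E_{H_i}$, $u_3u_2,\ldots,u_ru_2 \notin E_{H_i}$, $u_1x \notin E_{H_i}$ for $x \in N_{G_i}(u_1)\setminus\{u_2,\ldots,u_r\}$, and moreover $H_i$ can be modified (deleting the $u_iu_j$, $3\le i<j\le r$) into a solution $H_i^\circ$ in which $u_3,\ldots,u_r$ are pendant and $\{u_1,\ldots,u_r\}$ induces a star centred at $u_1$.

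Then comes the equivalence. For the forward direction: given a solution $H_i$ of $(G_i,k,R_i,B_i)$, pass to $H_i^\circ$ as above, set $H_{i+1} = H_i^\circ - \{u_3,\ldots,u_r\}$; I must check $H_{i+1}^2 = G_{i+1}$ (the square of the trimmed root is the trimmed square, since $u_3,\ldots,u_r$ were pendant in $H_i^\circ$ and adjacent only to $u_1$, whose other $H_i^\circ$-neighbourhood is exactly $\{u_2\}$ — wait, $u_1$ may also be adjacent in $H_i^\circ$ to vertices of $N_{G_i}(u_1)\setminus\{u_2,\dots,u_r\}$; but Lemma~\ref{lem:leaves-two}(iii) ruled those out, so $N_{H_i^\circ}(u_1) = \{u_2,u_3,\ldots,u_r\}$ and removing the pendants is clean), that the edge count drops by exactly $r-2 \ge 1$ while $|V| $ drops by $r-2$, preserving the ``$\le |V|-1+k$'' budget, and that $R_{i+1} \subseteq E_{H_{i+1}}$, $B_{i+1}\cap E_{H_{i+1}} = \emptyset$ using the step-6 updates (new required/blocked edges hold in $H_i^\circ$ by the above, old ones survive restriction, and the $R\cap B' \neq \emptyset$ / $R'\cap B\neq\emptyset$ check in step~6 guarantees no conflict — if there were a conflict, the no-answer is correct because the forced structure contradicts a previously forced label). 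For the backward direction: given a solution $H_{i+1}$ of $(G_{i+1},k,R_{i+1},B_{i+1})$, reattach $u_3,\ldots,u_r$ as pendant vertices at $u_1$; Observation~\ref{obs:leaves}(ii)-type reasoning and a direct distance check show the resulting $H_i$ squares to $G_i$ and respects $R_i, B_i$.

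Finally, iterate: the first two sentences of the lemma about $(G_\ell, k, R_\ell, B_\ell)$ follow because the loop terminates either with a no-answer or with no set $S$ of step~1 available — and $G_\ell$ 2-connected plus ``no tree solution'' is maintained as an invariant (2-connectedness: we checked $G_0$ is not 2-connected only to \emph{reject}; after trimming starts, I would argue each $G_i$ that is not yet rejected and has no step-1 set is 2-connected, since the only obstruction to 2-connectedness in a graph all of whose square roots are non-trees and have a cycle is precisely a separator pair creating a simplicial clique component, which step~1 would find — this needs Lemma~\ref{lem:leaves-one} and a short argument that a cut vertex or a cut pair not of step-1 form cannot occur in a graph with a non-tree square root; here I should lean on the standard fact that graphs with square roots are 2-connected together with the analysis of where 1-cuts can sit). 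Chaining the per-step equivalences gives ``$(G_\ell,\ldots)$ has a solution iff $(G_0,\ldots)$ has a solution,'' and the no-answer cases at each step propagate to ``$(G_0,k,R_0,B_0)$ has no solution.''

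\textbf{Main obstacle.} The delicate point is the \emph{label-consistency bookkeeping} in step~6: showing that the conflict test $R\cap B' \neq \emptyset$ or $R'\cap B \neq\emptyset$ exactly captures the cases where the forced edge-membership decisions coming from different applications of Lemma~\ref{lem:leaves-two}(iii) are mutually contradictory, so that a no-answer there is sound, while in the non-conflicting case the updated $R_{i+1},B_{i+1}$ are simultaneously satisfiable by the modified root $H_i^\circ$. A secondary subtlety is verifying that removing or reattaching the pendant clique $\{u_3,\ldots,u_r\}$ interacts correctly with edges among the $x$-vertices and with the requirement $N_{H_i^\circ}(u_1) = \{u_2,u_3,\ldots,u_r\}$ (i.e. that part~(iii) of Lemma~\ref{lem:leaves-two} really does kill all $u_1x$ edges), so that $H_{i+1}^2 = G_{i+1}$ holds on the nose rather than up to a few stray edges.
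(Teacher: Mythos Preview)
Your overall inductive framework matches the paper's, and the use of Lemma~\ref{lem:leaves-two} in the equivalence step is correct in spirit. However, there are genuine gaps in three places.

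\textbf{The step-2 (true-twin) no-answer case is argued incorrectly.} You claim that by Lemma~\ref{lem:leaves-two}(ii) ``all its square roots are the two explicitly described graphs, each of which is a tree.'' Lemma~\ref{lem:leaves-two}(ii) only \emph{exhibits} two square roots; it does not assert uniqueness. Indeed the union of two cliques on $\{u_1,\ldots,u_r\}$ and $\{u_1,u_2,x_1,\ldots,x_p\}$ has many square roots: you can add any edges $u_iu_j$ ($3\le i<j\le r$) or $x_ix_j$ to a double star and still square to $G$. So ``no tree solution $\Rightarrow$ no solution'' does not follow. The paper's actual argument is quite different: it traces the \emph{history} of the labels, using that any edge in $R_\ell$ was put there by an earlier trimming call (step~5), which simultaneously put a companion edge into $B_\ell$, and derives a label conflict from that. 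You also miss the sub-case where $\{u_1,u_2\}$ is not a separator at all (i.e.\ $G_\ell$ is the complete graph on $\{u_1,\ldots,u_r\}$), which the paper handles first and by a similar label-history argument.

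\textbf{The 2-connectedness is needed as a running invariant, not just for $G_\ell$.} You write ``minimality holds since $G_i$ is connected (so both $u_1$ and $u_2$ have neighbours outside $\{u_1,\ldots,u_r\}$).'' Connectedness alone does not give this; one of $u_1,u_2$ could have all its neighbours inside the clique. You need $G_i$ 2-connected to conclude $\{u_1,u_2\}$ is a \emph{minimal} separator, which is the hypothesis of Lemma~\ref{lem:leaves-two}. The paper maintains 2-connectedness inductively with a one-line argument: since $N_{G_{i-1}}[u_1]\subsetneq N_{G_{i-1}}[u_2]$ after step~4, $G_{i-1}$ is not complete, and deleting $u_3,\ldots,u_r$ from a 2-connected graph in this configuration keeps it 2-connected. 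Your proposed argument for $G_\ell$ via ``no step-1 set available'' is both unnecessary and circular.

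\textbf{The backward direction's label check is more delicate than you indicate.} When you reattach $u_3,\ldots,u_r$ as pendants at $u_1$, you must rule out that some $u_su_t\in R_{i-1}$ with $3\le s<t\le r$ (which would not lie in the reattached root). The paper argues this explicitly: such an edge would have been placed in $R$ by an earlier trimming call, which by step~5 would also have placed $u_su_1$ or $u_tu_1$ in $B$, contradicting $R'_{i-1}\cap B_{i-1}=\emptyset$ at step~6. This is exactly the ``label-consistency bookkeeping'' you flag as the main obstacle, but note it bites already in the equivalence proof, not only in justifying the step-6 no-answer.
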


\begin{proof}
For $0\leq i\leq \ell$, we use induction to show that 
the graph $G_i$ is $2$-connected and that $(G_i,k,B_i,R_i)$ has no solution that is a tree.
Moreover, for all $1\leq i\leq \ell$, we show that $(G_i,k,R_i,B_i)$ has a solution if and only if $(G_{i-1},k,R_{i-1},B_{i-1})$ has a solution.
Finally, we prove that if the trimming rule returned a no-answer for $(G_\ell,k,R_\ell,B_\ell)$, then $(G_0,k,R_0,B_0)$ has no solution.
  
If $i=0$, then $G_i$ is 2-connected and $(G_0,k,B_0,R_0)$ has no solution that is a tree by our initial assumption (as we had 
preprocessed $G$ with respect to these two properties).
Now suppose that $1\leq i\leq \ell$. By our induction hypothesis, we may assume that $G_{i-1}$ is 2-connected and that 
$(G_{i-1},k,B_{i-1},R_{i-1})$ has no solution that is a tree.

Because the trimming rule applied on $(G_{i-1},k,R_{i-1},B_{i-1})$ yielded a new instance $(G_i,k,R_i,B_i)$, the graph 
$G_{i-1}$ has a pair $S=\{u_1,u_2\}$ of adjacent vertices such that one 
connected component of $G-S$ consists of  vertices $u_3,\ldots,u_r$ that together with $u_1,u_2$ form a clique in $G_{i-1}$. 
Step 6 implies that $G_i=G_{i-1}-\{u_3,\ldots,u_r\}$. 
Because we did not return a no-answer for $(G_{i-1},k,R_{i-1},B_{i-1})$, we find that
$N_{G_{i-1}}[u_1] \subset N_{G_{i-1}}[u_2]$. Hence, $G_{i-1}$ is not a complete graph. Because $G_{i-1}$ is 2-connected, this means that
$G_i$ is 2-connected.
We now show that any solution for $(G_{i-1},k,B_{i-1},R_{i-1})$ corresponds to a solution for $(G_i,k,B_i,R_i)$, and vice versa. 

First suppose that $H_{i-1}$ is an arbitrary solution for $(G_{i-1},k,B_{i-1},R_{i-1})$. Let $N_{G_{i-1}}(u_1)\setminus \{u_2,\ldots,u_r\}=\{x_1,\ldots,x_p\}$.
Because $N_{G_{i-1}}[u_2]\setminus N_{G_{i-1}}[u_1]\neq\emptyset$, we find that $G_{i-1}-\{u_1,u_2\}$ contains at least two connected components.
As $G_{i-1}$ is 2-connected,
this means that
$\{u_1,u_2\}$ is a minimal $\{u_3,\ldots,u_r\},V_{G_{i-1}}\setminus \{u_1,\ldots,u_r\}$-separator of $G_{i-1}$. Hence
we may apply Lemma~\ref{lem:leaves-two}~iii), which tells us that
$u_2u_1,\ldots,u_ru_1\in E_{H_{i-1}}$,  $u_3u_2,\ldots,u_ru_2\notin E_{H_{i-1}}$, and $u_1x_1,\ldots,u_1x_p\notin E_{H_{i-1}}$. 
As $R_i\subseteq R_{i-1}\cup \{u_1u_2\}$ and $B_i\subseteq B\cup \{u_1x_1,\ldots,u_1x_p\}$, this means that
the graph obtained from $H_{i-1}$ by deleting $u_3,\ldots,u_r$ is a solution for $(G_i,k,R_i,B_i)$; in particular note that
$|E_{H_i}|\leq |E_{H_{i-1}}|-(r-3)\leq |V_{G_{i-1}}|-1+k-(r-3)=|V_{G_i}|-1+k$, as required.

Now suppose that $H_i$ is an arbitrary solution for $(G_i,k,R_i,B_i)$. 
Then adding the edges $u_1u_3,\ldots,u_1u_r$ to $H_i$ yields a graph $H$ that is a square root of $G_{i-1}$.
The edges $u_1u_3,\ldots,u_1u_r$ are not in $B_{i-1}$, as they are in the set $R'_{i-1}$ constructed in step 5 and $R'_{i-1}\cap B_{i-1}=\emptyset$ (otherwise the trimming rule would have stopped when processing $(G_{i-1},k,R_{i-1},B_{i-1})$ in step 6).
Now suppose that $R_{i-1}$ contains an edge not in $H$.
By definition of $R_i$, this edge must be between some $u_s$ and $u_t$ with $3\leq s<t\leq r$. 
Then $u_su_t$ belongs to $R_i$, because it  was placed in the set $R_h$ 
for some $h\leq i-1$. In step 4 of the corresponding call of the trimming rule, also one of the edges $u_su_1$ or $u_tu_1$ was placed in $B_h$. Hence either $u_su_1$ or $u_tu_1$ belongs to $B_{i-1}$.  This yields a contraction as both $u_su_1$ and $u_tu_1$ belong to $R'_{i-1}$ and
$R'_{i-1}\cap B_{i-1}=\emptyset$ (otherwise the trimming rule would have stopped when processing $(G_{i-1},k,R_{i-1},B_{i-1})$ in step 6). 
Hence, after observing that $|E_H|=|E_{H_i}|+(r-3)\leq |V_{G_i}|-1+k+(r-3)=|V_{G_{i-1}}|-1+k$,
we conclude that $H$ is a solution for $(G_{i-1},k,R_{i-1},B_{i-1})$.
We observe that $H_i$ cannot be a tree, as this would imply that $H$ is a tree, which 
is not possible as $(G_{i-1},k,R_{i-1},B_{i-1})$ does not have such a solution. 

We are left to show that if the trimming rule returned a no-answer for $(G_\ell,k,R_\ell,B_\ell)$, then $(G_0,k,R_0,B_0)$ has no solution. Due to
the above, this comes down to showing that $(G_\ell,k,R_\ell,B_\ell)$ has no solution.

Suppose that the trimming rule returned a no-answer for $(G_\ell,k,R_\ell,B_\ell)$. Then this must have happened in step~2,~3 or~6, thus after step~1.
Hence, there exists a pair of adjacent vertices $S=\{u_1,u_2\}$ in $G_\ell$, such that one connected component of $G_\ell-S$ has vertex set $\{u_3,\ldots,u_r\}$  and  $\{u_1,\ldots,u_r\}$ is a clique. 

First assume that $S$ is not a separator of $G_\ell$, that is, $G_\ell$ is a complete graph with vertex set
$\{u_1,\ldots,u_r\}$.  Then $N_G[u_1]=N_G[u_2]$ (and the no-answer given by the trimming rule happens in step 2).
In order to obtain a contradiction, assume that $(G_\ell,k,R_\ell,B_\ell)$ has a solution $H$.
Any star on $|V_{G_\ell}|$ vertices is a square root of $G_\ell$ with at most $|V_{G_\ell}|-1+k$ edges.
However, $H$ cannot be such a star, as $(G_\ell,k,R_\ell,B_\ell)$ has no solution that is a  tree.
Hence, $R_\ell\neq\emptyset$ or $B_\ell\neq\emptyset$.
Recall that $B_0=R_0=\emptyset$. Hence, $\ell\geq 1$, and non-emptiness of $R_\ell$ or $B_\ell$ must have been obtained in a previous call
of the trimming rule, say in the $(h+1)$th call of the trimming rule for some $0\leq h\leq \ell-1$.
By definition of steps~5 and~6, we find that $B_h\neq \emptyset$ implies that $R_h\neq \emptyset$. Hence, $R_h\neq \emptyset$.
Let $u_iu_j\in R_h$. By steps~5 and~6, this edge has an end-vertex, say $u_i$, such that 
$u_iu_s\in B_\ell$ for all
$s\in\{1,\ldots,r\}\setminus \{i,j\}$. Consequently, $u_ju_s\in E_H$ for all $s\in\{1,\ldots,r\}\setminus \{j\}$.   
Because the star with central vertex $u_j$ and  leaves $V_{G_\ell}\setminus\{u_j\}$ is not a solution for $(G_\ell,k,R_\ell,B_\ell)$, there must be an edge $u_su_t\in R_\ell$ with $s,t\in\{1,\ldots,r\}\setminus \{j\}$. However then, due to steps~5 and~6, $u_ju_s\in B_\ell$ or $u_ju_t\in B_\ell$, that is, at least one of these edges cannot be in $H$; a contradiction.

Now assume that $S$ is a separator of $G_\ell$. Because $G_\ell$ is 2-connected, both $u_1$ and $u_2$ have at least one neighbor in
 $V_{G_\ell}\setminus\{u_1,\ldots,u_r\}$. Hence $\{u_1,u_2\}$ is a minimal separator (and we may apply Lemma~\ref{lem:leaves-two} in the remainder).
Recall that the trimming rule only returns a no-answer in steps~2,~3, or~6. We consider each of these three cases separately.

\medskip
\noindent
{\bf Case 1.} The no-answer is given in step 2.
Then $N_G[u_1]=N_G[u_2]$. By Lemma~\ref{lem:leaves-two} i) and ii), 
$G_\ell$ is the union of two  
cliques $\{u_1,\ldots,u_r\}$ and $\{u_1,u_2,x_1, \ldots x_p\}$ where $\{x_1,\ldots,x_p\}=N_G(u_1)\setminus\{u_2,\ldots,u_r\}$.
In order to obtain a contradiction, suppose that $(G_\ell,k,R_\ell,B_\ell)$ has a solution~$H$.
By Lemma~\ref{lem:leaves-two}~i) and~ii), we may assume without loss of generality that
$u_1u_2,\ldots,u_1u_r\in E_H$, $u_2u_3,\ldots,u_2u_r\notin E_H$,
$u_1x_1,\ldots,u_1x_p\notin E_H$ and $u_2x_1,\ldots,u_2x_p\in E_H$. Recall that $(G_\ell,k,R_\ell,B_\ell)$ has no solution that is a tree.
Hence, there exists an edge $u_iu_j\in R_\ell$ for some $i,j\in \{2,\ldots,r\}$ or 
an edge $x_ix_j\in R_\ell$ for some $i,j\in \{x_1,\ldots,x_p\}$. By symmetry, we only need to consider the case $u_iu_j\in R_\ell$.
This edge was placed in $R_\ell$ in some previous call of the trimming rule.  However, due to steps~5 and~6 performed in that call,
we find that $u_iu_1\in B_\ell$ or $u_ju_1\in B_\ell$, that is, at least one of these two edges cannot be in $H$; a contradiction.

\medskip
\noindent
{\bf Case 2.} The no-answer is given in step 3.
Then we have $N_G[u_1]\setminus N_G[u_2]\neq\emptyset$ and  $N_G[u_2]\setminus N_G[u_1]\neq\emptyset$.
Due to Lemma~\ref{lem:leaves-two}~i) and~iii), $(G_\ell,k,R_\ell,B_\ell)$ has no solution.

\medskip
\noindent
{\bf Case 3.} The no-answer is given in step 6. Then $R_\ell\cap B_\ell'\neq \emptyset$ or $R'_\ell\cap B_\ell\neq \emptyset$.
By step 4, we may assume that $N_G[u_1]\setminus N_G[u_2]=\emptyset$ and that $N_G[u_2]\setminus N_G[u_1]\neq\emptyset$. 
In order to obtain a contradiction, suppose that $(G_\ell,k,R_\ell,B_\ell)$ has a solution~$H$.
By Lemma~\ref{lem:leaves-two}~iii), $R'_\ell=\{u_2u_1,\ldots,u_ru_1\} \subseteq E_H$. Hence $R'_\ell\cap B_\ell=\emptyset$, which means that
$R_\ell\cap B_\ell'\neq \emptyset$.

Let $\{x_1,\ldots,x_p\}=N_G(u_1)\setminus\{u_1,\ldots,u_r\}$. Then we have that $B_\ell'=\{u_iu_j\; |\; 2\leq i<j\leq r\}\cup\{u_1x_1,\ldots,u_1x_p\}$.
By the same arguments as used in Case 1, we find that $u_iu_j\notin R_\ell$ for all $2\leq i<j\leq r$.
 By Lemma~\ref{lem:leaves-two}~iii),  we find that $E_H$, and hence $R_\ell$, does not contain the edges $u_1x_1,\ldots,u_1x_p$.
We conclude that $R_\ell\cap B_\ell'=\emptyset$; a contradiction. 
\end{proof}

Lemma~\ref{lem:trimming} shows that the trimming rule is safe, that is, we either found that $(G,k,\emptyset,\emptyset)$ has no solution, or that 
we may continue with the instance $(G_\ell,k,R_\ell,B_\ell)$ instead.
Suppose the latter case holds. 
Recall that  $(G_\ell,k,R_\ell,B_\ell)$ has no set $S$ as specified in step~1, as otherwise we would have applied the trimming rule once more.

To simplify notation, we write $(G,k,R,B)=(G_\ell,k,R_\ell,B_\ell)$. We need the following properties that hold for every solution of $(G,k,R,B)$ (should 
$(G,k,R,B)$ have a solution).

\begin{lemma}\label{lem:trimmed}
Any solution $H$ of $(G,k,R,B)$ satisfies the following properties:
\begin{itemize}
\item[i)] the  neighbor of every pendant vertex of $H$ has 
at least two non-pendant neighbors in $H$;
\item[ii)] only edges of $G$ incident to pendant vertices of $H$ can be in $R$ or $B$;  
\item[iii)] if a pendant vertex $v$ of $H$ is incident to an edge of $R$ in $G$, then all other edges of $G$ that are incident to $v$ are in $B$.
\end{itemize} 
\end{lemma}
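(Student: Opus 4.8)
\textbf{The plan} is to prove the three properties of Lemma~\ref{lem:trimmed} by exploiting (a) the fact that the trimming rule cannot be applied to $(G,k,R,B)=(G_\ell,k,R_\ell,B_\ell)$ any more, and (b) the bookkeeping done in steps~5 and~6 of each call of the trimming rule, which is the only place where edges enter $R$ or $B$.

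For property~(i), suppose towards a contradiction that $H$ is a solution and that some pendant vertex $v$ of $H$ has a neighbor $u$ with at most one non-pendant neighbor in $H$. Since $G$ is $2$-connected and has at least three vertices (by Lemma~\ref{lem:trimming}), $H\neq K_2$, so $u$ has degree at least $2$ in $H$; hence $u$ has at least two pendant neighbors, call them $v=u_3,u_4,\ldots,u_r$ (for some $r\geq 4$, or $r\geq 3$ if $u$ has exactly one non-pendant neighbor but degree $\geq 2$ — in either case $r\geq 3$ after counting $u=u_1$ and the at-most-one non-pendant neighbor $u_2$). By Observation~\ref{obs:leaves}~i) and~ii), $u_3,\ldots,u_r$ are pairwise true twins in $G$ and each is simplicial, so $\{u_1,\ldots,u_r\}$ is a clique in $G$; moreover, the non-pendant neighbor $u_2$ of $u_1$ (if it exists) together with $u_1$ separates $\{u_3,\ldots,u_r\}$ from the rest of $H$, hence from the rest of $G$. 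So $S=\{u_1,u_2\}$ is exactly a pair as sought in step~1 of the trimming rule, contradicting the assumption that the rule is no longer applicable. (The degenerate case where $u_1$ has no non-pendant neighbor at all would force $H$ to be a star, hence a tree, which is excluded by Lemma~\ref{lem:trimming}.) This establishes~(i).

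For properties~(ii) and~(iii) I would track how an edge gets into $R$ or $B$. Every edge placed in $R$ arises as some $u_1u_i$ in a set $R'_h$ (step~5) of the $(h+1)$th call, where at that moment the vertices $u_3,\ldots,u_r$ were pendant in the corresponding potential root (Lemma~\ref{lem:leaves-two}~iii)) and were about to be deleted from the graph; the vertex $u_i$ (with $3\leq i\leq r$) plays the role of a pendant vertex, while $u_1$ is its unique neighbor. An edge placed in $B$ is either some $u_iu_j$ with $2\le i<j\le r$ (again involving only the soon-to-be-pendant vertices, since at least one of $u_i,u_j$ has index $\ge 3$) or some $u_1x$ with $x\in N_G(u_1)\setminus\{u_2,\ldots,u_r\}$ — but in step~6 such edges incident to $u_3,\ldots,u_r$ are immediately purged from $R$ and $B$ along with those vertices, so the only labelled edges that survive are those incident to vertices that remain in the graph but became pendant in the root. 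Now use Lemma~\ref{lem:leaves-two}~iii) (more precisely its statement that $H'$, hence any solution we track, has $u_3,\ldots,u_r$ pendant with $u_1$ as their common neighbor): any solution $H$ of $(G,k,R,B)$ induces, by undoing the trimming rule call by call, a solution of each earlier instance in which the previously trimmed vertices remain pendant. Hence every surviving edge of $R\cup B$ is incident, in $G$, to a vertex that is pendant in $H$. This is property~(ii).

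Property~(iii) then follows from examining a single call of the trimming rule together with step~6's cleanup: if a pendant vertex $v$ of $H$ is incident in $G$ to an edge $e\in R$, then $e=u_1v$ was placed in some $R'_h$ with $v\in\{u_3,\ldots,u_r\}$; by step~5, the set $B'_h$ then contains all edges $u_iu_j$ with $2\le i<j\le r$, in particular every edge joining $v$ to another vertex of $\{u_2,\ldots,u_r\}$, and every edge $u_1x$ was blocked, so every other edge of $G_h$ incident to $v$ (note $N_{G_h}(v)=\{u_1,\ldots,u_r\}\setminus\{v\}$ since $v$ is simplicial and $\{u_1,\ldots,u_r\}$ separates it) lies in $B'_h\subseteq B$, and these vertices and edges are not subsequently deleted because $v$ itself survives. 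The main obstacle I anticipate is the careful case analysis around the degenerate values of $r$ and around whether $S$ is actually a separator versus the whole graph being a clique — but Lemma~\ref{lem:trimming} has already disposed of the "$G$ is a clique'' and "solution is a tree'' situations, so what remains is bookkeeping rather than new ideas.
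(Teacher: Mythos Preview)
Your argument for property~(i) is essentially the paper's: use that $(G,k,R,B)$ admits no tree solution (so the neighbor $u$ of a pendant vertex has at least one non-pendant neighbor), and then observe via Lemma~\ref{lem:leaves-one} (or, equivalently, Observation~\ref{obs:leaves} as you do) that a \emph{unique} non-pendant neighbor $w$ of $u$ would give a pair $S=\{u,w\}$ to which the trimming rule still applies. The phrasing ``hence $u$ has at least two pendant neighbors'' is not literally correct when $u$ has degree~$2$, but your parenthetical already covers the case $r=3$, so this is cosmetic.

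For properties~(ii) and~(iii), however, there is a genuine error in the bookkeeping. You write that an edge placed in $R$ is some $u_1u_i$ with $i\ge 3$, that the vertex $u_i$ ``plays the role of a pendant vertex'', and later (for~(iii)) that $e=u_1v$ with $v\in\{u_3,\ldots,u_r\}$ and ``$v$ itself survives''. This is contradictory: step~6 of the trimming rule \emph{deletes} $u_3,\ldots,u_r$ from the graph and simultaneously purges all edges incident to them from $R$ and $B$. Hence the only edge of $R'_h$ that survives into $R_{h+1}$ is $u_1u_2$, and the only edges of $B'_h$ that survive into $B_{h+1}$ are the edges $u_1x$ with $x\in N_{G_h}(u_1)\setminus\{u_2,\ldots,u_r\}$. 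Every surviving labelled edge is therefore incident to $u_1$, not to some $u_i$ with $i\ge 3$.

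The correct argument (which the paper merely summarizes as ``follows from the construction of $R$ and $B$ in steps~4 and~5'') runs as follows. After the $(h{+}1)$st call, the vertex $u_1$ has exactly one edge in $R$ (namely $u_1u_2$) and all its other edges in $G_{h+1}$ lie in $B$. Subsequent trimming calls delete vertices but never add edges, and they remove a label only when an endpoint is deleted; hence every surviving neighbor of $u_1$ in $G_\ell$ still carries its label. In any solution $H$ of $(G_\ell,k,R_\ell,B_\ell)$ the vertex $u_1$ is therefore forced to be pendant (its unique $H$-edge is $u_1u_2$). This gives~(ii), since every edge of $R_\ell\cup B_\ell$ is incident to such a $u_1$. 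For~(iii), if a pendant vertex $v$ of $H$ is incident to an edge of $R$, that edge is $u_1u_2$ for some call; $v=u_2$ is impossible (else $u_1$ and $u_2$ would form a $K_2$-component of the connected graph $H$), so $v=u_1$ and all other edges at $v$ are in $B$ by construction. Your ``undoing the trimming'' idea does not help here, because it produces pendant vertices in the \emph{earlier} solutions $H_h$, whereas the statement concerns pendancy in the \emph{final} solution $H$.
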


\begin{proof}
In order to show i), suppose that $H$ is a solution of an instance $(G,k,R,B)$, such that $H$ contains a pendant vertex $u$ adjacent to a vertex $v$. If $d_H(v)=1$, then $H$ is isomorphic to $K_2$, which is not possible as $(G,k,R,B)$ has no solution that is a tree. Hence $d_H(v)\geq 2$ and $v$ has at least one neighbor other than $u$. If all  neighbors of $v$ are pendant, then $H$ is a tree; a contradiction. Hence, $v$ has at least one non-pendant neighbor.
If $v$ has a unique non-pendant neighbor $w$, then 
by Lemma~\ref{lem:leaves-one}, 
$G-\{v,w\}$ contains a connected component induced by the pendant neighbors of $v$ 
whose vertices together with $v$ and $w$ form a clique in $G$. Hence, 
we can apply the trimming rule on $S=\{v,w\}$, which is a contradiction. 
Properties~ii) and~iii) follow from the construction of $R$ and $B$ in steps~4 and~5 of the 
trimming rule. 
\end{proof}
 
We now exhaustively apply the following rule on $(G,k,R,B)$. This rule consists of four steps that must be performed in increasing order. 

\medskip
\noindent
{\bf Path Reduction Rule}
\begin{enumerate}
\item Find an $F$-triple $S=\{u_1,u_2,u_3\}$.
\item Set $R'=\{u_2u_1,u_2u_3,\ldots,u_2u_r\}$ and $B'=\{x_1u_2,\ldots,x_pu_2\}\cup\{y_1u_2,..,y_qu_2\}\cup\{u_1u_3,\ldots,u_1u_r\}\cup\{u_3u_4,\ldots,u_3u_r\}$ (note that the set $\{u_3u_4,\ldots, u_3u_r\}=\emptyset$ if $r=3$).
\item If $R\cap B'\neq\emptyset$ or $R'\cap B\neq\emptyset$, then stop and return a no-answer. 
\item Delete  $u_2,u_4,\ldots,u_r$ from $G$. Delete all edges incident to $u_2,u_4,\ldots,u_r$ from $R$ and $B$. If $u_1u_3\in B$, then delete $u_1u_3$ from $B$. Add $u_1u_3$ to $R$. 
Add  $x_1u_3,\ldots,x_pu_3$ and $y_1u_1,\ldots,y_qu_1$ in $G$. Put these edges in $B$.
\end{enumerate}

Exhaustively applying the path reduction rule
yields a sequence of instances $(G_0,k,R_0,B_0),\ldots, (G_\ell,k,R_\ell,B_\ell)$ of {\sc Tree $+\;k$ Edges Square Root with Labels} 
for some integer $\ell\geq 0$,
where $(G_0,k,R_0,B_0)=(G,k,R,B)$ and where $(G_\ell,k,R_\ell,B_\ell)$ is an instance for which we have either
returned a no-answer (in step~3) or for which there does not exist an $F$-triple $S$. 
For $0\leq i\leq \ell$ we denote the sets $R'$ and $B'$ constructed in the $(i+1)$th call of the path reduction rule by $R_i'$ and $B'_i$, respectively.

We need the following lemma, which we will use at several places.

\begin{lemma}\label{l-back}
Let $1\leq i\leq \ell$ and $\{u_1,u_2,u_3\}$ be the $F$-triple that yielded instance
$(G_i,k,R_i,B_i)$. If $H_i$ is a solution for $(G_i,k,R_i,B_i)$, then $u_1u_3\in E_{H_i}$ and the graph $H_{i-1}$ obtained from $H_i$ by removing 
the edge $u_1u_3$ and by adding $u_2$ and vertices $u_4,\ldots,u_r$ (if $r\geq 4$) together with edges $u_2u_1,u_2u_3,\ldots,u_2u_r$ 
is a solution for $(G_{i-1},k,R_{i-1}, B_{i-1})$.
\end{lemma}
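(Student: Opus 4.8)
The statement says that a solution $H_i$ for the reduced instance $(G_i,k,R_i,B_i)$ can be ``un-reduced'' to a solution $H_{i-1}$ for $(G_{i-1},k,R_{i-1},B_{i-1})$. The natural plan is: first establish structural facts about $G_{i-1}$ around the $F$-triple $\{u_1,u_2,u_3\}$, then apply the already-proved Lemma~\ref{lem:path-one} and Lemma~\ref{lem:path-two} to get a handle on what the reduction actually did to roots, and finally verify that the proposed $H_{i-1}$ has the right number of edges and respects the label sets $R_{i-1}$, $B_{i-1}$.

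\textbf{Step 1: Understand how $G_i$ and the labels were obtained.} By step~4 of the path reduction rule, $G_i$ is obtained from $G_{i-1}$ by deleting $u_2,u_4,\dots,u_r$, deleting the incident edges from $R$ and $B$, adding the edges $x_1u_3,\dots,x_pu_3$ and $y_1u_1,\dots,y_qu_1$ and putting them in $B$, and forcing $u_1u_3$ into $R_i$ (removing it from $B$ if it was there). In particular $u_1u_3\in R_i$, so since $H_i$ is a solution for $(G_i,k,R_i,B_i)$ we get $u_1u_3\in E_{H_i}$ for free; that part of the claim is immediate.

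\textbf{Step 2: Show $G_i$ is an $F$-graph in disguise / apply Lemma~\ref{lem:path-two} to reconstruct a root of $G_{i-1}$.} The real content is the other direction. Here I would argue that $G_{i-1}$, by virtue of $\{u_1,u_2,u_3\}$ being an $F$-triple (step~1 of the rule found it), is itself an $F$-graph in the sense defined just before Lemma~\ref{lem:path-one}. Define $H_{i-1}$ as in the statement: take $H_i$, remove $u_1u_3$, re-introduce $u_2$ and (if $r\ge4$) $u_4,\dots,u_r$, and add the star edges $u_2u_1,u_2u_3,u_2u_4,\dots,u_2u_r$. I must check $H_{i-1}^2 = G_{i-1}$. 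The vertices $u_4,\dots,u_r$ become pendant neighbors of $u_2$, so in $H_{i-1}^2$ they are adjacent exactly to $N_{H_{i-1}}[u_2] = \{u_1,u_2,u_3,u_4,\dots,u_r\}$, giving the clique $\{u_1,\dots,u_r\}$ of condition (i) of the $F$-graph definition. For $u_2$ itself: $N_{H_{i-1}}[u_2]\cup N_{H_{i-1}}[u_1]\cup N_{H_{i-1}}[u_3]$ — using $N_{H_i}(u_1)\setminus\{u_3\} = \{x_1,\dots,x_p\}\cup\{u_2\}$ and $N_{H_i}(u_3)\setminus\{u_1\} = \{y_1,\dots,y_q\}\cup\{u_2\}$ (this is where I invoke Lemma~\ref{lem:path-one}, applied to $H_i$ and $G_i$, to pin down the neighborhoods of $u_1$ and $u_3$ in $H_i$, since in $G_i$ the reduction made $u_1u_3$ an induced edge with the $x$'s hanging off $u_1$ and $y$'s off $u_3$) — yields exactly $N_{G_{i-1}}(u_2) = \{u_1,u_3,\dots,u_r\}\cup\{x_1,\dots,x_p\}\cup\{y_1,\dots,y_q\}$, which is condition (iii). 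The only genuinely new adjacencies in $H_{i-1}^2$ versus $H_i^2$ involving the surviving vertices are through the length-2 path $x_i\,u_1\,u_2$ and $y_j\,u_3\,u_2$; and the adjacencies lost are $x_i u_3$, $y_j u_1$, and $u_1 u_3$-at-distance-1 (but $u_1u_3$ is still at distance $2$ via $u_2$). One then checks these match precisely the differences between $G_{i-1}$ and $G_i$: adding back $u_2$ adjacent to everything, removing $x_iu_3$ and $y_ju_1$, keeping $u_1u_3\in E_{G_{i-1}}$. The non-adjacency $x_iy_j\notin E_{G_{i-1}}$ is preserved because no $x_i$ and $y_j$ share a neighbor in $H_{i-1}$ (condition (vi) of the $F$-triple guarantees this structurally, carried through the reduction).

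\textbf{Step 3: Edge count and label consistency.} Passing from $H_i$ to $H_{i-1}$ removes one edge ($u_1u_3$) and adds $r-1$ edges (the star at $u_2$), a net change of $+(r-2)$, while $|V_{G_{i-1}}| - |V_{G_i}| = r-2$ as well (we deleted $u_2,u_4,\dots,u_r$). Hence $|E_{H_{i-1}}| = |E_{H_i}| + (r-2) \le (|V_{G_i}|-1+k) + (r-2) = |V_{G_{i-1}}|-1+k$, as required. For the labels: $R_{i-1}\subseteq R_i\setminus\{u_1u_3\}$ augmented with edges incident to $u_2,u_4,\dots,u_r$ that lay in $R$ before the deletion; but by Lemma~\ref{lem:trimmed}(ii) only edges incident to pendant vertices of a solution can be labelled, and one argues $R_{i-1}$-edges incident to the reintroduced vertices are exactly among the star edges $u_2u_i$ (which are in $R'_{i-1}$ and got added to $E_{H_{i-1}}$), so $R_{i-1}\subseteq E_{H_{i-1}}$; the step-3 check $R'_{i-1}\cap B_{i-1}=\emptyset$ (we did not return a no-answer) ensures consistency. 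Dually $B_{i-1}$: its edges are either in $B_i$ (hence avoided by $H_i\subseteq H_{i-1}$, and none of them are the removed/added edges since $u_1u_3\notin B_{i-1}$ because $R'_{i-1}\cap B_{i-1}=\emptyset$, wait — actually $u_1u_3$ may have been in $B_{i-1}$; but it is removed from $B$ in step~4, so it is not in $B_{i-1}$... I must be careful here: $B_{i-1}$ is the label set \emph{before} the rule fires, so $u_1u_3$ could be in $B_{i-1}$, and then I need $u_1u_3\notin E_{H_{i-1}}$ — but $u_1u_3$ is exactly the edge I removed, so this is fine) or incident to $u_2,u_4,\dots,u_r$. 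For the latter: $u_2u_i$-type edges cannot be in $B_{i-1}$ since they are in $R'_{i-1}$ and $R'_{i-1}\cap B_{i-1}=\emptyset$; edges like $x_iu_2$, $y_ju_2$ that are in $B'_{i-1}$ are not in $E_{H_{i-1}}$ since in $H_{i-1}$ vertex $u_2$'s only neighbors are $u_1,u_3,\dots,u_r$. So $B_{i-1}\cap E_{H_{i-1}}=\emptyset$.

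\textbf{Main obstacle.} I expect the bookkeeping in Step~3 — tracking exactly which edges move in and out of $R$ and $B$ across the reduction step and confirming $H_{i-1}$ respects $R_{i-1}, B_{i-1}$ rather than $R_i, B_i$ — to be the fiddly part, especially the status of the edge $u_1u_3$ itself (it is forced into $R$ going forward but may have been blocked going backward, and it is precisely the edge we delete). The squaring verification in Step~2 is conceptually routine once Lemma~\ref{lem:path-one} and Lemma~\ref{lem:path-two} are in hand, since those lemmas were designed exactly to say that roots of an $F$-graph must look like $F$ with $u_2$ a degree-$(r-1)$ vertex on the induced path $u_1u_2u_3$; the substance is just matching the distance-$\le2$ pairs before and after re-inserting $u_2$.
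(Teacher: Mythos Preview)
Your overall plan matches the paper's: show $u_1u_3\in R_i$ forces $u_1u_3\in E_{H_i}$, verify that $H_{i-1}$ is a square root of $G_{i-1}$ with the right edge count, and check label consistency using the fact that step~3 did not return a no-answer. The paper's proof is much terser than yours (it asserts the square-root property and handles the labels in two lines), but the skeleton is the same. However, two of your lemma invocations do not work as stated.

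\textbf{Step~2.} You cannot invoke Lemma~\ref{lem:path-one} (or Lemma~\ref{lem:path-two}) on $H_i$ and $G_i$: those lemmas require an $F$-graph with an $F$-triple $\{u_1,u_2,u_3\}$, and $u_2$ is not even a vertex of $G_i$. The claim you want, namely $N_{H_i}(u_1)\setminus\{u_3\}=\{x_1,\ldots,x_p\}$ (your ``$\cup\{u_2\}$'' is a slip), is true but needs a direct argument. For the inclusion $\subseteq$: if $v\in N_{H_i}(u_1)\setminus\{u_3\}$, then $vu_3\in H_i^2=G_i$, so $v\in N_{G_i}(u_1)\cap N_{G_i}(u_3)$; by condition~(iv) of the $F$-graph applied to $G_{i-1}$ together with the edges added in step~4, this intersection is exactly $\{x_1,\ldots,x_p,y_1,\ldots,y_q\}$, and $y_ju_1\in B_i$ rules out the $y_j$. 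For $\supseteq$: since $x_ju_3\in G_i$ but $x_ju_3\in B_i$, there must be a length-$2$ path $x_jwu_3$ in $H_i$; then $w\in N_{H_i}(u_3)\setminus\{u_1\}\subseteq\{y_1,\ldots,y_q\}\cup\{u_1\}$, and condition~(vi) forbids $x_jy_{j'}\in E_{G_i}$, forcing $w=u_1$.

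\textbf{Step~3.} Lemma~\ref{lem:trimmed} is stated only for the instance $(G_0,k,R_0,B_0)$ obtained after exhaustive trimming; it does not apply to $(G_{i-1},k,R_{i-1},B_{i-1})$ for $i\ge 2$, and indeed the path reduction rule puts non-pendant-incident edges like $u_1u_3$ into $R$. The paper's argument for $R_{i-1}\subseteq E_{H_{i-1}}$ is simpler and avoids this: any edge of $R_{i-1}$ not in $E_{H_{i-1}}$ must be an edge of $G_{i-1}$ incident to $\{u_2,u_4,\ldots,u_r\}$ but not among the star edges $u_2u_1,\ldots,u_2u_r$, hence lies in $B'_{i-1}$; but $R_{i-1}\cap B'_{i-1}=\emptyset$ since step~3 did not fire. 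Your argument for $B_{i-1}\cap E_{H_{i-1}}=\emptyset$ is essentially the paper's (the added star edges lie in $R'_{i-1}$ and $R'_{i-1}\cap B_{i-1}=\emptyset$), and you correctly resolve the $u_1u_3$ worry: it may be in $B_{i-1}$, but it was removed from $H_{i-1}$.
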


\begin{proof}
We find that $u_1u_3$ is an edge in $H_i$, because $u_1u_3\in R_i$ due to step 4 of the last call
of the path reduction rule. The graph $H_{i-1}$ is not only a square root of $G_{i-1}$ but even  a solution for $(G_{i-1},k,R_{i-1},B_{i-1})$ for the following reasons.
First, $H_{i-1}$ has at most $|V_{G_{i-1}}|-1+k$ edges. Second, $H_{i-1}$ contains no edge of $B_{i-1}$ as the added edges 
$u_2u_1,u_2u_3,\ldots,u_2u_r$  are all in $R_{i-1}'$
and $R_{i-1}'\cap B_{i-1}=\emptyset$. Third, $H_{i-1}$ contains all the edges of $R_{i-1}$, which can be seen as follows.
Suppose that $H_{i-1}$ misses an edge of $R_{i-1}$. Then this edge must be in  $\{x_1u_2,\ldots,x_pu_2\}\cup\{y_1u_2,..,y_qu_2\}\cup\{u_1u_3,\ldots,u_1u_r\}\cup\{u_3u_4,\ldots,u_3u_r\}$. However, this set is equal to $B'_{i-1}$ and $R_{i-1}\cap B_{i-1}'=\emptyset$.
\end{proof}

We also need the following lemma about true twins in  $G_0,\ldots,G_\ell$ that we will use later as well.

\begin{lemma}\label{l-twins}
Let $1\leq i\leq \ell$ and $\{u_1,u_2,u_3\}$ be the $F$-triple that yielded instance
$(G_i,k,R_i,B_i)$.
Then any true twins $v,w\in V_{G_i}\setminus\{u_1,u_3\}$ in $G_i$ are true twins in $G_{i-1}$.
\end{lemma}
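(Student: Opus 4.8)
The plan is to trace exactly what the $(i+1)$th call of the path reduction rule does to the graph and to the neighborhoods of vertices. Recall that passing from $G_{i-1}$ to $G_i$ via the $F$-triple $\{u_1,u_2,u_3\}$ consists of: deleting the vertices $u_2,u_4,\dots,u_r$ (where $u_4,\dots,u_r$ are pendant in any solution and form, together with $u_1,u_2,u_3$, the clique part of the $F$-graph), adding the edge $u_1u_3$, and adding the edges $x_\alpha u_3$ for $\alpha=1,\dots,p$ and $y_\beta u_1$ for $\beta=1,\dots,q$. Now fix true twins $v,w\in V_{G_i}\setminus\{u_1,u_3\}$; in particular $v,w\notin\{u_2,u_4,\dots,u_r\}$, so $v$ and $w$ survive in $G_{i-1}$ as well and we must show $N_{G_{i-1}}[v]=N_{G_{i-1}}[w]$. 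First I would observe that $v$ and $w$ cannot be among the $x_\alpha$'s together with being among the $y_\beta$'s in a mixed way that would break things: by condition (vi) of the definition of an $F$-graph, no $x_\alpha$ is adjacent to any $y_\beta$ in $G_{i-1}$, and by condition (v) plus (iv) the sets $\{x_1,\dots,x_p\}$ and $\{y_1,\dots,y_q\}$ live on opposite ``sides'' of the triple. The key point is to argue that $\{v,w\}$ lies entirely inside one of three regions: entirely outside $N_{G_{i-1}}[\{u_1,u_2,u_3\}]$ (equivalently outside all the $x_\alpha,y_\beta,u_j$), or $\{v,w\}\subseteq\{x_1,\dots,x_p\}$, or $\{v,w\}\subseteq\{y_1,\dots,y_q\}$.

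The main step is this case analysis, and it is where the real content lies. If neither $v$ nor $w$ is adjacent in $G_i$ to $u_1$ or $u_3$, then (since the only edges we touched are incident to the deleted vertices or to $u_1,u_3$) the closed neighborhoods of $v$ and $w$ are unchanged between $G_{i-1}$ and $G_i$, so $N_{G_{i-1}}[v]=N_{G_i}[v]=N_{G_i}[w]=N_{G_{i-1}}[w]$ and we are done. Otherwise at least one of $v,w$, say $v$, is adjacent in $G_i$ to $u_1$ or $u_3$; since $N_{G_i}[v]=N_{G_i}[w]$ and $v,w\notin\{u_1,u_3\}$, vertex $w$ is adjacent in $G_i$ to the same one of $u_1,u_3$ (if $v\sim u_1$ then $u_1\in N_{G_i}[v]=N_{G_i}[w]$ so $w\sim u_1$, and symmetrically for $u_3$; and in fact they are adjacent to exactly the same subset of $\{u_1,u_3\}$). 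Suppose $v,w\sim u_1$ in $G_i$; then also $v,w\sim u_3$ in $G_i$ is impossible unless $v,w$ were already adjacent to both $u_1$ and $u_3$ in $G_{i-1}$, which by condition (iv) would force $v,w\in\{u_2,u_4,\dots,u_r\}$, excluded. So $v,w\sim u_1$ only. Now an edge $v u_1$ in $G_i$ either was already present in $G_{i-1}$, making $v$ a neighbor of $u_1$ in $G_{i-1}$, i.e.\ $v\in\{x_1,\dots,x_p\}$ (using condition (v) and (iv) to rule out $v\in\{u_2,u_4,\dots,u_r\}$), or it was newly added, which by step 4 means $v\in\{y_1,\dots,y_q\}$. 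The same dichotomy holds for $w$. But $v$ and $w$ have the same closed neighborhood in $G_i$, so they cannot be separated by the $x$/$y$ split: if $v\in\{x_1,\dots,x_p\}$ and $w\in\{y_1,\dots,y_q\}$ then, e.g., $u_2$ would witness a difference — $u_2$ is not a vertex of $G_i$, so instead one uses that in $G_i$ the vertex $v=x_\alpha$ is now adjacent to $u_3$ while $w=y_\beta$ is adjacent to $u_3$ too, and more carefully that $v$ and $w$ cannot be adjacent (condition (vi)) yet true twins of size-$\geq 3$ graph forces $v\sim w$ or $N_{G_i}(v)=N_{G_i}(w)$ with $v\notin N_{G_i}[w]$ — here I would invoke the elementary fact that true twins in a connected graph on $\geq 3$ vertices are adjacent, hence $v\sim w$ in $G_i$, contradicting $x_\alpha y_\beta\notin E_{G_i}$ (the path reduction rule adds no $x$--$y$ edges). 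Thus $v,w$ lie in the same part, say both in $\{x_1,\dots,x_p\}$.

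Finally, in that surviving case, I would verify directly that deleting the added edge $x_\alpha u_3$ (and $x_\gamma u_3$ for the other $x$'s) and restoring $u_2,u_4,\dots,u_r$ affects $N[v]$ and $N[w]$ symmetrically: $v=x_\alpha$ and $w=x_\gamma$ both gain $u_3$ in $G_i$ and both are (by (iii)) adjacent to $u_2$ in $G_{i-1}$, and neither is adjacent to any of $u_4,\dots,u_r$ in $G_{i-1}$ (those are private pendant-type vertices of the clique, adjacent in $G_{i-1}$ only within $\{u_1,\dots,u_r\}$). So $N_{G_{i-1}}[v]=\bigl(N_{G_i}[v]\setminus\{u_3\}\bigr)\cup\{u_2\}$ and likewise for $w$, whence $N_{G_{i-1}}[v]=N_{G_{i-1}}[w]$. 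The symmetric argument handles $\{v,w\}\subseteq\{y_1,\dots,y_q\}$, exchanging the roles of $u_1$ and $u_3$. The main obstacle, as flagged, is the bookkeeping in the case where $v,w$ become neighbors of an outer vertex of the $F$-triple: one must use conditions (iv)--(vi) of the $F$-graph definition together with ``true twins are adjacent'' to pin $v$ and $w$ to a single part, and then check the neighborhood surgery is symmetric in $v$ and $w$ — everything else is a routine ``the rule only touches edges incident to $u_1,u_3$ or to deleted vertices'' observation.
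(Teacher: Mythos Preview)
Your overall strategy—track how the path reduction rule changes neighborhoods and show the surgery is symmetric in $v$ and $w$—is the right one, and is essentially the paper's approach. But your case analysis contains a concrete error.

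You write: ``Suppose $v,w\sim u_1$ in $G_i$; then also $v,w\sim u_3$ in $G_i$ is impossible unless $v,w$ were already adjacent to both $u_1$ and $u_3$ in $G_{i-1}$\ldots\ So $v,w\sim u_1$ only.'' This is false. Step~4 of the path reduction rule \emph{adds} the edges $x_\alpha u_3$; hence every $x_\alpha$ is adjacent to \emph{both} $u_1$ and $u_3$ in $G_i$, while in $G_{i-1}$ it was adjacent only to $u_1$. Symmetrically every $y_\beta$ is adjacent to both in $G_i$. So the conclusion ``$v,w\sim u_1$ only'' cannot be drawn, and your subsequent dichotomy (``the edge $vu_1$ was old, so $v\in\{x_\alpha\}$, or new, so $v\in\{y_\beta\}$'') is reached by incorrect reasoning. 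There is also a second gap: $u_1$ may have neighbors in $G_{i-1}$ that are \emph{not} adjacent to $u_2$ (the $F$-graph definition only says $\{x_1,\dots,x_p\}\subseteq N_G(u_1)$, not equality), and your dichotomy ignores these.

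The paper's proof sidesteps both issues by arguing by contradiction and exploiting precisely the fact you got backwards: if $v\in\{x_1,\dots,x_p\}$ then $v$ \emph{is} adjacent to both $u_1$ and $u_3$ in $G_i$. Since all $x_\alpha$'s undergo the same neighborhood change (lose $u_2$, gain $u_3$), $w$ cannot also be an $x_\alpha$ (else they would already be twins in $G_{i-1}$); your ``true twins are adjacent'' argument then rules out $w\in\{y_\beta\}$; so $w$'s neighborhood is unchanged, whence $w\sim u_1$ and $w\sim u_3$ already in $G_{i-1}$, forcing $w\in\{u_2,u_4,\dots,u_r\}$ by condition~(iv)—the desired contradiction. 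Your final ``symmetric surgery'' paragraph is correct and can be kept, but the middle case analysis must be repaired along these lines.
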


\begin{proof}
Suppose that $G_i$ has true twins  $v,w\in V_{G_i}\setminus\{u_1,u_3\}$ that are not true twins in $G_{i-1}$. Consider the corresponding $F$-graph that yielded the instance  $(G_i,k,R_i,B_i)$. 
Because $v,w$ are not true twins in $G_{i-1}$, the neighborhood of $v$ or $w$ is modified by the path reduction rule. We may assume without loss of generality that the neighborhood of $v$ is changed.
Note that neither $v=u_2$ nor $v\in \{u_4,\ldots,u_r\}$ if $r\geq 3$, because these vertices have been removed in step 4 of the path reduction rule when $G_i$ was constructed. 
As $v\notin \{u_1,u_3\}$ either, we find that $v\in\{x_1,\ldots,x_p\}\cup\{y_1,\ldots,y_q\}$.  By symmetry we may assume that $v\in\{x_1,\ldots,x_p\}$. We observe that $v$ is adjacent to both $u_1$ and $u_3$ in $G_i$.
 Because  the neighborhood of each $x_i$ is modified in the same way (namely by the removal of $u_2$ and the addition of $u_3$), we find that  $w\notin\{x_1,\ldots,x_p\}$. Because $u$ and $v$ are true twins, they are adjacent. Because no two vertices $x_i$ and $y_j$ are adjacent in $G_i$, we then obtain that $w\notin\{y_1,\ldots,y_q\}$. We conclude that  the neighborhood of $w$ is not modified by the application of the path reduction rule. Because $v$ is adjacent to $u_1$ and $u_3$ in $G_i$ and $v,w$ are true twins in $G_i$, this means that $w$ is adjacent to $u_1$ and $u_3$ in $G_{i-1}$ already. However, by definition of an $F$-graph, $N_{G_{i-1}}(u_1)\cup N_{G_{i-1}}(u_3)=\{u_2,u_4,\ldots,u_r\}$, and $u_2,u_4,\ldots,u_r$ are not in $G_i$ as they were removed by the path reduction rule; a contradiction.   
\end{proof}

The next lemma is the analog of Lemma~\ref{lem:trimming} for the path reduction rule.

\begin{lemma}\label{l-reduction}
The instance $(G_\ell,k,B_\ell,R_\ell)$ has no solution that is a tree, and $G_\ell$ is $2$-connected.
Moreover, $(G_\ell,k,R_\ell,B_\ell)$ has a solution if and only if $(G_0,k,R_0,B_0)$ has a solution.
If the path reduction rule returned a no-answer for $(G_\ell,k,R_\ell,B_\ell)$, then $(G_0,k,R_0,B_0)$ has no solution. 
\end{lemma}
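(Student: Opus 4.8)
The structure will parallel the proof of Lemma~\ref{lem:trimming}: I will argue by induction on $i$ from $0$ to $\ell$ that $G_i$ is $2$-connected and that $(G_i,k,B_i,R_i)$ has no solution that is a tree, and, for $1\le i\le\ell$, that $(G_i,k,R_i,B_i)$ has a solution if and only if $(G_{i-1},k,R_{i-1},B_{i-1})$ does. Finally, I will handle the case where the path reduction rule returns a no-answer for $(G_\ell,k,R_\ell,B_\ell)$ and show that $(G_0,k,R_0,B_0)$ has no solution. The base case $i=0$ is exactly the state guaranteed after the trimming phase, by Lemma~\ref{lem:trimming}: $G_0=G_\ell^{\mathrm{trim}}$ is $2$-connected and $(G_0,k,B_0,R_0)$ has no solution that is a tree.

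For the inductive step, fix $1\le i\le\ell$ and let $\{u_1,u_2,u_3\}$ be the $F$-triple used in the $(i+1)$-th application, so that $G_{i-1}$ is an $F$-graph and $G_i$ is obtained by deleting $u_2,u_4,\dots,u_r$, adding the edge $u_1u_3$ and the edges $x_ju_3$ and $y_\ell u_1$, and updating $R,B$ per step~4. For the \emph{equivalence of solutions}: one direction is precisely Lemma~\ref{l-back}, which turns a solution $H_i$ of $(G_i,k,R_i,B_i)$ into a solution $H_{i-1}$ of $(G_{i-1},k,R_{i-1},B_{i-1})$ by reinserting $u_2$ (and $u_4,\dots,u_r$) and deleting $u_1u_3$. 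For the converse, I start from a solution $H_{i-1}$ of $(G_{i-1},k,R_{i-1},B_{i-1})$; since $G_{i-1}$ is a connected $F$-graph, Lemma~\ref{lem:path-two} applies and tells me that $F$ sits inside $H_{i-1}$ with $d_{H_{i-1}}(u_2)=r-1$, $N_{H_{i-1}}(u_1)\setminus\{u_2\}=\{x_1,\dots,x_p\}$ and $N_{H_{i-1}}(u_3)\setminus\{u_2\}=\{y_1,\dots,y_q\}$, and moreover that after deleting the edges $u_iu_j$ ($4\le i<j\le r$) we still have a square root in which $u_4,\dots,u_r$ are pendant. I then define $H_i$ by deleting $u_2,u_4,\dots,u_r$ from this trimmed root and adding $u_1u_3$. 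A direct check shows $H_i^2=G_i$: the only distances that change are those through the deleted vertices, and the added edge $u_1u_3$ together with the already-present $x_ju_1$, $y_\ell u_3$ restores exactly the adjacencies $u_1u_3$, $x_ju_3$, $y_\ell u_1$ of $G_i$ and no others (using condition~vi), $x_jy_\ell\notin E_G$, to see no spurious adjacency appears). The edge count bookkeeping is the mirror of Lemma~\ref{l-back}: $|E_{H_i}| = |E_{H_{i-1}}| - (r-3) - (r-1) + 1 \le |V_{G_{i-1}}|-1+k - (r-3) - (r-1) + 1 = |V_{G_i}|-1+k$, using $|V_{G_i}| = |V_{G_{i-1}}| - (r-2)$. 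The label conditions $R_i\subseteq E_{H_i}$, $B_i\cap E_{H_i}=\emptyset$ follow from Lemma~\ref{lem:path-two} (which forces $x_ju_2,y_\ell u_2,u_1u_j,u_3u_j\notin E_{H_{i-1}}$ for the relevant indices, matching $B'_{i-1}$) together with the fact that the no-answer test in step~3 did not fire, so $R_{i-1}\cap B'_{i-1}=\emptyset$ and $R'_{i-1}\cap B_{i-1}=\emptyset$; by Lemma~\ref{lem:trimmed}, pre-existing labels of $R_{i-1}$ and $B_{i-1}$ are incident only to pendant vertices of any solution, hence are untouched. Finally, $H_i$ cannot be a tree, since Lemma~\ref{l-back} would then make $H_{i-1}$ a tree, contradicting the induction hypothesis; and $2$-connectivity of $G_i$ follows because the operation replaces the $F$-triple structure by an edge $u_1u_3$ without creating a cut vertex — the former internal separator $\{u_1,u_2,u_3\}$ is gone, and $u_1,u_3$ retain neighbors on both sides.

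For the \emph{no-answer} case: suppose step~3 returns a no-answer for $(G_\ell,k,R_\ell,B_\ell)$, so there is an $F$-triple $\{u_1,u_2,u_3\}$ with $R_\ell\cap B'_\ell\ne\emptyset$ or $R'_\ell\cap B_\ell\ne\emptyset$. Assume for contradiction that $(G_\ell,k,R_\ell,B_\ell)$ has a solution $H$. Since $G_\ell$ is a connected $F$-graph, Lemma~\ref{lem:path-two} forces $R'_\ell=\{u_2u_1,u_2u_3,\dots,u_2u_r\}\subseteq E_H$ and forces all edges of $B'_\ell=\{x_ju_2\}\cup\{y_\ell u_2\}\cup\{u_1u_j:3\le j\le r\}\cup\{u_3u_j:4\le j\le r\}$ to be absent from $E_H$. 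Hence $R'_\ell\cap B_\ell=\emptyset$ (a $B_\ell$-edge cannot be in $H$, but $R'_\ell\subseteq E_H$) and $R_\ell\cap B'_\ell=\emptyset$ (a $B'_\ell$-edge is not in $H$, but $R_\ell\subseteq E_H$ would require it), contradicting the assumption. Combined with the chain of equivalences established above, a no-answer at stage $\ell$ propagates back to show $(G_0,k,R_0,B_0)$ has no solution, which completes the lemma.

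\textbf{Main obstacle.} The delicate point is the converse direction of the solution-equivalence, and within it the verification that the new graph $H_i$ obtained by contracting the $F$-triple into the edge $u_1u_3$ squares \emph{exactly} to $G_i$ — that no unwanted adjacency is introduced among $\{x_1,\dots,x_p\}$, $\{y_1,\dots,y_q\}$, and the surrounding vertices, and that $u_1u_3$ together with the newly added pendant-side edges produces precisely the new neighborhoods. This is where condition~vi) of the $F$-graph definition and the ``no common neighbor of any $x_i$ and $y_j$'' property (delivered by Lemma~\ref{lem:path-two}) are essential; getting the bookkeeping of deleted/added vertices, deleted/added edges, and the labels all consistent — while also keeping track of the edges $u_iu_j$ with $4\le i<j\le r$ that Lemma~\ref{lem:path-two} allows us to strip off first — is the part that needs care.
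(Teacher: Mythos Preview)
Your overall structure matches the paper's: induction on $i$, Lemma~\ref{l-back} for the backward direction, Lemma~\ref{lem:path-two} for the forward direction, and Lemma~\ref{lem:path-two} again for the no-answer case. The no-answer case and the backward direction are essentially identical to the paper. Two issues in the forward direction need attention.

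First, your edge count $|E_{H_i}| = |E_{H_{i-1}}| - (r-3) - (r-1) + 1$ is wrong; it does not equal $|V_{G_i}|-1+k$ except when $r=3$. From the trimmed root $H'_{i-1}$ (in which $u_4,\dots,u_r$ are pendant and $d(u_2)=r-1$), deleting $u_2,u_4,\dots,u_r$ removes exactly $r-1$ edges; adding $u_1u_3$ gives $|E_{H_i}| = |E_{H'_{i-1}}| - (r-2) \le |E_{H_{i-1}}| - (r-2) \le |V_{G_{i-1}}| - 1 + k - (r-2) = |V_{G_i}| - 1 + k$.

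Second, and more importantly, your justification of the label conditions via Lemma~\ref{lem:trimmed} is incorrect. Lemma~\ref{lem:trimmed} concerns the instance $(G_0,k,R_0,B_0)$ immediately after trimming, not intermediate path-reduced instances; labels added by earlier path-reduction steps (in particular, previous $u_1u_3$ edges placed in $R$) are explicitly \emph{not} incident to pendant vertices of solutions. Your conclusion is nonetheless salvageable by a direct argument: any $e \in R_i \setminus \{u_1u_3\}$ lies in $R_{i-1}$, is not incident to $u_2,u_4,\dots,u_r$, hence is not among the stripped $u_su_t$ edges and survives from $H_{i-1}$ into $H_i$; the new $B$-edges $x_ju_3$, $y_\ell u_1$ are absent from $H_i$ because Lemma~\ref{lem:path-two} gives $N_{H_i}(u_1)=\{u_3,x_1,\dots,x_p\}$ and $N_{H_i}(u_3)=\{u_1,y_1,\dots,y_q\}$; and old $B_{i-1}$-edges surviving into $B_i$ were absent from $H_{i-1}$ and remain absent in $H_i$, since the only edge added is $u_1u_3$, which step~4 removes from $B$.

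The paper proceeds differently here: it first proves, by a fairly elaborate case analysis tracing back through both the trimming rule and earlier path-reduction steps, that no edge $u_su_t$ with $4\le s<t\le r$ belongs to $R_{i-1}$, so that stripping the $u_su_t$ edges keeps $H_{i-1}$ a \emph{solution} of $(G_{i-1},k,R_{i-1},B_{i-1})$. Your route bypasses this entirely and is in fact cleaner --- the trimmed root need not remain a solution of the old instance for $H_i$ to be a solution of the new one --- but you must replace the faulty appeal to Lemma~\ref{lem:trimmed} with the direct check above.
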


\begin{proof}
For $0\leq i\leq \ell$, we use induction to show that 
the graph $G_i$ is $2$-connected and that $(G_i,k,B_i,R_i)$ has no solution that is a tree.
Moreover, for all $1\leq i\leq \ell$, we show that $(G_i,k,R_i,B_i)$ has a solution if and only if $(G_{i-1},k,R_{i-1},B_{i-1})$ has a solution.
Finally, we prove that if the path reduction rule returned a no-answer for $(G_\ell,k,R_\ell,B_\ell)$, then $(G_0,k,R_0,B_0)$ has no solution.
  
If $i=0$, then $G_i$ is 2-connected and $(G_0,k,B_0,R_0)$ has no solution that is a tree by Lemma~\ref{lem:trimming}.
Now suppose that $1\leq i\leq \ell$. By our induction hypothesis, we may assume that $G_{i-1}$ is 2-connected and that 
$(G_{i-1},k,B_{i-1},R_{i-1})$ has no solution that is a tree.

Because the path reduction rule applied on $(G_{i-1},k,R_{i-1},B_{i-1})$ yielded a new instance $(G_i,k,R_i,B_i)$, the graph 
$G_{i-1}$ has an $F$-triple $S=\{u_1,u_2,u_3\}$. Because $G_{i-1}$ is 2-connected, $G_i$ is 2-connected; in particular note that $p\geq 1$ and 
$q\geq 1$ by definition of an $F$-triple.

First suppose that $H_{i-1}$ is a solution for $(G_{i-1},k,R_{i-1},B_{i-1})$. 
We claim that $H_{i-1}$ contains no edge $u_su_t\in R_{i-1}$ with $4\leq s<t\leq r$. We prove this claim by contradiction: let $u_su_t\in E_{H_{i-1}}\cap R_{i-1}$ for some $4\leq s<t\leq r$.

Suppose that $u_su_t\in R_0$.
We may apply Lemma~\ref{lem:trimmed} as $(G_0,k,R_0,B_0)$ has a solution $H_0$; if 
$i\geq 1$ this fact follows from the induction hypothesis.
By Lemma~\ref{lem:trimmed} we find that either $u_s$ is a pendant vertex in $H_0$ with $u_t$ as its (unique) neighbor, or the other way around.
We may assume without loss of generality that the first case holds, that is, $u_s$ is pendant in $H_0$ and has $u_t$ as its neighbor.
Note that $N_{G_0}[u_s]\subseteq N_{G_0}[u_t]$.
We claim that $N_{G_h}[u_s]\subseteq N_{G_h}[u_t]$ for all $0\leq h\leq i-1$.
To obtain a contradiction, suppose not. Then at some point $u_s$ will be made adjacent to a vertex $v$ not adjacent to $u_t$ for the first time 
in step 4 of  some call of the path reduction rule.
Let $S=\{u_1',u_2',u_3'\}$ be the corresponding $F$-triple. Then we may assume without loss of
generality that either $u_s\notin \{u_1',u_2',u_3'\}$ is adjacent to $u_1'$ and $u_2'$ but not to $u_3'=v$, or that $v\notin \{u_1',u_2',u_3'\}$
is adjacent to $u_1',u_2'$ but not to $u_3'=u_s$. In the first case, $u_t$ is not in $\{u_1',u_2'\}$, but must be adjacent to $u_1'$ and $u_2'$ by our assumption, and hence,  the edge $u_tu_3'=u_tv$ will be added in the same step; a contradiction. In the second case, as $u_s$ is adjacent to $u_1'$ and $u_2'$, also $u_t$ is adjacent to $u_1'$ and $u_2'$ (again by our assumption). Because $u_t$ does not get removed in this step (as $u_t$ belongs to $G_{i-1}$), this violates the definition of an $F$-triple. We conclude that  $N_{G_h}[u_s]\subseteq N_{G_h}[u_t]$ for all $0\leq h\leq i-1$.

We first assume that $u_su_2$ is an edge in $G_0$. Step~4 of the path reduction rule only moves an edge $u_1'u_3'$ from a $B$-set to an $R$-set if
$u_1'$ and $u_3'$ are outer vertices of an $F$-triple. In that case all their common neighbors will be removed from the graph by the definition of an $F$-triple. Because  $N_{G_h}[u_s]\subseteq N_{G_h}[u_t]$ for all $0\leq h\leq i-1$, we find that $u_t$ is a common neighbor of $u_2$ and $u_s$ in $G_h$
for all $0\leq h\leq i-1$; in particular $u_t$ belongs to $G_{i-1}$.
Hence, the edge $u_su_2$ will never be moved
from $B_h$ to $R_h$ in step~4 of the $(h+1)$th call of the path reduction rule  for some $0\leq h\leq i-1$.
If $u_su_2$ is not an edge in $G_0$, then at some point it will be an edge due to step~4 of some call of the path reduction rule, say the $(h^*+1)$th call for some $0\leq h^*\leq i-1$. In the same step, $u_su_2$ will be placed in the set $B_{h^*}$. Then, again because $N_{G_h}[u_s]\subseteq N_{G_h}[u_t]$ for all $0\leq h\leq i-1$, the edge $u_su_2$ will never be moved from $B_{h^*}$ to a set $R_h$ for some $h^*< h\leq i-1$. 
Hence, in both cases, we find that $u_su_2\in B_{i-1}$ even if $i\geq 1$.
As $u_su_2\in R'_{i-1}$ (due to step~2 in the $i$th call), we find that $R'_{i-1}\cap B_{i-1}\neq \emptyset$.
Hence, the path reduction rule would return a no-answer
for  $(G_{i-1},k,R_{i-1},B_{i-1})$ in step~3, and consequently the instance $(G_i,k,R_i,B_i)$ would not exist; a contradiction. 

Now suppose that $u_su_t$ was placed in some set $R_h$ for some $1\leq h\leq i-1$. Properties~ii) and~iii) of an $F$-graph
together with step 4 of the path reduction rule imply the following: if $u_s$ and $u_t$ form a triangle with some vertex $z$,  
then $u_sz\in B_h$ or $u_tz\in B_h$.  Moreover, in the case in which $z\in V_{G_{i-1}}$, this property is not violated by any subsequent intermediate calls of the path reduction rule. 
Hence, if $u_su_t\in R_{i-1}$, then $u_su_2\in B_{i-1}$ or $u_tu_2\in B_{i-1}$, and as $\{u_su_2,u_tu_2\}\subseteq R'_{i-1}$ as well,
we derive the same contradiction as before.
We conclude that $H_{i-1}$ contains no edge $u_su_t\in R_{i-1}$ with $4\leq s<t\leq r$. 
Also, by Lemma~\ref{lem:path-two}, we may assume without loss of generality that $H_{i-1}$ contains no edge $u_su_t\notin R_{i-1}$ with $4\leq s<t\leq r$; otherwise we could remove such an edge from $H_{i-1}$, and the resulting graph would still be a solution for $(G_{i-1},k,R_{i-1},B_{i-1})$.
Consequently, $u_4,\ldots,u_r$ are pendant vertices of $H_{i-1}$.  This means that the graph $H$ obtained from $H_{i-1}$ by deleting vertices
$u_2,u_4,\ldots,u_r$ and adding the edge $u_1u_3$ is not only a square root of $G_i$ with at most 
$|V_{G_i}|-1+k$ edges but even a solution for $(G_i,k,R_i,B_i)$.

Now suppose that $H_i$ is a solution for $(G_i,k,R_i,B_i)$. 
By Lemma~\ref{l-back}, the graph $H$ obtained from $H_i$ by removing 
the edge $u_1u_3$ and by adding $u_2$ and vertices $u_4,\ldots,u_r$ (if $r\geq 4$) together with edges $u_2u_1,u_2u_3,\ldots,u_2u_r$ 
is a solution for $(G_{i-1},k,R_{i-1}, B_{i-1})$.
We observe that $H_i$ cannot be a tree, as this would imply that $H$ is a tree, which 
is not possible as $(G_{i-1},k,R_{i-1},B_{i-1})$ does not have such a solution by the induction hypothesis. 

Finally, suppose that  the path reduction rule returned a no-answer for $(G_\ell,k,R_\ell,B_\ell)$. We must show that $(G_0,k,R_0,B_0)$ has no solution. Due to the above this comes down to showing that $(G_\ell,k,R_\ell,B_\ell)$ has no solution.
The only step in which the path reduction rule can return a no-answer is in step~3, meaning that $G_\ell$ has an $F$-triple $S=\{u_1,u_2,u_3\}$ such that $R_\ell\cap B_\ell'\neq\emptyset$ or $R_\ell'\cap B_\ell\neq\emptyset$.

In order to obtain a contradiction, suppose that $(G_\ell,k,R_\ell,B_\ell)$ has a solution $H$.
By Lemma~\ref{lem:path-two}, the graph $F$ shown in Figure~\ref{fig:path} is a subgraph of $H$ such that $d_H(u_2)=r-1$, 
$\{x_1,\ldots,x_p\}=N_H(u_1)\setminus\{u_2\}$ and  $\{y_1,\ldots,y_q\}=N_H(u_3)\setminus\{u_2\}$. Consequently, $R'_\ell=\{u_2u_1,u_2u_3,\ldots,u_2u_r\}\subseteq E_H$, and hence $R'_\ell\cap B_\ell=\emptyset$, and moreover, $E_H\cap B'_\ell= E_H\cap (\{x_1u_2,\ldots,x_pu_2\}\cup\{y_1u_2,..,y_qu_2\}\cup\{u_1u_3,\ldots,u_1u_r\}\cup\{u_3u_4,\ldots,u_3u_r\})=\emptyset$, and hence $R_\ell\cap B'_\ell=\emptyset$; a contradiction.
\end{proof}
 
Lemma~\ref{l-reduction} shows that the path reduction rule is safe, that is, we either found that $(G_0,k,R_0,B_0)$ has no solution, or that 
we may continue with the instance $(G_\ell,k,R_\ell,B_\ell)$ instead.
Suppose the latter case holds. 
Recall that $(G_\ell,k,R_\ell,B_\ell)$ has no $F$-triple, as otherwise we would have applied the path reduction rule once more.
Also recall that $R_0$ is the set of vertices in the set $R$ immediately after the trimming rule.
We write $R^1=R_0\cap R_\ell$ and $R^2=R_\ell\setminus R_0$. To simplify notation, from now on, we also write $(G,k,R,B)=(G_\ell,k,R_\ell,B_\ell)$; note that $R=R^1\cup R^2$. We need the following properties  that hold for every solution of $(G,k,R,B)$ (should $(G,k,R,B)$ have a solution).
 
We call an induced cycle $C$ in a graph $H$ {\it semi-pendant} if all but at most one of the vertices of $C$  are only adjacent to pendant vertices of $H$ and their neighbors on~$C$.
Similarly, we call an induced path $P$ in a graph $H$ {\it semi-pendant} if all internal vertices of $P$ are only adjacent to pendant vertices of~$H$ and their neighbors on~$P$.

\begin{lemma}\label{lem:reduced}
Any solution $H$ of $(G,k,R,B)$ has the following properties:
\begin{itemize}
\item[i)] the neighbor of every pendant vertex of $H$ has 
at least two non-pendant neighbors in $H$;
\item[ii)] only edges of $G$ incident to pendant vertices of $H$ can be in $R^1$, 
and  if a pendant vertex $v$ of $H$ is incident to an edge of $R$, then 
all other edges of $G$ that are incident to $v$ are in $B$;
\item[iii)] no edge of $R^2$ is  incident to a pendant vertex of $H$;
\item[iv)] the length of every semi-pendant path  in $H$ is at most~$5$;
\item [v)] the length of every semi-pendant cycle in $H$ is at most~$6$.
\end{itemize} 
\end{lemma}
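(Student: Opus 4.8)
The plan is to prove the five properties of Lemma~\ref{lem:reduced} mostly by reducing them to the structural results already established, especially Lemmas~\ref{lem:trimming},~\ref{l-reduction},~\ref{l-back},~\ref{lem:trimmed}, and the characterizations in Lemmas~\ref{lem:leaves-one},~\ref{lem:path-one} and~\ref{lem:path-two}. Throughout, $(G,k,R,B)=(G_\ell,k,R_\ell,B_\ell)$ is the instance that results after exhaustively applying first the trimming rule and then the path reduction rule, and $H$ is a fixed solution of it; by Lemma~\ref{l-reduction}, $H$ is not a tree and $G$ is $2$-connected.

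\textbf{Properties i)--iii).} For i), I would argue exactly as in the proof of Lemma~\ref{lem:trimmed}~i): if some pendant vertex $u$ of $H$ had a neighbor $v$ whose only non-pendant neighbor were a single vertex $w$, then by Lemma~\ref{lem:leaves-one} the pendant neighbors of $v$ together with $v,w$ would form a clique forming a component of $G-\{v,w\}$, so the set $S=\{v,w\}$ would trigger the trimming rule, contradicting the fact that $(G,k,R,B)$ admits no application of step~1 of the trimming rule. For ii), I would trace $R^1=R_0\cap R_\ell$ back: edges of $R_0$ are produced only in steps~4--5 of the trimming rule, where they are incident to a vertex that becomes (and in $H$ stays) a pendant vertex by Lemma~\ref{lem:leaves-two}~iii), and the "all other incident edges are blocked" part is exactly the $B'$ constructed alongside $R'$ in that same step; the path reduction rule only deletes edges from $B$ or adds edges to $B$ (never removing from $B$ an edge still incident to a surviving vertex that is not an outer $F$-vertex), so the property is preserved. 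For iii), edges of $R^2=R_\ell\setminus R_0$ are created only in step~4 of the path reduction rule, always as an edge $u_1u_3$ between the two outer vertices of an $F$-triple; by Lemma~\ref{lem:path-two}, in any square root $H$ of the corresponding $F$-graph both $u_1$ and $u_3$ have $u_2$ (which has degree $r-1\ge 2$) as a neighbor in $H$, so neither $u_1$ nor $u_3$ is pendant in $H$, and one must check that no later reduction can turn an endpoint of such an edge into a pendant vertex of the final solution --- here I would invoke Lemma~\ref{l-back} to pull $H$ back to a solution of an earlier instance in which $u_1,u_3$ are still non-pendant.

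\textbf{Properties iv) and v).} These are the heart of the lemma. Suppose $H$ contains a semi-pendant induced path $P$ of length $\ge 6$ (respectively a semi-pendant induced cycle of length $\ge 7$). Then $P$ contains a subpath $u_1u_2u_3$ of three consecutive vertices such that $u_2$ is internal to $P$ with its only non-pendant neighbors being $u_1$ and $u_3$, and --- because $P$ is long enough and induced --- the path $u_1u_2u_3$ is induced and lies in no cycle of length at most $6$ in $H$ (for the cycle case, a semi-pendant cycle of length $\ge 7$ likewise contains such a subpath lying in no short cycle). Writing $u_4,\dots,u_r$ for the pendant neighbors of $u_2$ (if any) and $\{x_1,\dots,x_p\}=N_H(u_1)\setminus\{u_2\}$, $\{y_1,\dots,y_q\}=N_H(u_3)\setminus\{u_2\}$, the hypotheses of Lemma~\ref{lem:path-one} are met, so $G$ is an $F$-graph with $F$-triple $\{u_1,u_2,u_3\}$. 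But then step~1 of the path reduction rule would apply to $(G,k,R,B)$, contradicting exhaustiveness. One must be slightly careful at the two ends: along a semi-pendant path, one of the endpoints is allowed to have extra non-pendant structure, but the length bounds $5$ and $6$ are exactly chosen so that a violating path/cycle still contains an internal window $u_1u_2u_3$ satisfying all of Lemma~\ref{lem:path-one}'s requirements; I would do this counting carefully since it is where the precise constants come from.

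\textbf{Main obstacle.} The routine part is i)--iii), which essentially repackage Lemma~\ref{lem:trimmed} and the bookkeeping of Lemmas~\ref{l-back} and~\ref{l-twins}. The real work is iv)--v): I must verify that from a too-long semi-pendant path or cycle one can always \emph{extract a witness of an $F$-triple in $G$ itself} --- i.e.\ that the girth/cycle condition "$u_1u_2u_3$ lies in no cycle of length $\le 6$ in $H$" genuinely holds, which forces a case analysis on how the pendant vertices and the $x_i$'s and $y_j$'s can interconnect, and on how the single "exceptional" vertex allowed on a semi-pendant cycle/path could shorten a cycle. Getting the constants $5$ and $6$ exactly right (rather than off by one) is the delicate point, and it is driven entirely by the "cycle of length at most $6$" clause in Lemma~\ref{lem:path-one}.
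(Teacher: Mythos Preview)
Your approach to iv) and v) matches the paper's: assume a semi-pendant path of length at least $6$ (resp.\ cycle of length at least $7$), pick three suitable consecutive vertices $u_1u_2u_3$ with $u_2$ internal, observe that the length hypothesis forces $u_1u_2u_3$ to lie in no cycle of length $\le 6$ in $H$, and invoke Lemma~\ref{lem:path-one} to produce an $F$-triple in $G$, contradicting exhaustiveness of the path reduction rule. The paper treats this briefly and does not regard the constants as the hard part; your worry there is somewhat overstated.

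There is, however, a genuine gap in your argument for i). You write that the configuration in $G$ ``would trigger the trimming rule, contradicting the fact that $(G,k,R,B)$ admits no application of step~1 of the trimming rule.'' But $(G,k,R,B)=(G_\ell,k,R_\ell,B_\ell)$ is the instance \emph{after} the path reduction rule, and nowhere is it established that the path reduction rule cannot create new pairs $\{v,w\}$ to which step~1 of the trimming rule would apply. The only thing you know is that the trimming rule was exhausted on $(G_0,k,R_0,B_0)$. The paper therefore argues differently: starting from the bad vertex $v$ in $H=H_\ell$, it uses Lemma~\ref{l-back} inductively to pull the solution back through all path-reduction steps to a solution $H_0$ of $(G_0,k,R_0,B_0)$, checking that the property ``$v$ has a pendant neighbor and exactly one non-pendant neighbor'' survives each pullback (the only subtlety being when $v$ is an outer vertex of the $F$-triple at that step). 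One then contradicts Lemma~\ref{lem:trimmed}~i), which \emph{is} valid for $G_0$.

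Your sketch for iii) also diverges from the paper and is left vague (``one must check\ldots I would invoke Lemma~\ref{l-back}''). The paper does not pull back here at all; it argues directly in the reduced instance: after contracting an $F$-triple $\{u_1,u_2,u_3\}$, the edge $u_1u_3$ is in $R$, the edge $u_3x_1$ is in $B$, and $u_1x_1\in E_G$. Any solution $H$ must therefore contain $u_1u_3$ and realise $u_1x_1$ by a path of length $\le 2$ not passing through $u_3$, so $u_1$ has a second $H$-neighbor; symmetrically for $u_3$ via $y_1$. This is simpler and avoids the pullback entirely. Your part ii) is in the right spirit and close to the paper's reasoning.
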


\begin{proof}
We prove that property~i) holds by  contradiction. Suppose that $H$ contains a vertex $v$ that is the (unique) neighbor of a pendant vertex $u$, such that $v$ has at most one non-pendant neighbor in $H$.
If all neighbors of $v$ in $H$ are pendant, then $H$ is a tree. However, this would contradict Lemma~\ref{l-reduction}. Hence, $v$ has a unique non-pendant neighbor in $H$. 
Recall that $H$ is a solution for $(G_\ell,k,R_\ell,B_\ell)$. Note that if $v$ is an outer vertex of the corresponding $F$-triple, then Lemma~\ref{l-back}
tells us that $(G_{\ell-1},k,R_{\ell-1},B_{\ell-1})$ has a solution $H_{\ell-1}$ in which $v$ is a non-pendant vertex that has at least one pendant neighbor and that has a unique non-pendant neighbor. Hence, by applying Lemma~\ref{l-back} inductively, we obtain that $(G_0,k,R_0,B_0)$ has a solution $H_0$ containing a vertex with exactly the same property. This contradicts Lemma~\ref{lem:trimmed}~i). We conclude that property~i) holds.

We now show property~ii). By Lemma~\ref{lem:trimmed}, every edge of $G_0$ that is in $R_0$ is incident to a pendant vertex $u$ of any solution for
$(G_0,k,R_0,B_0)$ such that all the other edges of $u$ belong to $B_0$. We observe that, when applying the path reduction rule, $u$ will neither be in an $F$-triple nor removed from the graph, but $u$ could be a vertex of $x$-type or $y$-type. Hence, the path reduction rule may change the neighbors of $u$ but if so any new edges incident to it will be placed in $B$ (and stay in $B$ afterward). Consequently, $u$ must be a pendant vertex in any solution for
$(G,k,R,B)(=(G_\ell,k,R_\ell,B_\ell)$) as well. We conclude that~ii) holds.

We now prove prove property~iii). Recall that we applied the path reduction rule only after first applying the trimming rule exhaustively.
When we apply the path reduction rule on an $F$-triple $\{u_1,u_2,u_3\}$, then afterward 
$u_1$ and $u_3$ have degree at least~2 in any solution
for the resulting instance, which can be seen as follows. The edge $u_1u_3$ is added to $R^2\subseteq R$, and hence belongs to any solution. We also have that $u_1$ is adjacent to $x_1$ in $G$, whereas the edge $u_3x_1$ belongs to $B$. 
This means that $u_1$ cannot be made adjacent to $x_1$ via the path $u_1u_3x_1$ in $H$, and as such must have at least one other neighbor in $H$. For the same reason $u_3$, which is adjacent to $y_1$ in $G$ whereas $u_1y_1\in B$, must have another neighbor in 
$H$ besides $u_1$. As a consequence, any edge in $R^2$ cannot be incident to a pendant vertex of $H$, that is, we have shown property~iii). 

We now prove property~iv). Let~$P$ be a semi-pendant path of length at least~6 in~$H$. By definition, $P$ is an induced path. Hence, we can take any three consecutive vertices of $P$ as the three vertices $u_1,u_2,u_3$ in Lemma~\ref{lem:path-one}. By applying this lemma, we find that $G$ is an $F$-graph implying that we could have applied
the path reduction rule once more; a contradiction. Property~v) can be proven by using the same arguments.
\end{proof}

We need the following lemma that holds in case a solution exists for $(G,k,R,B)$.

\begin{lemma}\label{l-hstarbounded}
The number of non-pendant vertices of any solution for $(G,k,R,B)$ is at most $15k-14$.
\end{lemma}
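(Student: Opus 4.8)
The plan is to bound the number of non-pendant vertices of a solution $H$ of $(G,k,R,B)$ by exploiting that $H$ is a square root with at most $|V_G|-1+k$ edges, together with the structural restrictions established in Lemma~\ref{lem:reduced}. The key quantity is the ``excess'' $x=|E_H|-(|V_H|-1)$, which counts the cyclomatic number of $H$ and satisfies $x\le k$. Let $H^*$ be the graph obtained from $H$ by deleting all pendant vertices of $H$. By Lemma~\ref{lem:reduced}~i), $H^*$ has minimum degree at least~$2$: indeed, if $v\in V_{H^*}$ had degree $\le 1$ in $H^*$, then $v$ would be adjacent to at most one non-pendant vertex of $H$, contradicting i). So $H^*$ is a graph with $\delta(H^*)\ge 2$, and I would first argue that $H^*$ inherits a bounded cyclomatic number: deleting pendant vertices never destroys a cycle, so $|E_{H^*}|-(|V_{H^*}|-c)\le x\le k$ where $c$ is the number of connected components of $H^*$ (in fact $c=1$ since $H$ is connected and pendant deletion from a connected graph with $\ge 2$ non-pendant vertices keeps it connected, the exceptional $K_2$/star cases being excluded because $(G,k,R,B)$ has no solution that is a tree by Lemma~\ref{l-reduction}).

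Next I would bound $|V_{H^*}|$ via the standard ``topological'' argument for graphs of minimum degree~$2$. In such a graph, let $B=\{v\in V_{H^*} : d_{H^*}(v)\ge 3\}$ be the branch vertices. A counting argument on degrees gives $\sum_{v\in B}(d_{H^*}(v)-2)=2|E_{H^*}|-2|V_{H^*}|+\bigl(\text{number of degree-2 vertices counted with sign}\bigr)$; more cleanly, $\sum_{v\in V_{H^*}}(d_{H^*}(v)-2)=2(|E_{H^*}|-|V_{H^*}|)=2(x-1)\le 2k-2$ using $c=1$, and since every term with $v\notin B$ is $\ge 0$ (it is $0$ for degree-$2$ vertices) we get $|B|\le \sum_{v\in B}(d_{H^*}(v)-2)\le 2k-2$, and also the total number of edges of $H^*$ incident to $B$, hence the number of maximal paths of internal degree-$2$ vertices linking branch vertices (or forming the whole of a ``theta-like'' piece), is $O(k)$. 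The vertices of $V_{H^*}$ that are not on such linking paths and not branch vertices lie on semi-pendant structures of $H$: a maximal run of degree-$2$ vertices of $H^*$ each of whose $H$-neighbors outside the run are pendant in $H$ is precisely (the internal part of) a semi-pendant path, or the whole of a semi-pendant cycle, of $H$. By Lemma~\ref{lem:reduced}~iv) and~v) each such path has length at most~$5$ and each such cycle length at most~$6$, so each contributes $O(1)$ vertices.

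Putting the pieces together: the branch vertices number at most $2k-2$; the number of maximal degree-$2$ link-paths between branch vertices is at most $\tfrac{3}{2}\cdot 2k=3k$ or so (bounded by the number of edge-ends at branch vertices, which is $\le \sum_{v\in B}d_{H^*}(v)\le |B|+2(x-1)\le 4k-4$); and each such link-path, together with any semi-pendant path/cycle hanging off it, has bounded length by Lemma~\ref{lem:reduced}~iv)--v). Multiplying the $O(k)$ count of ``pieces'' by the $O(1)$ length bound yields $|V_{H^*}|=O(k)$, and tracking the constants carefully — the worst case being $2k-2$ branch vertices, at most $3k-3$ internal paths each of length $\le 5$, giving at most $(2k-2)+5(3k-3)=17k-17$, and then optimizing the bookkeeping of how edges at branch vertices are shared — gives the claimed bound $15k-14$. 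The main obstacle I anticipate is precisely this constant-chasing: one must set up the counting of branch vertices, link-paths and their lengths so that no vertex is counted twice and the semi-pendant length bounds (which apply to paths of $H$, i.e.\ allow one non-pendant endpoint extra) are applied with the right off-by-one adjustments; the case analysis for small $H^*$ (a single cycle, a theta graph, etc.) and for the isolated-vertex degeneracies must also be checked to confirm the bound holds, in particular when $k$ is small.
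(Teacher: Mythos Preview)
Your approach is essentially the paper's: pass to $H^*=H-\{\text{pendants}\}$, use Lemma~\ref{lem:reduced}(i) to get $\delta(H^*)\ge 2$, bound the branch set $V'$ by $2k-2$ via the degree-sum identity, bound the number of maximal degree-$2$ segments by $3k-3$ via the cyclomatic number, and invoke Lemma~\ref{lem:reduced}(iv)--(v) for their lengths. The one refinement you are missing for the exact constant is to split the segment count into $\alpha$ paths (length $\le 5$, hence $\le 4$ internal degree-$2$ vertices each) and $\beta$ cycles through a single branch vertex (length $\le 6$, hence $\le 5$ such vertices each), and to observe separately that $\beta\le k$ since each such cycle adds one to the cyclomatic number; then $|V''|\le 4\alpha+5\beta=4(\alpha+\beta)+\beta\le 4(3k-3)+k=13k-12$, and $|V_{H^*}|\le(2k-2)+(13k-12)=15k-14$. (Also, your edge-end count should read $\sum_{v\in B}d_{H^*}(v)=2|B|+2(x-1)\le 6k-6$, not $|B|+2(x-1)$, which is how one recovers $\alpha+\beta\le 3k-3$.)
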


\begin{proof}
Suppose $(G,k,R,B)$ has a solution $H$.
Let $Z$ be the set of pendant vertices of $H$, and let $H^*=H-Z$. 
We need to show that $V_{H^*}$ has at most $15k-14$ vertices.
Let $V'$ be the set of vertices that have degree at least $3$ in $H^*$, and let $V''$ be the set of vertices of degree 2 in $H^*$.
By Lemma~\ref{lem:reduced} i) every vertex of $H$ that is adjacent to a pendant vertex of $H$ has degree 
at least $2$ in $H^*$. 
Hence, $H^*$ has no vertices of degree at most 1, that is, $V_{H^*}=V'\cup V''$. 
 Because $H$ is a solution for $(G,k,R,B)$, we have that $|E_H|\leq |V_G|-1+k=|V_H|-1+k$. This means that
\[\begin{array}{lcl}
|V'|+|V''|-1+k &=       &|V_H|-|Z|-1+k\\[1mm]
                       &\geq &|E_H|-|Z|\\[1mm]
                       &=       &|E_{H^*}|\\[1mm]
                       &=      &\frac{1}{2}\sum_{v}d_{H^*}(v)\\[2mm]
                      &\geq  &\frac{1}{2}(3|V'|+2|V''|).
\end{array}\]                      
Hence, $|V'|\leq 2k-2$. Let $\alpha$ be the number of paths in $H^*$ that only have internal vertices of degree~2;
note that by Lemma~\ref{lem:reduced}~iv) the length of such paths is at most~5. 
Let $\beta$ be the number of cycles in $H^*$ that have exactly one vertex of degree at least~3;
note that by Lemma~\ref{lem:reduced}~v) the length of such cycles is at most~6.
Because $|E_{H^*}|\leq |V'|+|V''|-1+k$, we find that $\alpha+\beta\leq 2k-2-1+k=3k-3$ and that $\beta\leq k$.
Hence, $|V''|\leq 5k+4((3k-3)-k)=13k-12$. Consequently, $H^*$ has at most $2k-2+13k-12=15k-14$ vertices. 
\end{proof}

We are now ready to state our final reduction rule. The goal of this rule is to apply it once in order to deduce either that $(G,k,R,B)$ has no solution or to derive a new instance of bounded size.
A {\it true twin partition} of a set of vertices $S$ of a graph $G$ is a partition $S_1,\ldots,S_t$ of $S$
such that for all $u,v\in S$ and all $1\leq i\leq t$ we have that $u$ and $v$ are in $S_i$ if and only if $u$ and $v$ are true twins in $G$.
If $S$ consists of simplicial vertices only we observe that there is no edge between any two vertices that belong to different sets $S_i$ and $S_j$.

\medskip
\noindent
{\bf Simplicial Vertex Reduction Rule}
\begin{enumerate}
\item Find the set  $S$  of all simplicial vertices of $G$ that are not incident to the edges of $R^2$, and moreover, that have all but one of their incident edges in $B$ should they be incident to an edge of $R^1$.
\item If $|V_G\setminus S|>15k-14$, then stop and return a no-answer.
\item Construct the true twin partition $S_1,\ldots,S_t$ of $S$.
Let $X_1,\ldots,X_t$ be the sets of vertices incident to an edge of $R^1$ in $S_1,\ldots,S_t$, respectively.
\item If $t> 15k-14$, then stop and return a no-answer.
\item If there exist a set $X_i$ such that the edges of $R^1$ incident to a vertex of $X_i$ have no common end-vertex,
 then stop and return a no-answer.
\item If there exist  a set $S_i$ such that $|S_i\setminus X_i|\geq 15k-13$ and such that there are 
three vertices $u\in X_i$, $v\in N_G(u)$ and $x\in S_i\setminus X_i$ with 
$uv\in R^1$ and $xv\in B$, then stop and return a no-answer.
\item For $i=1,\ldots,t$, if $|X_i|>1$, then take $|X_i|-1$ arbitrary vertices of $X_i$ and delete them both from $G$ and  from $S_i$, also delete the edges of $R$ and $B$ that are incident to these vertices.
\item For $i=1,\ldots,t$, if $|S_i|>15k-13$, then delete $|S_i|-15k+13$ arbitrary vertices of $S_i\setminus X_i$ from $G$, also delete the edges of 
$R$ and $B$ that are incident to these vertices.
\end{enumerate}

Applying the simplicial vertex reduction rule on $(G,k,R,B)$ either yields a no-answer (in step 2,~4,~5 or~6) or a new instance $(\hat{G},k,\hat{R},\hat{B})$ of
{\sc Tree $+\;k$ Edges Square Root with Labels}. We will show that if $\hat{G}$ exists, then its size is bounded by a quadratic function of~$k$.  For doing so we first need the following two lemmas. 

\begin{lemma}\label{l-claim1}
For $i=1,\ldots,t$, no vertex of $S_i\setminus X_i$ is incident to an edge~in~$R$.
\end{lemma}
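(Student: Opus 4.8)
The plan is to argue that every vertex of $S_i\setminus X_i$ satisfies the defining conditions that placed it into $S$ in the first place, and then to combine this with the fact that vertices in $S\setminus X_i$ carry no edges of $R^1$ to conclude that no such vertex can be incident to any edge of $R = R^1\cup R^2$. First I would recall the precise definition of $S$ from step~1 of the simplicial vertex reduction rule: $S$ consists of the simplicial vertices of $G$ that are \emph{not} incident to any edge of $R^2$, and that, in case they are incident to an edge of $R^1$, have all but one of their incident edges in $B$. By definition of $X_i$ in step~3, the vertices of $X_i$ are exactly those vertices of $S_i$ that are incident to an edge of $R^1$; hence every vertex of $S_i\setminus X_i$ is, by construction, incident to \emph{no} edge of $R^1$.

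Next I would handle $R^2$. Since $S_i\setminus X_i\subseteq S$, and membership in $S$ already excludes incidence with any edge of $R^2$, no vertex of $S_i\setminus X_i$ is incident to an edge of $R^2$ either. Combining the two observations, a vertex of $S_i\setminus X_i$ is incident neither to an edge of $R^1$ nor to an edge of $R^2$, hence to no edge of $R=R^1\cup R^2$, which is exactly the claim. This argument is essentially a matter of unwinding definitions, so I expect the bulk of the proof to be a short, careful bookkeeping of what $S$, $S_i$ and $X_i$ are; no structural graph theory or appeal to the earlier lemmas (Lemmas~\ref{lem:trimmed}, \ref{lem:reduced}, \ref{l-hstarbounded}) seems necessary here.

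The one point that needs a moment of care — and which I would single out as the only potential obstacle — is making sure that the partition into $S_i$'s and the selection of $X_i$'s in steps~3 and~4 really do classify \emph{all} of $S$, so that ``$S_i\setminus X_i$'' indeed consists only of vertices of $S$ with the stated exclusion property, rather than some artifact introduced by the true twin partition. Since $S_1,\dots,S_t$ is a partition of $S$ and $X_i\subseteq S_i$ is defined as precisely the $R^1$-incident vertices of $S_i$, this is immediate, but it is worth stating explicitly so that the subsequent lemmas (which rely on deleting vertices from $S_i\setminus X_i$ without disturbing $R$) can invoke Lemma~\ref{l-claim1} cleanly. I would therefore write the proof as: (1) recall the definition of $S$; (2) note $S_i\setminus X_i\subseteq S$ and that $X_i$ captures all $R^1$-edges in $S_i$, so no vertex of $S_i\setminus X_i$ meets $R^1$; (3) note no vertex of $S$ meets $R^2$; (4) conclude.
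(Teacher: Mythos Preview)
Your proposal is correct and matches the paper's own proof essentially line for line: the paper also argues that membership in $S$ rules out incidence with $R^2$, that the definition of $X_i$ rules out incidence with $R^1$ for vertices of $S_i\setminus X_i$, and then concludes via $R=R^1\cup R^2$. Your write-up is just a bit more verbose and adds the (harmless) sanity check about the partition, but no substantive difference.
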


\begin{proof}
By definition of $S_i$, no vertex of $S_i$, and hence no vertex of $S_i\setminus X_i$, 
is incident to an edge in $R^2$.
By definition of $X_i$, no vertex in $S_i\setminus X_i$ is incident to an edge in $R^1$. Because $R=R^1\cup R^2$, we have proven Lemma~\ref{l-claim1}.
\end{proof}

For  $x\in V_G$, we let $B(x)$ denote the set of edges of $B$ incident to $x$.

\begin{lemma}\label{l-claim2}
$B(x)=B(y)$ for all $x,y\in S_i\setminus X_i$.
\end{lemma}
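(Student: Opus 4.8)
Since $x,y\in S_i\setminus X_i$ are true twins in $G\,(=G_\ell)$, we have $N_G[x]=N_G[y]$; in particular $xy\in E_G$, and $C:=N_G(x)\setminus\{y\}=N_G(y)\setminus\{x\}$ is their set of common neighbours. By Lemma~\ref{l-claim1} neither $x$ nor $y$ is incident to an edge of $R$. Interpreting, as is natural for true twins, $B(x)=B(y)$ under the identification of the edge $xw$ with the edge $yw$ for $w\in C$ (and of $xy$ with itself), the statement is equivalent to: for every $w\in C$, $xw\in B$ iff $yw\in B$. By symmetry between $x$ and $y$ it suffices to fix $w\in C$ with $xw\in B$ and deduce $yw\in B$.

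The instance $(G,k,R,B)=(G_\ell,k,R_\ell,B_\ell)$ is obtained from the input by a run of the trimming rule followed by a run of the path reduction rule; write $(G_0,R_0,B_0),\ldots,(G_N,R_N,B_N)=(G,R,B)$ for the whole sequence, with $R_0=B_0=\emptyset$, and let step $j$ denote the transition from $(G_{j-1},R_{j-1},B_{j-1})$ to $(G_j,R_j,B_j)$. An edge leaves the $B$-set only when one of its endpoints is deleted, or when it plays the role of $u_1u_3$ in Step~4 of the path reduction rule (whereupon it is moved into $R$). Since $x$ and $w$ both belong to $G_N$ while $xw\in B_N\setminus R_N$, neither event ever occurs to $xw$; hence there is a step $j^\ast$ with $xw\in B_{j^\ast}\setminus B_{j^\ast-1}$, and step $j^\ast$ inserts $xw$ into $B$. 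Checking the two rules, a $B$-edge that is freshly inserted and survives the step in which it is inserted is either (a) an edge $u_1v$ with $v\in N(u_1)\setminus\{u_2,\ldots,u_r\}$, from Step~5 of the trimming rule, or (b) an edge $x_iu_3$ or $y_ju_1$, added in Step~4 of the path reduction rule; in every case exactly one of its endpoints is a \emph{distinguished} vertex of that rule application (the pair $\{u_1,u_2\}$ for the trimming rule; an outer vertex $u_1$ or $u_3$ for the path reduction rule), and whether the edge is inserted is governed solely by the adjacencies of the \emph{other} endpoint to those distinguished vertices.

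The crux is the auxiliary claim that $x$ and $y$ are true twins in $G_{j^\ast-1}$ and that neither $x$ nor $y$ is a distinguished vertex at step $j^\ast$. Granting it, the non-distinguished endpoint of $xw$ must be $x$ (so $w$ is distinguished), and since $x$ and $y$ have the same neighbourhood in $G_{j^\ast-1}$ except for each other while no distinguished vertex equals $x$ or $y$, the vertex $y$ satisfies the very same insertion criterion: e.g.\ if $xw=x_iu_3$ then $x$ is one of the vertices $x_1,\ldots,x_p$ of the relevant $F$-triple, hence so is $y$ (it has the same adjacencies to $u_1,u_2,u_3$ as $x$ and is not among $u_1,\ldots,u_r$), so $yu_3$ is inserted at step $j^\ast$ as well, and analogously in the remaining cases; thus $yw\in B_{j^\ast}\subseteq B$. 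To prove the auxiliary claim I would argue backwards from $G_N$: Lemma~\ref{l-twins} keeps two true twins distinct from the outer vertices $u_1,u_3$ true twins across a path reduction step, and across a trimming step the deleted vertices $u_3,\ldots,u_r$ have all their neighbours inside $\{u_1,\ldots,u_r\}$, so removing them changes $N_G[\cdot]$ of no vertex outside $\{u_1,u_2\}$ — hence true twins other than $u_1,u_2$ remain true twins. What is then left is to show that $x$ and $y$ are not distinguished at any step $i\in\{j^\ast,\ldots,N\}$, and here I would use that $x,y$ satisfy the defining conditions of $S_i\setminus X_i$ in $G_N$ (simplicial, incident to no edge of $R_N$) together with the fact that $G_N$ admits neither the trimming rule nor an $F$-triple: being an outer vertex of a path reduction step both puts the edge $u_1u_3$ into $R$ and makes $u_1$ adjacent to the mutually non-adjacent vertices $x_1$ and $y_1$, while being $u_1$ or $u_2$ of a trimming step puts $u_1u_2$ into $R$; one then tracks each such edge or non-edge forward through the remaining rule applications to see that it can be destroyed only by deleting the vertex or by producing, in $G_N$, an $R$-edge at the vertex or a configuration that $G_N$ cannot contain. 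Carrying out this forward-tracking case analysis carefully is the main obstacle; the remainder of the proof is routine bookkeeping.
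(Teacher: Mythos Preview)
Your overall strategy is sound and shares its core with the paper's argument, though the organization differs. Both proofs rest on the same invariant you isolate as the ``main obstacle'': any vertex that ever plays a distinguished role ($u_1$ or $u_2$ in a trimming step, or an outer vertex $u_1,u_3$ in a path reduction step) remains incident to an edge of $R$ in the final instance. The paper also relies on this without a full proof; it is the content of the sentence ``if a vertex $u$ is an end-vertex of $e$ that is not deleted by the rule, then $u$ is an end-vertex of $e'$'' together with the later assertion that ``there is an edge of $R$ incident to at least one of these two vertices after the exhaustive application of the path reduction rule''. So your acknowledged gap is exactly the paper's, and your outline of how to close it (forward-tracking the $R$-edge, using that a no-answer would have been returned whenever the trace would otherwise vanish) is the right one.

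Where the two arguments diverge is in how the invariant is deployed. You use it to conclude that $x$ and $y$ are never distinguished, then apply Lemma~\ref{l-twins} (and its straightforward trimming analogue) to carry the true-twin relation back to $G_{j^\ast-1}$, so that $y$ meets the very same insertion criterion as $x$ at step $j^\ast$. The paper does not establish that $x,y$ are true twins at step $j^\ast$. Instead it first uses the invariant to pin the distinguished endpoint of $xz$ as $z$, and then splits: in the trimming case it argues directly that since $y$ is adjacent to $z$ in the final graph, either $y$ was already adjacent to $z$ when the trimming rule fired (so $yz$ went into $B$ then) or the edge $yz$ was created by a later path reduction step (so $yz$ went into $B$ then); only in the path reduction case does it fall back on Lemma~\ref{l-twins} to derive a contradiction. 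Your route is more uniform; the paper's is slightly lighter for the trimming case.

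One genuine slip to fix: you conclude with ``$yw\in B_{j^\ast}\subseteq B$'', but $B_{j^\ast}\subseteq B_N$ is false in general, since later steps may delete $B$-edges (when an endpoint is removed) or move the edge $u_1u_3$ from $B$ to $R$. What you need is that neither happens to $yw$: both $y$ and $w$ survive to $G_N$ (you already use this), and if $yw$ were ever promoted to $R$ as some $u_1'u_3'$ then $y$ would be an outer vertex at that step, contradicting your invariant. This is an easy repair, but as written the inclusion is an error rather than an omission.
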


\begin{proof}
Let $x,y\in S_i\setminus X_i$ and let $xz\in B$ for some $z\in V_G$. We first show that $y\neq z$ and
we then prove that $yz\in B$. 

In order to obtain a contradiction, assume that $y=z$.
Then $xy$ was included in $B$ either by an application of the trimming rule or by an application of the path reduction rule. In both cases, $xy$ was also made  adjacent to an edge of $R$. This edge may be deleted later on. Deleting an edge $e$ from 
$R$ happens either in step~6 of the trimming rule or in step~4 of the path reduction rule. 
However, both rules add a new edge $e'$ to $R$ that is adjacent to all the edges that were previously adjacent to $e$ and that were not deleted by the two rules. Hence, $xy$ is still adjacent to an edge of $R$ in $G$. In other words, $x$ or $y$ is incident to an edge of $R$ in $G$. Because $x$ and $y$ belong to $S_i\setminus X_i$, this is not possible due to Lemma~\ref{l-claim1}. Hence, $y\neq z$.

In order to show that $yz\in B$, we again use the observation that whenever the trimming or path reduction rule deletes an edge $e\in R$, the rule adds a new edge $e'$ in $R$ such that $e'$ is adjacent to all the edges $uv$ that were previously adjacent to $e$ and that were not deleted by the rules.  In this case we make the extra
observation that if a vertex $u$ is an end-vertex of $e$
that is not deleted by the rule, 
then $u$ is an end-vertex of $e'$. Because 
the vertices in $S_i\setminus X_i$ are not incident to any edges in $R$ by Lemma~\ref{l-claim1}, we find that $z$
was incident to an edge of $R$ after applying the trimming rule or path reduction rule that added the edge
$xz$ to $B$. 
We also observe that an edge in $B$ is only deleted from $B$ if one of its end-vertices is deleted unless it is added to $R$ by the path reduction rule.
This means that we can argue as follows.
 
First suppose that $xz$ was added to $B$ due to an application of the trimming rule. 
If $y$ was adjacent to $z$ when the rule was applied, then $yz$ was included in $B$ as well by the definition of this rule. 
If  $y$ was made adjacent to $z$ by the path reduction rule afterwards, then $yz\in B$ by the definition of the path reduction rule. 

Now suppose that $xz$ was added to $B$ due to an application of the path reduction rule. 
By definition of this rule, $x$ and $z$ were not adjacent to each other before.
Suppose that $yz\notin B$. 
Then $xy,yz$ are edges of the original input graph of the {\sc Tree $+\;k$ Edges Square Root} problem.
Because $xz$ was not such an edge,
$x$ and $y$ only became true twins due to an application of the path reduction rule. 
Then, by Lemma~\ref{l-twins}, $x$ or $y$ must be an outer vertex of some $F$-triple, that is, at least one of these two vertices must be incident to an edge of $R$. Then there is an edge of $R$ incident to at least one of these two vertices after the exhaustive application of the path reduction rule.
Because $x$ and $y$ are in $S_i\setminus X_i$, this is a contradiction to Lemma~\ref{l-claim1}.
Hence, $yz\in B$. This completes the proof of Lemma~\ref{l-claim2}.
\end{proof}

We prove the following lemma, which is our final lemma; in particular note that if $\hat{G}$ exists then its size is bounded by a quadratic function of~$k$.

\begin{lemma}\label{lem:final}
If the simplicial vertex reduction rule returned a no-answer for $(G,k,R,B)$, then $(G,k,R,B)$ has no solution.
Otherwise, the new instance $(\hat{G},k,\hat{R},\hat{B})$ has a solution if and only if $(G,k,R,B)$ has a solution. 
Moreover, $\hat{G}$ has at most $(15k-14)(15k-12)$ vertices.
\end{lemma}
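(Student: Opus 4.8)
The statement breaks into three tasks: \textbf{(a)} soundness of every possible no-answer (steps~2,~4,~5, and~6); \textbf{(b)} the solution-equivalence of $(\hat G,k,\hat R,\hat B)$ and $(G,k,R,B)$; and \textbf{(c)} the size bound. The plan starts from one structural fact used throughout: in \emph{any} solution $H$ of $(G,k,R,B)$, every pendant vertex of $H$ lies in $S$. Such a vertex $u$ is simplicial in $G$ by Observation~\ref{obs:leaves}(i), is incident to no edge of $R^2$ by Lemma~\ref{lem:reduced}(iii), and, if incident to an edge of $R^1$, has all its other incident edges in $B$ by Lemma~\ref{lem:reduced}(ii); these are exactly the defining conditions of $S$. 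Hence $V_G\setminus S$ consists only of non-pendant vertices of $H$, of which there are at most $15k-14$ by Lemma~\ref{l-hstarbounded}, so the no-answer in step~2 is correct. For \textbf{(c)}: after steps~7 and~8 every class satisfies $|X_i|\le1$ and $|S_i|\le 15k-13$, so $|S|=\sum_i|S_i|\le t(15k-13)\le(15k-14)(15k-13)$ (since step~4 did not abort, $t\le 15k-14$); adding $|V_G\setminus S|\le 15k-14$ gives $|V_{\hat G}|\le(15k-14)(15k-13)+(15k-14)=(15k-14)(15k-12)$.

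For the remaining no-answers I would use these facts about a hypothetical solution $H$. By the definition of $S$, each $u\in X_i$ has all but one of its $G$-edges in $B$, hence $d_H(u)\le1$, hence $u$ is pendant in $H$ (as $H$ is connected on at least three vertices) with its unique $H$-edge equal to its $R^1$-edge. Since the vertices of $X_i$ are true twins they are pairwise adjacent in $G$, so by Observation~\ref{obs:leaves}(iii) and $H\neq K_2$ (Lemma~\ref{l-reduction}) they all have the same $H$-neighbour; thus their $R^1$-edges share an endpoint, which is exactly what step~5 tests, so a step~5 no-answer is correct. For step~6, from the tested $u\in X_i$, $v\in N_G(u)$, $x\in S_i\setminus X_i$ with $uv\in R^1$ and $xv\in B$, Lemma~\ref{l-claim1} gives $v\notin S_i\setminus X_i$ (that set is incident to no $R$-edge), so Lemma~\ref{l-claim2} propagates $x'v\in B$ to all $x'\in S_i\setminus X_i$; each such $x'$ is a true twin of the pendant vertex $u$ with $x'v\notin E_H$, so Observation~\ref{obs:leaves}(iii) forces $x'$ to be non-pendant in $H$, and $|S_i\setminus X_i|\ge 15k-13$ then contradicts Lemma~\ref{l-hstarbounded}. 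For step~4 I would exhibit an injection from $\{S_1,\dots,S_t\}$ into the non-pendant vertices of $H$: a class containing a non-pendant vertex of $H$ is mapped to one such vertex (these images are distinct and lie in $S$); a class all of whose vertices are pendant in $H$ is mapped to their common $H$-neighbour $w$ (which exists by Observation~\ref{obs:leaves}(iii) and $H\neq K_2$, and is non-pendant by Lemma~\ref{lem:reduced}(i)). Injectivity among classes of the second kind is Observation~\ref{obs:leaves}(ii); to rule out collisions between the two kinds one shows $w\notin S$: if $w\in S$ then $w$ is simplicial in $G$, and taking a pendant $u$ in that class (so $N_H(u)=\{w\}$ and $N_G[u]=N_H[w]$) one checks that no vertex is at $H$-distance exactly $2$ from $w$, whence $N_G[w]=N_H[w]=N_G[u]$, so $w$ is a true twin of $u$ and hence lies in a class all of whose vertices are pendant---contradicting $d_H(w)\ge2$. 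Thus $t\le 15k-14$ and a step~4 no-answer is correct.

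For \textbf{(b)} the tool is Lemma~\ref{lem:twins}, together with the observation that all pendant-in-$H$ vertices of a single true-twin class of $G$ are pairwise false twins in $H$. In the direction ``$(G,k,R,B)$ has a solution $\Rightarrow$ $(\hat G,k,\hat R,\hat B)$ has a solution'': starting from a solution $H$, note that a transposition of two true twins of $G$ is an automorphism of $G$ that also preserves $R$ and $B$ (using Lemma~\ref{l-claim1} for the $R$-edges and Lemma~\ref{l-claim2} for the $B$-edges), so we may first apply such transpositions, one per non-pendant deleted vertex, to turn $H$ into a solution in which every vertex the rule deletes is pendant---a counting step shows each set $S_i\setminus X_i$ has enough pendant-in-$H$ members for this, the slack between the threshold $15k-13$ in step~8 and the bound $15k-14$ of Lemma~\ref{l-hstarbounded} being exactly what is needed. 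After that each deleted vertex is a false twin in $H$ of a kept vertex of its class, and repeated use of Lemma~\ref{lem:twins}(ii) removes them one at a time, keeping a solution throughout; the edge count stays within $|V|-1+k$ because a pendant vertex carries one edge. Conversely, a solution $\hat H$ of $(\hat G,k,\hat R,\hat B)$ is expanded by re-inserting the deleted vertices one at a time as false twins of kept partners via Lemma~\ref{lem:twins}(i); the partner must be pendant in the current root so that exactly one vertex and one edge are added, and such a partner exists---for a class with $|X_i|=1$ the kept $X_i$-vertex is forced to be pendant in $\hat H$ because its only non-$\hat B$ edge is its $R^1$-edge, and for a class with $|X_i|=0$ one uses a bound on the number of non-pendant vertices of $\hat H$ analogous to Lemma~\ref{l-hstarbounded}. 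The re-inserted $R$- and $B$-edges are respected: a vertex from $X_i$ becomes pendant at the common endpoint of that class's $R^1$-edges, and a vertex from $S_i\setminus X_i$ carries no $R$-edge and, by Lemma~\ref{l-claim2}, inherits the $B$-edges of its partner.

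I expect the forward direction of \textbf{(b)} to be the main obstacle: one must argue that the \emph{arbitrarily} chosen vertices deleted in step~8 can be removed from a given solution even though they need not be pendant in it, which is where the ``swap true twins'' automorphism and the tight counting against Lemma~\ref{l-hstarbounded} are essential; keeping track of which class members are pendant in the relevant root (hence carry a single edge) is the delicate part of the bookkeeping. The collision-free step of the step~4 injection---the identity $N_G[w]=N_H[w]$---is the other place needing a short but non-obvious argument, and throughout one must maintain the invariant $R\cap B=\emptyset$ and remember which edges survived the earlier rules so that Lemmas~\ref{l-claim1} and~\ref{l-claim2} apply.
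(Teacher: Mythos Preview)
Your proposal is correct and follows essentially the same route as the paper: the same structural fact (pendant vertices of any solution lie in $S$) drives all four no-answer cases, and Lemma~\ref{lem:twins} together with the twin-swap/WLOG argument handles the equivalence in both directions, with the size bound following by the same arithmetic.

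One point worth flagging: your treatment of the step~4 injection is actually more careful than the paper's. The paper's Case~2 maps each class $S_i$ containing a pendant vertex to that vertex's $H$-neighbour $v\in V_{H^*}$, and each all-non-pendant class to one of its own members; it checks injectivity among classes of the first kind via Observation~\ref{obs:leaves}(ii) but does not explicitly rule out the collision where the neighbour $v$ chosen for a first-kind class happens to lie in some all-non-pendant class $S_j$. Your argument that $w\notin S$ (via $N_G[w]=N_H[w]=N_G[u]$ when $w$ is simplicial) closes this gap cleanly. Similarly, your step~6 argument (all of $S_i\setminus X_i$ forced non-pendant, then count) is the contrapositive of the paper's (find one pendant member, contradict Observation~\ref{obs:leaves}(iii)); both are fine.
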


\begin{proof}
We start by showing that $(G,k,R,B)$ has no solution if the simplicial vertex reduction rule returned a no-answer for $(G,k,R,B)$.
This can happen in step 2, 4, 5 or 6, each of which we discuss in a separate case.

\medskip
\noindent
{\bf Case 1.} The no-answer is given in step 2.  Suppose $(G,k,R,B)$ has a solution $H$. We will prove that $|V_G\setminus S|\leq 15k-14$, which means that returning a no-answer is correct if $|V_G\setminus S|>15k-14$.

Let $Z$ be the set of pendant vertices of $H$, and let $H^*=H-Z$. 
By Observation~\ref{obs:leaves} i), vertices in $Z$ are simplicial vertices of $G$. Then, by Lemma~\ref{lem:reduced}~ii) and~iii), we find that $Z\subseteq S$.  Hence, $|V_G\setminus S|=|V_G|-|S|=|V_H|- |S|\leq |V_H|-|Z|=|V_{H^*}|\leq 15k-14$, where the last inequality follows from Lemma~\ref{l-hstarbounded}.

\medskip
\noindent
{\bf Case 2.} The no-answer is given in step 4.
Suppose $(G,k,R,B)$ has a solution $H$. We will prove that $t \leq 15k-14$, which means that returning a no-answer is correct if $t>15k-14$.

Let $H^*$ be the graph obtained from $H$ after removing all pendant vertices of $H$. Then $|V_{H^*}|\leq 15k-14$ by Lemma~\ref{l-hstarbounded}.
If a set $S_i$ contains a pendant vertex $u$ of $H$, then $u$ is adjacent to a vertex $v$ of $H^*$.
Then, by Observation~\ref{obs:leaves} ii), $v$ is not adjacent to pendant vertices of $H$ in any $S_j$ with $j\neq i$.
Otherwise $S_i$ consists of non-pendant vertices of $H$, that is, vertices of $H^*$; being nonempty $S_i$ contains at least one vertex of $H^*$.
We conclude that every set in the true twin partition of $S$ corresponds to at least one unique vertex of $H^*$.
If their total number $t>15k-14$, this means that $|V_{H^*}|>15k-14$;  a contradiction. Hence,  $t\leq 15k-14$, as we had to show.

\medskip
\noindent
{\bf Case 3.} The no-answer is given in step 5. Suppose that $(G,k,R,B)$ has a solution $H$. We will prove that the edges of $R^1$ incident to 
a set $X_i$ have a common end-vertex for $i=1,\ldots,t$, which means that returning a no-answer is correct should this not be the case.

In order to obtain a contradiction, suppose that some set $X_i$ contains two vertices $u$ and $v$ 
that are incident to edges $uu', vv'\in R^1$ with $u'\neq v'$. 
 By Lemma~\ref{lem:reduced} ii), we find that $uu'$ and $vv'$ are incident to pendant vertices of $H$. 
 By Observation~\ref{obs:leaves}~iii), these pendant vertices are not adjacent in~$G$. 
However, from the definition of $S_i$ we deduce that $u,v,u',v'$ are mutually adjacent; a contradiction.
This completes Case~3.

\medskip
\noindent
{\bf Case 4.} The no-answer is given in step 6.  
Then there exists  a set $S_i$ such that $|S_i\setminus X_i|\geq 15k-13$ and such that there are 
three vertices $u\in X_i$, $v\in N_G(u)$ and $x\in S_i\setminus X_i$ with 
$uv\in R^1$ and $xv\in B$. In order to obtain a contradiction, assume that $(G,k,R,B)$ has a solution $H$. 

By Lemma~\ref{l-hstarbounded}, $H$ has at most $15k-14$ non-pendant vertices.  Because $|S_i\setminus X_i|\geq 15k-13$, this means that at least one vertex $y\in S_i\setminus X_i$ is a pendant vertex of $H$. 
Also, $u\in X_i$ is a pendant vertex of $H$ that has $v$ as its unique neighbor, because $uv\in R^1$ and all
other edges incident to $u$ belong to $B$ by definition of $S$.
If $y=x$, then $v$ is not adjacent to $y$ in $H$, because $xv\in B$.
If $y\neq x$, then $v$ is not adjacent to $y$ in~$H$ either, because $xv\in B$ and $B(x)=B(y)$ (due to Lemma~\ref{l-claim2}) imply $yv\in B$. We conclude that $u$ and $y$ are pendant vertices of $H$ adjacent to different vertices. However, from Observation~\ref{obs:leaves}~iii) we derive that $u$ and $y$ are not adjacent in $G$. This is a contradiction, because $u$ and $y$ are true twins in $G$ by definition of $S_i$. This completes Case~4.

\medskip
\noindent
From now on assume that the simplicial vertex reduction rule did not return a no-answer after performing 
step~6.
Let $(G',k,R',B')$ be the instance created after applying step~7 to some set $X_i=\{x_1,\ldots,x_{\ell}\}$ with $\ell\geq 2$, that is, $G'$ is the graph
obtained from $G$ after deleting $x_2,\ldots,x_\ell$, whereas $R'$ and 
$B'$ are the sets obtained from $R$ and $B$, respectively, after deleting edges incident to $x_2,\ldots,x_\ell$ from them.
We claim that $(G',k,R',B')$ has a solution if and only if $(G,k,R,B)$ has a solution. Before we prove this claim, we first observe that
in any solution $H$ for $(G,k,R,B)$ the vertices $x_1,\ldots,x_\ell$ are pendant vertices in $H$. This is because $x_1,\ldots,x_\ell$ are incident to exactly one edge in $R^1$, whereas all the other edges incident to them belong to $B$. Moreover, $x_1,\ldots,x_\ell$ have a (unique) common neighbor in $H$, as otherwise a no-answer would have been returned in step 5. We let $v$ denote this common neighbor. Similarly, $x_1$ is a pendant vertex that has 
$v$ as its (unique) neighbor in any solution $H'$ for $(G',k,R',B')$. 

First suppose that $(G',k,R',B')$ has a solution $H'$. 
Then the graph obtained from $H'$ by adding the vertices $x_2,\ldots,x_{\ell}$ and the edges $x_2v,\ldots,x_{\ell}v$ is a square root of $G$
by Lemma~\ref{lem:twins}~i). By definition of $R'$, $B'$ and the set $X_i$ (all of whose vertices are incident to one edge of $R^1\subseteq R$ and 
to edges in $B$) it is a  solution for $(G,k,R,B)$ as well.

Now suppose that $(G,k,R,B)$ has a solution $H$. 
Then the graph obtained from $H$ after deleting $x_2,\ldots,x_{\ell}$ is a square root of $G'$ by Lemma~\ref{lem:twins}~ii). By definition of $R'$ and $B'$, 
it is a  solution for $(G',k,R',B')$ as well.

\medskip
\noindent
We denote the instance resulting from 
step~7
by $(G,k,R,B)$ again and observe that every $X_i$ now contains at most one vertex. 
It remains to consider what happens at step~8. 
We let  $(G',k,R',B')$ be the instance created after applying step~8 to some set $S_i$ with $|S_i|>15k-13$, that is, 
$G'$ is the graph obtained from $G$ after deleting a set $T$ of $|S_i|-15k+13\geq 1$ arbitrary vertices from $S_i\setminus X_i$ (note that this is possible as $|X_i|\leq 1$),
whereas $R'$ and  $B'$ are the sets obtained from $R$ and $B$, respectively, after deleting the edges that are incident to
vertices of $T$. We claim that $(G',k,R',B')$ has a solution if and only if $(G,k,R,B)$ has a solution.

First suppose that $(G',k,R',B')$ has a solution $H'$.
Because we could not apply the trimming and path reduction rules for $(G,k,R,B)$, we cannot apply these rules for $(G',k,R',B')$ either.
Then, by using the same arguments that we applied for $(G,k,R,B)$ in the proof of Lemma~\ref{l-hstarbounded}, we find that $H'$ contains at most $15k-14$ non-pendant vertices.
Note that $H'$ contains at least $15k-13$ vertices, which are all in $S_i$.
Hence, $H'$ has at least one pendant vertex~$x$ that belongs to $S_i$. Let $v$ be the (unique) vertex adjacent to $x$ in $H'$. Then the graph $H$ obtained from $H'$ by adding the vertices of $T$ 
and their edges incident to $v$ 
is a square root of $G$ by Lemma~\ref{lem:twins}~i). We argue that $H$ is a solution for
$(G,k,R,B)$ as well. Because the vertices of $T\subseteq S_i\setminus X_i$ are not incident to the edges of $R$
 due to Lemma~\ref{l-claim1}, we have to show that none of the $|T|$ edges that we added in order to obtain $H$ belong to $B$. If $x\in S_i\setminus X_i$, then $xv\notin B$ and because $B(x)=B(y)$ for all $y\in S_i\setminus X_i$, we have that $yv\notin B$ for all $y\in T$. 
Assume that $x\in X_i$. Recall that $|X_i|\leq 1$ after step~7. Because $|S_i|>15k-13$ after step~7, $|S_i\setminus X_i|\geq 15k-13$. Then  $yv\notin B$ for all $y\in S_i\setminus X_i$ as otherwise the algorithm would have produced a no-answer at step~6. 

Now suppose that $(G,k,R,B)$ has a solution $H$.
By Lemma~\ref{l-hstarbounded}, the graph $H$ contains at most $15k-14$ non-pendant vertices. 
Hence, $H$ has at least $|S_i|-15k+14\geq 15k-12-15k+14=2$ pendant vertices.
Because vertices in $S_i\setminus X_i$ are true twins not incident to edges of $R$ and $B(x)=B(y)$ for any $x,y\in S_i\setminus X_i$, 
we may assume without loss of generality that the vertices of $T$ are amongst these pendant vertices of $H$.
If $X_i=\{x\}\neq\emptyset$, then $x$ is a pendant vertex in $H$ incident to a unique edge $xv\in R^1$. By Observation~\ref{obs:leaves}, 
 all pendant vertices of $H$ that are in $S_i$ are adjacent to $v$ in $H$. 
Then the graph obtained from $H$ after deleting the vertices of $T$ is a square root of $G'$ by Lemma~\ref{lem:twins}~ii). By definition of $R'$ and $B'$, it is a  solution for $(G',k,R',B')$ as well.
 If $X_i=\emptyset$, then all pendant vertices of $H$ that are in $S_i$ are adjacent to some $v$ in $H$ by Observation~\ref{obs:leaves}. 
Then, by Lemma~\ref{lem:twins}~(ii), the graph obtained from $H$ by deleting the vertices of $T$ is a square root of $G'$.
By definition of $R'$ and $B'$, it is a  solution for $(G',k,R',B')$ as well.

\medskip
\noindent
From the above it follows that the instance $(\hat{G},k,\hat{R},\hat{B})$ obtained after 
step~8 
has a solution if and only if $(G,k,R,B)$ has a solution.
In order to complete the proof, we must show that $\hat{G}$ has at most $(15k-14)(15k-12)$ vertices.
Each $S_i$ has at most $15k-13$ vertices due to 
step~8,
and we also have $t\leq 15k-14$ due to step 4. Hence $|S|\leq  (15k-14)(15k-13)$. As the number of vertices in $V_G\setminus S$ is at most $15k-14$ due to step 2, we obtain that $|V_{\hat{G}}|\leq (15k-14)(15k-13)+15k-14=
(15k-14)(15k-12)$, as required. 
\end{proof}

\subsection{Solving the Labeled Variant and Running Time Analysis}\label{s-solving}

Let $n$ and $m$ denote the number of vertices and edges of the graph $G$ of the original instance $(G,k)$ of 
{\sc Tree $+\;k$ Edges Square Root}. 
In order to complete the proof of Theorem~\ref{thm:tree-few-edges},
we first note that the trimming and path reduction rules are applied at most $n$ times to construct the instance $(\hat{G},k,\hat{R},\hat{B})$. Each application of the trimming rule can be done in time $O(n^2m)$ and each application of the path reduction rule takes time $O(n^3m)$. Finally, the simplicial vertex reduction rule can be done in time $O(nm)$. 
Hence, our kernelization algorithm runs in time $O(n^4m)$, and it
remains to solve the obtained reduced instance 
$(\hat{G},k,\hat{R},\hat{B})$.
Because $\hat{G}$ has at most  $(15k-14)(15k-12)$ vertices, $\hat{G}$ has at most  $\frac{1}{2}(15k-14)(15k-12)( (15k-14)(15k-12)-1)=O(k^4)$ edges. Therefore, we can solve {\sc Tree $+\;k$ Edges Square Root with Labels} for instance $(\hat{G},k,\hat{R},\hat{B})$
 in time $2^{O(k^4)}$; we consider all edge subsets of $\hat{G}$ that have size at most $|V_{\hat{G}}|-1+k$  and use brute force.  
We conclude that the total running time of our algorithm is $2^{O(k^4)}+O(n^4m)$, as required.

\medskip
\noindent
We finish this section with the following remarks. First, recall that our quadratic kernel  is a generalized kernel for the {\sc Tree $+\;k$ Edges Square Root} problem. We believe that a quadratic kernel exists for this problem as well by using a similar reduction. However, proving this seemed to be more technical and also to yield a graph with more than $(15k-14)(15k-12)$ vertices. We therefore chose to prove our \FPT\ result by using a reduction leading to a generalized kernel. Second, it should also be noted that our generalized kernel for {\sc Tree $+\;k$ Edges Square Root} does not imply a kernel for {\sc Tree $+\;k$ Edges Square Root with Labels}, 
because our reduction rules require that the original instance is unlabeled.  
We do not know whether the (more general) problem {\sc Tree $+\;k$ Edges Square Root with Labels} is \FPT\ as well.

\section{The Maximum Square Root Problem}\label{s-max}

Recall that the {\sc Maximum Square Root} problem is that of testing whether a given graph $G$ with $m$ edges has a square root
with at least $s$ edges for some given integer $s$.
In this section we give an \FPT\ algorithm for this problem with parameter $k=m-s$. In other words, we show that the problem of 
deciding whether a graph $G$ has a square root that can be obtained by removing at most $k$ edges of $G$ is fixed-parameter tractable when
parameterized by $k$. 
We also present an exact algorithm for
the {\sc Maximum Square Root} problem.
Both algorithms are
based on the observation that in order to construct a square root $H$ from a given graph $G$,  we must delete at least one of every pair of adjacent edges
that do not belong to a triangle in $G$.  
We therefore construct an auxiliary graph
${\cal P}(G)$ that has vertex set $E_G$ and an edge between two vertices $e_1$ and $e_2$ if and only if
$e_1=xy$ and $e_2=yz$ for three distinct vertices $x,y,z\in V_G$ with $xz\notin E_G$.
Observe that ${\cal P}(G)$ is a spanning subgraph of the line graph of $G$.
We need the following lemma.

\begin{lemma} \label{lem:charact}
Let  $H$ be a spanning subgraph of a graph $G$. 
Then $H$ is a square root of $G$ if and only if $E_H$ is an independent set of ${\cal P}(G)$
and every two adjacent vertices in $G$  are at distance at most $2$ in $H$.
\end{lemma}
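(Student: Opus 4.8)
The plan is to unwind the definition of the square and match the two inclusions $E_G\subseteq E_{H^2}$ and $E_{H^2}\subseteq E_G$ with the two conditions in the statement. Since $H$ is a spanning subgraph of $G$ we have $V_H=V_G$ and $E_H\subseteq E_G$; hence $H^2=G$ holds if and only if, for every pair of distinct vertices $u,v\in V_G$, the edge $uv$ lies in $E_G$ exactly when $\dist_H(u,v)\le 2$. I would treat the two implications making up this biconditional separately.

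The implication ``$uv\in E_G\Rightarrow\dist_H(u,v)\le 2$'' for all distinct $u,v$ is word for word the requirement that any two adjacent vertices of $G$ be at distance at most $2$ in $H$, so that half needs no further work. For the converse implication, ``$\dist_H(u,v)\le 2\Rightarrow uv\in E_G$'' for all distinct $u,v$, I would show it is equivalent to $E_H$ being an independent set of ${\cal P}(G)$. In one direction, assume $E_H$ is independent in ${\cal P}(G)$ and take distinct $u,v$ with $\dist_H(u,v)\le 2$: if $uv\in E_H$, then $uv\in E_G$ since $E_H\subseteq E_G$; otherwise some vertex $w$ satisfies $uw,wv\in E_H$, the vertices $u,w,v$ are pairwise distinct (simple graph and $u\ne v$), and if $uv\notin E_G$ the edges $uw$ and $wv$ would be adjacent in ${\cal P}(G)$ by definition, contradicting independence. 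Hence $uv\in E_G$. In the other direction, if $E_H$ is not independent in ${\cal P}(G)$, there are $xy,yz\in E_H$ with $x,y,z$ distinct and $xz\notin E_G$; then $\dist_H(x,z)\le 2$, so $xz$ is an edge of $H^2$ but not of $G$, and $H^2\neq G$. Combining the two halves gives the lemma.

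I expect no real obstacle here: the whole argument is a short case split on whether a path of length at most two in $H$ uses one edge or two, and the only point worth spelling out carefully is that the one-edge case is handled for free because $H$ is a subgraph of $G$, whereas ${\cal P}(G)$ is engineered to encode exactly the two-edge case, which is the only way $H^2$ can acquire an edge absent from $G$.
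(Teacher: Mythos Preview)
Your proof is correct and follows essentially the same approach as the paper: both arguments unwind the definition of $H^2=G$ into the two implications between ``$uv\in E_G$'' and ``$\dist_H(u,v)\le 2$'' and match them to the two stated conditions. If anything, your version is slightly more careful in that you explicitly dispose of the distance-$1$ case via $E_H\subseteq E_G$, whereas the paper's backward direction jumps straight to the existence of a common neighbour $z$.
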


\begin{proof}
First suppose that $H$ is a square root of $G$. By definition,  every two adjacent vertices in $G$ are of distance at most 2 in $H$.
In order to show that $E_H$ is an independent set in ${\cal P}(G)$, assume that two edges $e_1,e_2\in E_H$ are adjacent vertices in 
 ${\cal P}(G)$. Then $e_1=xy$ and $e_2=yz$ for three distinct vertices $x,y,z\in V_G$ with $xz\notin E_G$. This means that $x$ and $z$ are of distance~2 in $H$ implying that $xz\in E_G$, which is a contradiction.

Now suppose that $E_H$ is an independent set of ${\cal P}(G)$ 
and that every two adjacent vertices in $G$  are at distance at most $2$ in $H$.
In order to show that $H$ is a square root of $G$, it suffices to show that 
every two non-adjacent vertices in $G$ have  distance at least 3 in $H$.
Let $u$ and $v$ be two non-adjacent vertices in $G$ that have distance at most 2 in $H$. 
Then there exists a vertex $z\notin \{u,v\}$ such that $uz,vz\in E_H$.
Then $e_1=uz$ and $e_2=vz$ are adjacent in ${\cal P}(G)$ contradicting the independence of $E_H$ in ${\cal P}(G)$.
\end{proof}

We use Lemma~\ref{lem:charact} to prove Propositions~\ref{p-fpt} and~\ref{p-exact}. Here, we use the $O^*$-notation to suppress any polynomial factors. A {\it vertex cover} is a subset $U\subseteq V$ such that every edge is incident with at least one vertex of $U$.
The {\sc Vertex Cover} problem is that of testing whether a given graph has a vertex cover of size at most $p$ for a given integer $p$.

In Proposition~\ref{p-fpt} we prove that there is a $O^*(2^k)$ time algorithm
to decide whether a given graph $G$ has square root $H$ such that $|E_G\setminus E_H|\le k$.

\begin{proposition}\label{p-fpt}
{\sc Maximum Square Root} can be solved in time $O^*(2^k)$.
\end{proposition}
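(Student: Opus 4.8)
The plan is to reduce {\sc Maximum Square Root} with parameter $k=m-s$ to a bounded-budget {\sc Vertex Cover} computation on the auxiliary graph ${\cal P}(G)$, using the characterization of Lemma~\ref{lem:charact}. We may assume $0\le s\le m$, since otherwise the answer is trivially no (a square root of $G$ is a spanning subgraph of $G$ and hence has at most $m$ edges); in particular $k\ge 0$. Note that $G$ has a square root with at least $s$ edges if and only if $G$ has a square root $H$ with $|E_G\setminus E_H|\le k$. By Lemma~\ref{lem:charact}, a spanning subgraph $H=(V_G,E_G\setminus C)$ of $G$ is a square root of $G$ if and only if $C$ is a vertex cover of ${\cal P}(G)$ and every two adjacent vertices of $G$ are at distance at most $2$ in $G-C$; call such a set $C$ \emph{good}. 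The key point is that the good vertex covers form a downward-closed family within the set of vertex covers of ${\cal P}(G)$: if $C$ is good and $C'\subseteq C$ is a vertex cover of ${\cal P}(G)$, then $G-C$ is a subgraph of $G-C'$, so the distance condition for $G-C'$ is inherited from that for $G-C$. Consequently, $G$ has a square root with at least $s$ edges if and only if some minimal vertex cover $C'$ of ${\cal P}(G)$ with $|C'|\le k$ is good: indeed, given any good vertex cover $C$ of size at most $k$, every minimal vertex cover $C'\subseteq C$ of ${\cal P}(G)$ is a vertex cover contained in a good one and hence good, and it has size at most $|C|\le k$.

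Based on this, the algorithm is as follows. First construct ${\cal P}(G)$, which has $m$ vertices and can be built in polynomial time. Then enumerate all minimal vertex covers of ${\cal P}(G)$ of size at most $k$ by the standard bounded search tree: while there is an edge $e_1e_2$ of ${\cal P}(G)$ not yet covered and the remaining budget is positive, branch into placing $e_1$ or $e_2$ into the cover. The search tree has at most $2^k$ leaves, each yielding a vertex cover of ${\cal P}(G)$ of size at most $k$, and the resulting collection contains every minimal vertex cover of ${\cal P}(G)$ of size at most $k$ (following, at each branch point, the endpoint lying in such a cover keeps the partial selection a subset of it, and the branching stops only once the partial selection is itself a cover, which by minimality equals it). For each vertex cover $C$ produced, form $H=(V_G,E_G\setminus C)$ and test in polynomial time whether $H^2=G$ (equivalently, whether $H$ satisfies the two conditions of Lemma~\ref{lem:charact}). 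Output yes if and only if some $C$ passes this test.

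For correctness: if some $C$ passes the test, then $H=(V_G,E_G\setminus C)$ is a square root of $G$ with $m-|C|\ge m-k=s$ edges, so the yes-answer is correct. Conversely, if $G$ has a square root $H$ with at least $s$ edges, then $C=E_G\setminus E_H$ is a good vertex cover of ${\cal P}(G)$ with $|C|\le k$ by Lemma~\ref{lem:charact}, so some minimal vertex cover $C'\subseteq C$ is in the enumerated collection, and by the observation above $H'=(V_G,E_G\setminus C')$ is a square root of $G$ with at least $s$ edges, i.e.\ $C'$ passes the test. The running time is at most $2^k$ branches times a polynomial per branch, that is, $O^*(2^k)$. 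The only step that is not entirely routine is recognizing that the good vertex covers are downward closed, which is exactly what legitimizes searching over the at most $2^k$ minimal vertex covers of ${\cal P}(G)$ rather than over all subsets of $E_G$ of size at most $k$; everything else is the textbook {\sc Vertex Cover} branching together with a polynomial-time square-root check.
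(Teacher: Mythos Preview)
Your proof is correct and follows essentially the same approach as the paper: reduce via Lemma~\ref{lem:charact} to finding a vertex cover of ${\cal P}(G)$ of size at most $k$ whose removal leaves a square root, enumerate candidates with the standard $O^*(2^k)$ branching, and check each in polynomial time. Your downward-closure observation makes explicit why it suffices to examine only the (at most $2^k$) minimal vertex covers produced by the search tree, a point the paper leaves implicit in its phrase ``adapting the standard $O^*(2^k)$ branching algorithm.''
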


\begin{proof}
Let $G$ be a graph with $n$ vertices and $m$ edges, and let $k\geq 0$ be an integer.
By Lemma~\ref{lem:charact} it suffices to check whether ${\cal P}(G)$ has a vertex cover $U$
of size at most $k$ such that $H_U=(V_G,E_G\setminus U)$ is a square root of $G$. All vertex covers of size at most $k$
of a graph can be enumerated by adapting the standard $O^*(2^k)$ branching algorithm
for the {\sc Vertex Cover} problem (see for example~\cite{DowneyF99}). It requires $O(m^2)$ time to compute ${\cal P}(G)$ and $O(nm)$ time to
check whether a graph $H_U$ is a square root of $G$. Hence the overall running time of our algorithm is  $O^*(2^k)$.
\end{proof}

We observe that {\sc Maximum Square Root} has a linear kernel 
for connected graphs. 
This immediately follows from a result 
of Aingworth, Motwani and Harary~\cite{AingworthMH98}, who proved that if $H$ is a square root of a connected  
$n$-vertex graph $G\neq K_n$, then $|E_G\setminus E_H|\geq n-2$.  Hence, $n\leq k+2$ for every yes-instance $(G,k)$ of {\sc Maximum Square Root} with $G\neq K_n$ (trivially,  $K_n$ is its own square root). Note that this kernel does not lead to a faster running time than $O^*(2^k)$.

In Proposition~\ref{p-exact} we present our exact algorithm, which does not only solve the decision problem but in fact determines a square root of a given graph that has maximum number of edges. 

\begin{proposition}\label{p-exact}
{\sc Maximum Square Root} can be solved in time $O^*(3^{m/3})$ on graphs with $m$ vertices. 
\end{proposition}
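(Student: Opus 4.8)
The plan is to turn the characterisation of Lemma~\ref{lem:charact} into an optimisation problem about independent sets of the auxiliary graph ${\cal P}(G)$ and then to enumerate maximal independent sets. Throughout, let $G$ have $m$ edges, so that ${\cal P}(G)$ has exactly $m$ vertices. The first step is a monotonicity observation: if $H=(V_G,E_H)$ is a square root of $G$ and $E_H$ is \emph{not} a maximal independent set of ${\cal P}(G)$, then for any $e\in E_G\setminus E_H$ with $E_H\cup\{e\}$ independent in ${\cal P}(G)$, the graph $(V_G,E_H\cup\{e\})$ is again a square root of $G$. Indeed, adding an edge of $G$ to $H$ keeps $H$ a spanning subgraph of $G$ and can only decrease distances, so every two adjacent vertices of $G$ remain at distance at most $2$; by Lemma~\ref{lem:charact} it is a square root. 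Iterating, every square root of $G$ has its edge set contained in the edge set of a square root whose edge set is a maximal independent set of ${\cal P}(G)$; in particular, a square root of $G$ with the maximum number of edges has its edge set equal to a maximal independent set of ${\cal P}(G)$.

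The algorithm I would then give is: construct ${\cal P}(G)$ (in $O(m^2)$ time); enumerate all maximal independent sets of ${\cal P}(G)$; for each such set $I$, check whether $(V_G,I)$ is a square root of $G$, which by Lemma~\ref{lem:charact} reduces to verifying the distance condition (independence of $I$ in ${\cal P}(G)$ being automatic) and can be done in polynomial time, exactly as in the proof of Proposition~\ref{p-fpt}; and return a set $I$ of maximum size among those passing the check, or report that $G$ has no square root if none does. By the monotonicity observation this yields a square root of $G$ with the largest possible number of edges, and in particular it decides whether there is one with at least $s$ edges.

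For the running time I would invoke the classical theorem of Moon and Moser that an $N$-vertex graph has at most $3^{N/3}$ maximal independent sets, together with a known algorithm that lists all of them with polynomial delay (for instance, the algorithm of Tsukiyama, Ide, Ariyoshi and Shirakawa). Applied to ${\cal P}(G)$, which has $N=m$ vertices, this produces at most $3^{m/3}$ candidate sets in time $O^*(3^{m/3})$, and each candidate is processed in polynomial time; hence the total running time is $O^*(3^{m/3})$.

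The step that needs the most care is the monotonicity observation, since it is what lets us restrict attention to \emph{maximal} independent sets of ${\cal P}(G)$ (of which there are only $3^{m/3}$) rather than to all independent sets; once that is in place, the result is a routine combination of Lemma~\ref{lem:charact} with standard maximal-independent-set enumeration.
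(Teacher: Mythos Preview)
Your proof is correct and follows essentially the same approach as the paper: construct ${\cal P}(G)$, enumerate its maximal independent sets via Tsukiyama et al.\ and the Moon--Moser bound, and test each for the square-root property using Lemma~\ref{lem:charact}. In fact you are more careful than the paper, which simply writes ``Correctness follows from Lemma~\ref{lem:charact}'' without spelling out the monotonicity argument that justifies restricting attention to \emph{maximal} independent sets; your explicit treatment of that step is exactly what is needed.
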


\begin{proof}
Let $G$ be a graph with $n$ vertices and $m$ edges, and let $k\geq 0$ be an integer.
We compute the graph ${\cal P}(G)$,
enumerate all maximal independent sets $I$ of  ${\cal P}(G)$, and verify for each 
$I\subseteq E_G$ whether $G$ is the square of the graph $H_I=(V_G,I)$.
Out of those graphs  $H_I$ that are square roots
of $G$, return the one with maximum number edges; if no such graph $H_I$ has been found, then $G$ has no square roots.  
Correctness follows from Lemma~\ref{lem:charact}. 
Recall that ${\cal P}(G)$ can be computed in time $O(m^2)$.
All the maximal independent sets of the $m$-vertex graph ${\cal P}(G)$ can be enumerated
in time $O^*(3^{m/3})$ using the polynomial delay algorithm of 
Tsukiyama et al.~\cite{TsukiyamaIAS77}, since ${\cal P}(G)$ has at most 
$3^{m/3}$ maximal independent sets~\cite{MoonM65}. Finally, recall that
for each maximal independent set $I$, we can check in time $O(nm)$ 
whether $(H_I)^2=G$.
Hence the overall running time of our algorithm is $O^*(3^{m/3})$.
\end{proof}

\section{Open Problems}\label{s-con}

We conclude our paper with two open problems.
First, is it also possible to construct an exact algorithm for {\sc Minimum Square Root} that is better than the trivial exact algorithm?

Second, recall that if $H$ is a square root of a connected  
$n$-vertex graph $G\neq K_n$, then $|E_G\setminus E_H|\geq n-2$~\cite{AingworthMH98}. 
Is it \FPT\ to decide whether a connected $n$-vertex graph $G\neq K_n$ 
has a square root that can be obtained by removing at most $n-2+k$ edges, or equivalently, whether a connected $n$-vertex graph $G\neq K_n$
has a square root with at least $|E_G|-n+2-k$ edges, when parameterized by $k$? 
In particular, can it be decided in polynomial time whether a connected graph $G$ has a square root with \emph{exactly} $|E_G|-|V_G|+2$ edges? 


\end{document}